\declaretheorem[name=Theorem]{theorem}
\newtheorem*{sketchproof}{Sketch of the Proof.}
\newtheorem{lemma}{Lemma} 
\newtheorem{corollary}{Corollary}
\def\1{\bm{1}}
\def\va{{\bm{a}}}
\def\vc{{\bm{c}}}
\def\vg{{\bm{g}}}
\def\vh{{\bm{h}}}
\def\vr{{\bm{r}}}
\def\vu{{\bm{u}}}
\def\vv{{\bm{v}}}
\def\vw{{\bm{w}}}
\def\vx{{\bm{x}}}
\def\vy{{\bm{y}}}
\def\vz{{\bm{z}}}
\def\vzero{{\bm{0}}}
\def\vone{{\bm{1}}}
\def\vbeta{{\bm{\beta}}}
\def\vdelta{{\bm{\delta}}}
\def\vepsilon{{\bm{\epsilon}}}
\def\veta{{\bm{\eta}}}
\def\vtheta{{\bm{\theta}}}
\def\vlambda{{\bm{\lambda}}}
\def\vmu{{\bm{\mu}}}
\def\vxi{{\bm{\xi}}}
\def\mA{{\bm{A}}}
\def\mB{{\bm{B}}}
\def\mC{{\bm{C}}}
\def\mD{{\bm{D}}}
\def\mH{{\bm{H}}}
\def\mI{{\bm{I}}}
\def\mJ{{\bm{J}}}
\def\mK{{\bm{K}}}
\def\mL{{\bm{L}}}
\def\mM{{\bm{M}}}
\def\mQ{{\bm{Q}}}
\def\mR{{\bm{R}}}
\def\mS{{\bm{S}}}
\def\mV{{\bm{V}}}
\def\mOmega{{\bm{\Omega}}}
\def\mSigma{{\bm{\Sigma}}}
\def\mTheta{{\bm{\Theta}}}
\DeclareMathAlphabet{\mathsfit}{\encodingdefault}{\sfdefault}{m}{sl}
\SetMathAlphabet{\mathsfit}{bold}{\encodingdefault}{\sfdefault}{bx}{n}
\def\bE{{\mathbb{E}}}
\def\bP{{\mathbb{P}}}
\def\bR{{\mathbb{R}}}
\def\sA{{\mathcal{A}}}
\def\sB{{\mathcal{B}}}
\def\sC{{\mathcal{C}}}
\def\sD{{\mathcal{D}}}
\def\sE{{\mathcal{E}}}
\def\sF{{\mathcal{F}}}
\def\sG{{\mathcal{G}}}
\def\sJ{{\mathcal{J}}}
\def\sL{{\mathcal{L}}}
\def\sM{{\mathcal{M}}}
\def\sN{{\mathcal{N}}}
\def\sP{{\mathcal{P}}}
\def\sQ{{\mathcal{Q}}}
\def\sR{{\mathcal{R}}}
\def\sZ{{\mathcal{Z}}}
\newcommand{\E}{\mathbb{E}}
\newcommand{\R}{\mathbb{R}}
\newcommand{\KL}{D_{\mathrm{KL}}}
\newcommand{\Var}{\mathrm{Var}}
\newcommand{\Cov}{\mathrm{Cov}}
\DeclareMathOperator*{\argmax}{arg\,max}
\DeclareMathOperator*{\argmin}{arg\,min}
\newcommand{\cons}{h}               
\newcommand{\w}{w}               
\newcommand{\ndata}{n}
\newcommand{\nis}{L}
\newcommand{\dparam}{p}
\newcommand{\LP}{\mathrm{LP}}
\newcommand{\IS}{\mathrm{IS}}
\newcommand{\expit}{\operatorname{expit}}
\newcommand{\vech}{\operatorname{vech}}
\newcommand{\dtv}{d_\mathrm{TV}}
\newcommand{\EL}{\operatorname{EL}}
\newcommand{\eigen}{\operatorname{eigen}}
\newcommand{\rank}{\operatorname{rank}}
\newcommand{\tr}{\operatorname{tr}}
\newcommand{\ind}{\mathbf{1}}
\newcommand{\nsites}{D}
\newcommand{\pr}{\text{Pr}}   
\newcommand{\high}{\mathrm{high}}
\newcommand{\low}{\mathrm{low}}
\title{Expectation-propagation for Bayesian empirical likelihood inference}
\author{\name Kenyon Ng \email kenyon.ng@gmail.com \\
  \addr College of Computing and Data Science, Nanyang Technological University
  \AND
  \name Weichang Yu \email weichang.yu@unimelb.edu.au \\
  \addr School of Mathematics and Statistics, The University of Melbourne \AND
  \name Howard D. Bondell \email howard.bondell@unimelb.edu.au \\
  \addr School of Mathematics and Statistics, The University of Melbourne }
\begin{document}

\maketitle

\begin{abstract}
  Bayesian inference typically relies on specifying a parametric model that
  approximates the data-generating process. However, misspecified models can
  yield poor convergence rates and unreliable posterior calibration. Bayesian
  empirical likelihood offers a semi-parametric alternative by replacing the
  parametric likelihood with a profile empirical likelihood defined through
  moment constraints, thereby avoiding explicit distributional assumptions.
  Despite these advantages, Bayesian empirical likelihood faces substantial
  computational challenges, including posterior sampling difficulties due to the
  non-convex posterior support. This paper introduces an expectation-propagation
  approach for Bayesian empirical-likelihood posterior approximation, balancing
  computational cost and accuracy without altering the target posterior via
  adjustments such as pseudo-observations. Empirically, we show that our
  approach can achieve a superior cost–accuracy trade-off relative to existing
  methods, including Hamiltonian Monte Carlo and variational Bayes.
  Theoretically, we show that the approximation and the Bayesian
  empirical-likelihood posterior are asymptotically equivalent.
\end{abstract}

\paragraph{Keywords.} Likelihood-free inference; Model misspecification; Semiparametric regression.

\section{Introduction}
\label{sec:introduction}

Consider a set of observations $\sD_n = \{\vz_i\}_{i=1}^n$ assumed to be drawn
independently and identically distributed (i.i.d.) from a distribution
$F_{0}(\vz) = f(\vz; \vtheta_{0})$ indexed by
$\vtheta_{0} \in \mTheta \subset \R^{\dparam}$. Bayesian inference provides a
principled framework to learn about this parameter by updating our beliefs based
on the data. Given prior information encoded in a prior distribution
$p(\vtheta)$, our belief about $\vtheta$ is updated via Bayes’ rule:
$p(\vtheta \mid \sD_{n}) \propto p(\vtheta) \prod^{\ndata}_{i=1} f(\vz_{i}; \vtheta)$.
The posterior distribution $p(\vtheta \mid \sD_{n})$ is the central object of
interest in most Bayesian analyses. When the data distribution is correctly
specified, the posterior is consistent \citep{doob49application} and
asymptotically normal \citep{lecam53asymptotic,kleijn12bernsteinvonmises} under
a range of regularity conditions. However, the choice of an appropriate
distributional family is often a difficult task. On one hand, simple data
distributions impose strong assumptions which could result in invalid inference
\citep{bissiri16general}. On the other hand, elaborate distributions (e.g.,
mixture models) with flexible assumptions are computationally challenging to
fit.



There are several non‑parametric or semi‑parametric approaches to tackle model
misspecification~\citep{ferguson73bayesian,blackwell73ferguson,antoniak74mixtures,lee25conditional,grunwald12safe,bissiri16general}.
Several of these approaches require careful tuning of concentration parameters
and/or learning rate to achieve theoretical properties for uncertainty
quantification. However, the tuning procedure can be computationally costly.

Bayesian empirical likelihood \citep{lazar03bayesian} is another semi-parametric
approach to Bayesian inference. Rather than employing a parametric likelihood in
the posterior, it replaces the likelihood with a \emph{profile empirical
  likelihood} \citep{owen88empirical}, which is defined as a solution to a
constrained optimization problem subject to
$\E_{\vz \sim F_{0}}[h(\vz, \vtheta)] = \vzero$ for some constraint function
$h$. Crucially, under this framework, there is neither a parametric family of
distributions to be specified nor hyperparameters (e.g., concentration
parameters or learning rates) to be tuned. The framework of
\citet{lazar03bayesian}, however, is not the only way to incorporate empirical
likelihood into Bayesian workflows. For instance, the exponentially tilted
empirical likelihood \citep{schennach05bayesian,yiu20inference} has been adapted
to moment-misspecification settings \citep{chib18bayesian}. Penalised empirical
likelihoods have been proposed to address degeneracy when the dimension of the
constraint function $h$ is greater than the sample size~$n$
\citep{chang25bayesian}, and profile empirical likelihood has been applied
within approximate Bayesian computation \citep{mengersen13bayesian}. Other
related Bayesian moment-based inference methods have also been introduced
\citep{bornn19moment,florens21gaussian}.


Despite its appeal, Bayesian empirical likelihood is often challenging to
implement due to the profile empirical likelihood. In practice, the solution to
the underlying constrained optimisation problem must be obtained numerically and
may not even exist, resulting in a posterior with highly non-convex support.
Consequently, efficient computation of the Bayesian empirical likelihood
posterior $p_{\EL}$ remains an open problem. Hamiltonian Monte Carlo
\citep[HMC,][]{duane87hybrid} is a common choice to approximate $p_{\EL}$ when
the moment condition function $h$ is smooth, and it can generally provide
accurate approximations \citep{chaudhuri17hamiltonian}. However, HMC may be
infeasible due to computational constraints or working with non-smooth $h$. When
the quantity of interest is the posterior moments, a direct estimate can be
obtained without computing the problematic posterior \citep{vexler14posterior}.
For applications requiring fast approximation of the posterior,
\citet{yu24variational} proposed approximating~$p_{\EL}$ with the Gaussian
minimizer of the Kullback–Leibler (KL) divergence
$\argmin_{q \in \sQ} \bE_{\vtheta \sim q}[\log q(\vtheta) - \log p_{\EL}(\vtheta \mid \sD_{n})]$
within a class of Gaussian distribution $\sQ$, where $p_{\EL}$ denotes the
posterior of Bayesian empirical likelihood. Here, a mismatch between the support
of $p_{\EL}$ and $q$ results in an infinite KL divergence. To address
the mismatch in support, \citet{yu24variational} replace $p_{\EL}$ with a
posterior based on the adjusted empirical likelihood function
\citep{chen08adjusted} in the KL objective. The replacement posterior has a
support matching the Gaussian distribution, and thus the KL divergence is
finite. However, for a finite sample size $n$, the resultant approximate
posterior is sensitive to the adjustment level in the approximate empirical
likelihood, and the replacement posterior adds an extra layer of approximation.

\subsection{Contribution}
\label{sec:contribution}

We study the suitability of expectation‑propagation
\citep{opper00gaussian,minka01expectation}, an algorithm that enjoys
considerable empirical success for approximating Bayesian posteriors
\citep{hernandez-lobato16scalable,hasenclever17distributed,hall20fast}, for
approximating the Bayesian empirical likelihood posterior. More crucially, the
algorithm remains appropriate even when the supports of the approximate and the
target posteriors do not match. We propose an implementation for Bayesian
empirical likelihood and provide practical guidance on algorithmic stability. In
terms of theory, we prove that our proposed expectation-propagation posterior is
asymptotically equivalent to $p_{\EL}$. While related results appear in
\cite{dehaene18expectation}, we emphasise that their results assume a fixed
posterior support, whereas a key pathology of many Bayesian empirical likelihood
posteriors are their data-truncated posterior supports. In fact, we provide
non-restrictive sufficient conditions to establish the asymptotic equivalence
with the exact posterior. Moreover, intermediate results pertaining to the
behaviour of the Bayesian empirical likelihood posterior support are novel
additions to the empirical likelihood literature. Through an extensive set of
experiments, we show that the algorithm achieves a better cost–accuracy
trade‑off than HMC \citep{chaudhuri17hamiltonian} and is free from any
adjustment parameters that can impair approximation quality in small‑$n$
settings \citep{yu24variational}.

This paper begins by reviewing the necessary background on Bayesian empirical
likelihood and expectation-propagation in Section~\ref{sec:background}. We then
present our algorithm in the context of Bayesian empirical likelihood inference
in Section~\ref{sec:epel}, establish its asymptotic properties in
Section~\ref{sec:theory}, and demonstrate its performance on a variety of
modelling tasks in Section~\ref{sec:experiments}. We conclude in
Section~\ref{sec:discussion}.

\section{Background}
\label{sec:background}
We begin this section with a review of Bayesian empirical likelihood, with
particular emphasis on the computational challenges of implementing it. This is
followed by a review of expectation-propagation
\citep{opper00gaussian,minka01expectation}, on which our proposed method is
based.

\subsection{Bayesian empirical likelihood}
\label{sec:bayes-el}

Bayesian empirical likelihood \citep{lazar03bayesian}, building on empirical
likelihood \citep{owen88empirical}, offers a semi‑parametric alternative that
avoids restrictive distributional assumptions. Rather than specifying a full
data‑generating distribution, practitioners specify a moment constraint function
$\cons: \sZ \times \mTheta \to \R^K$ and assume that observations and
$\vtheta$ satisfy $\bE_{\vz \sim F_0}[\cons(\vz, \vtheta)] = \vzero$. The
likelihood in the posterior is replaced by the profile empirical likelihood,
$\EL(\vtheta)$, defined as the constrained maximum
\begin{equation}
  \label{eq:pel}
  \EL(\vtheta)
  = \max_{\vw} \prod_{i=1}^{\ndata} \w_{i}, \quad \text{subject to}
  \quad
  \sum_{i=1}^{\ndata} \w_{i} \cons(\vz_{i}, \vtheta) = \vzero,\quad
  \sum_{i=1}^{\ndata} \w_{i} = 1,
\end{equation}
over discrete distributions
$\vw = \{\w_{i}\}^{\ndata}_{i=1} \in [0, 1]^{\ndata}$ on the observations
$\{\vz_{i}\}^{\ndata}_{i=1}$. The primary aim of this work is to develop a
computationally efficient method to obtain the Bayesian empirical likelihood
posterior $p_{\EL}(\vtheta \mid \sD_{n}) \propto p(\vtheta) \EL(\vtheta)$, where
$\EL(\vtheta)$ depends on the observations $\sD_{n}$.

This constrained maximization problem in \eqref{eq:pel} can be solved by
introducing a Lagrange multiplier $\vlambda$ with solution
$w_{i} = \ndata^{-1} (1 + \vlambda_{\EL}^{\top} \vh_{i})^{-1}$, where
$\vlambda_{\EL}$ is the root of
\begin{equation}
  \label{eq:lambda-equation}
  \sum_{i=1}^{\ndata} \frac{\vh_{i}}{1 + \vlambda_{\EL}^{\top} \vh_{i}} = \vzero.
\end{equation}
Here, we write $\vh_{i} = \cons(\vz_{i}, \vtheta)$ for brevity. A unique solution of $\vlambda_{\EL}$ exists if and only if $\vzero$ is in the
convex hull
$\sC_{\vtheta} = \{ \xi \in \R^K : \xi = \sum^{\ndata}_{i=1} \w_{i} \vh_{i}, \sum^{\ndata}_{i=1} w_{i} = 1, w_{i} > 0 \}$
\citep{owen88empirical}. Therefore, for $\vtheta$ where $\vlambda_{\EL}$ does
not exist, it is customary to set $\EL(\vtheta) = 0$, and the resultant
posterior is
\begin{equation}
  \label{eq:posterior}
  p_{\EL}(\vtheta \mid \sD_{n}) \propto
  \begin{cases}
    p(\vtheta) \prod_{i=1}^{\ndata} \w_i(\vtheta),
      & \vtheta \in \mTheta_B, \\
    0, & \vtheta \notin \mTheta_B,
  \end{cases}
\end{equation}
where $\mTheta_{B} = \{\vtheta : \vzero \in \sC_{\vtheta}\}$. This
data-dependent posterior support is often highly non-convex, particularly for
small $\ndata$, which makes posterior computation difficult because evaluations
of $\EL(\vtheta)$ require repeated constrained optimisations and will fail
outside $\mTheta_B$. We outline our approach to address this difficulty in
Section~\ref{sec:epel}.

\subsection{Expectation-Propagation}
\label{sec:ep-review}

Expectation‑propagation \citep{opper00gaussian,minka01expectation} is an
algorithm for approximating Bayesian posteriors. It requires the target
factorizes (up to a constant) and that the approximating distribution has the
same form. Our posterior satisfies this requirement. For notational convenience, let
$w_{0}(\vtheta) = p(\vtheta)$ so that
$p_{\EL}(\vtheta \mid \sD_{n}) \propto \prod_{i=0}^{\ndata} w_{i}(\vtheta)$, and
refer to $q(\vtheta) = \prod_{i=0}^{n} q_{i}(\vtheta)$ as the approximating
distribution. In EP terminology, $q_{i}$ is a site approximation, and both
$w_{i}$ and $q_{i}$ are collectively known as \emph{sites}.

The KL divergence is unsuitable as a variational loss when the variational
support strictly contains the posterior support. A straightforward alternative
is the reverse KL divergence:
\begin{equation}
  \label{eq:reverse-kl}
  \KL(p_{\EL}(\vtheta \mid \sD_{n}) \| q(\vtheta) ) =
  \bE_{\vtheta \sim p_{\EL}(\vtheta \mid \sD_{n})}[ \log p_{\EL}(\vtheta \mid \sD_{n}) - \log q(\vtheta)],
\end{equation}
which is finite for any $q$ with larger support than $p_{\EL}$. However, it is
typically intractable to optimize directly: common variational tools such as the
log‑derivative or reparameterization tricks are not directly applicable
\citep{mohamed20monte}.

Expectation‑propagation roughly minimizes \eqref{eq:reverse-kl} by iteratively
updating each $q_{i}$ as follows:
\begin{enumerate}
  \item Compute the \emph{cavity} distribution
        $q_{-i}(\vtheta) = \prod_{j \neq i} q_{j}(\vtheta)$;
  \item Form the \emph{tilted} distribution
        $q_{\backslash i}(\vtheta) \propto q_{-i}(\vtheta) w_{i}(\vtheta)$ and
        compute the KL-projection
        $\widetilde{q}_{i} = \argmin_{q \in \sQ} \KL(q_{\backslash i}(\vtheta) \| q(\vtheta))$; \label{item:ep-2}
  \item Update the site $q_{i}(\vtheta) \propto \widetilde q_{i}(\vtheta) / q_{-i}(\vtheta)$.
\end{enumerate}
Expectation‑propagation sidesteps direct optimization of \eqref{eq:reverse-kl}.
Instead, it computes the KL-projection of a `localized' posterior, namely the
tilted distribution $q_{\backslash i}(\vtheta)$. Despite its apparent aim, the
solution of expectation-propagation generally does not coincide with the
minimizer of \eqref{eq:reverse-kl}. Rather, under certain conditions, it behaves
similarly to the Laplace approximation and we
formalise this insight in Section~\ref{sec:theory}.

While the above description of expectation-propagation does not impose
restrictions on the approximating distribution $q$, in practice this is typically
chosen to come from an exponential-family. Then, the multiplication and division
of the distribution is simply the addition and subtraction of its corresponding
natural parameters respectively, and KL-projection can be done by matching
moments of sufficient statistics between $\widetilde{q}$ and $q_{\backslash i}$.
In this work, we use a Gaussian approximating distribution $q$ parameterized by
its natural parameters $\veta$: the linear shift $\vr$ and precision $\mQ$,
which correspond to the mean $\vmu = \mQ^{-1} \vr$ and covariance
$\mSigma = \mQ^{-1}$. Then, each site approximation $q_{i}$ is also Gaussian
with natural parameters $\vr_{i}$ and $\mQ_{i}$, and the natural parameters of
the global approximation are the sums: $\vr = \sum_{i=0}^{n} \vr_{i}$ and
$\mQ = \sum_{i=0}^{n} \mQ_{i}$. The KL projection in Step~\ref{item:ep-2} can be
solved by matching the first and second moments of $\widetilde{q}$ and
$q_{\backslash i}$, and this step will be discussed in detail in
Section~\ref{sec:epel}.

\section{Expectation-Propagation for Bayesian Empirical Likelihood}
\label{sec:epel}
Our proposal to address the challenge of implementing Bayesian empirical
likelihood is to approximate $p_{\EL}$ using expectation-propagation
\citep{opper00gaussian,minka01expectation}, which we describe in detail in this
section. Throughout the rest of this paper, we refer to this algorithm as
Expectation-Propagation for Bayesian Empirical Likelihood (EPEL). The majority
of the discussion here is on the local KL‑projection (Step~\ref{item:ep-2}),
which is the key step of expectation-propagation \citep{vehtari20expectation}.
In our setting, the KL-projection step is accomplished by matching the means and
covariances of $\widetilde{q}$ and $q_{\backslash i}$, i.e., setting the linear
shift $\widetilde{\vr}_{i}$ and precision $\widetilde{\mQ}_{i}$ of
$\widetilde{q}_{i}$ to
$(\Cov_{\vtheta \sim q_{\backslash i}}[\vtheta])^{-1} \E_{\vtheta \sim q_{\backslash i}}[\vtheta]$
and $(\Cov_{\vtheta \sim q_{\backslash i}}[\vtheta])^{-1}$ respectively. As the
tilted distribution $q_{\backslash i}$ is known only up to a normalizing
constant, we propose approximating $q_{\backslash i}$ with importance sampling.

\subsection{Approximating moments of tilted distribution}
The site updates involve the moments of an intractable tilted distribution. To
circumvent the intractability, we use importance sampling:
\begin{equation}
  \label{eq:is-tilted}
  \widetilde{\vmu}_{\IS} =
  \frac{\sum^{\nis}_{l=1} \xi^{l} \vtheta^{l}}{\sum^{\nis}_{l=1} \xi^{l}},
  \quad
  \widetilde{\mSigma}_{\IS} =
  \frac{\sum^{\nis}_{l=1} \xi^{l} (\vtheta^{l} - \widetilde{\vmu}_{\IS})(\vtheta^{l} - \widetilde{\vmu}_{\IS})^{\top}}{\sum^{\nis}_{l=1} \xi^{l}},
\end{equation}
where $\vtheta^{1}, \ldots, \vtheta^{\nis}$ are drawn from a proposal
distribution $s$, and
$\xi^{l} = \frac{q_{-i}(\vtheta^{l}) \w_{i}(\vtheta^{l})}{s(\vtheta^{l})}$
are the importance weights. The natural parameters are computed with
$\widetilde{\mQ}_{\IS} = \widetilde{\mSigma}_{\IS}^{-1}$ and
$\widetilde{\vr}_{\IS} = \widetilde{\mQ}_{\IS} \widetilde{\vmu}_{\IS}$. The
precision $\widetilde{\mQ}_{\IS}$ obtained this way is generally biased, but we
find the bias has negligible effect on the EPEL convergence. For the proposal
distribution, we suggest the Laplace approximation of $q_{\backslash i}$:
\begin{equation}
  \label{eq:laplace-tilted}
  \widetilde{\vmu}_{\LP} = \argmax_{\vtheta} \log q_{\backslash i}(\vtheta),
  \quad \widetilde{\mQ}_{\LP} = -\nabla_{\vtheta} \nabla_{\vtheta} \log q_{\backslash i}(\vtheta) \vert_{\vtheta = \widetilde{\vmu}_{\LP}}.
\end{equation}
This approximation requires $\log q_{\backslash i}$ to be twice differentiable,
which holds under mild conditions (Theorem~\ref{thm:continuous-diff}).
Operationally, we compute $\widetilde{\vmu}_{\LP}$ using a second-order Newton’s
method initialized at the cavity mean. This optimizer requires the gradient
$ \nabla_{\vtheta} \log q_{\backslash i}(\vtheta) = \nabla_{\vtheta} \log q_{-i}(\vtheta) + \nabla_{\vtheta} \log \w_{i}(\vtheta)$,
where
\begin{equation*}
  \nabla_{\vtheta} \log \w_{i}(\vtheta)
  = \frac{\vh_{i}^{\top} \diffp{\vlambda}{\vtheta}
    + \vlambda^{\top} \diffp{\vh_{i}}{\vtheta}}{1 + \vlambda^{\top} \vh_{i}}.
\end{equation*}
The Jacobian $\diffp{\vlambda}{\vtheta}$ can be derived by applying the implicit
function theorem on \eqref{eq:lambda-equation}. This yields the linear system
\begin{equation}
  \label{eq:dlambda}
  \left[\sum_{i=1}^{\ndata} \w_{i}^{2} \vh_{i} \vh_{i}^{\top}\right]
  \diffp{\vlambda}{\vtheta}
  = \sum_{i=1}^{\ndata} \w_{i} \left[ \ndata^{-1} \bm{I}
    - \w_{i} \vh_{i} \vlambda^{\top}\right] \diffp{\vh_{i}}{\vtheta},
\end{equation}
where $\bm{I}$ is the identity matrix. This gradient is finite if
$\sum_{i=1}^{\ndata} \w_{i}^{2} \vh_{i} \vh_{i}^{\top}$ is invertible, i.e.,
$\{\vh_{1}, \ldots, \vh_{n}\}$ spans $\R^{K}$. In our implementation, we use
automatic differentiation \citep{baydin18automatic} to compute both the gradient
and Hessian of $\log q_{\backslash i}(\vtheta)$.

In practice, the computational performance of the expectation propagation
algorithm may be sensitive to the initializations of the site approximations.
Following Theorem \ref{thm:stable-region}, we initialize our sites such that
they are roughly within a \textit{stable region} of the Laplace approximate
posterior. To do so, we use \eqref{eq:laplace-tilted} as a deterministic warm-up
for EPEL to obtain good initialisations: during the warm-up iterations, each
site update uses the Laplace mean and precision of its tilted distribution
directly in the moment-matching step. If this Laplace approximation fails to
converge or does not yield a positive-definite precision matrix, we fall back to
importance sampling. After the warm-up iterations, the actual expectation
propagation iterations are executed, where importance sampling in
\eqref{eq:is-tilted} is used for the KL projection and the Laplace approximation
serves as the proposal distribution.


\subsection{Implementation}
\label{sec:implementation}
With all the tools in place, we now describe EPEL. There are two variations
relative to the vanilla implementation in Section~\ref{sec:ep-review}. First, we
use a damping factor $\alpha \in (0, 1)$ to reduce the update size of the site
parameters, i.e.,
$\Delta \veta_{i} = \alpha \cdot (\widetilde{\veta}_{i} - \veta^{t})$ where
$\widetilde{\veta}_{i}$ is the natural parameter of the KL-projection of $q_{\backslash i}$,
and $\veta^{t}$ the global approximation at the $t$-th iteration, as suggested
in \citet{vehtari20expectation}. We find, both empirically and also from
the analysis in Theorem~\ref{thm:stable-region}, that the updates in the early
iterations tend to be noisy. Therefore, it is beneficial to restrict the update
size to avoid divergence. Second, to enable algorithmic parallelization, we
update the global parameters $\veta$ only after all sites have been updated at
least once \citep{dehaene18expectation}. The pseudocode of EPEL is presented in
Algorithm~\ref{alg:exp-prop}, and further implementation details are given in
Supplementary Material, Section~\ref{app:sec:epel-practical-notes}.

\begin{algorithm}[t]
  \caption{Expectation-propagation for Bayesian empirical likelihood}
  \label{alg:exp-prop}
  \KwSty{Initialize} $t = 0$, $\veta^{0} = \sum_i \veta_i^0$ for all sites
  $i = 0,\ldots,\ndata$, and damping factor $\alpha < 1$\;

  \Repeat{
    $\Delta\veta_i$ is sufficiently small for all $i = 0,\ldots,\ndata$
  }{
    \For{each site $i = 0,\ldots,\ndata$}{
      $q_{-i}(\vtheta; \veta_{-i}) \gets$ cavity distribution
      $\sum_{j \neq i} \veta_j^t$\;

      $\widetilde{\veta}_i \gets$ KL-projection of
      $q_{\backslash i}(\vtheta) \propto
      q_{-i}(\vtheta; \veta_{-i}) w_i(\vtheta)$ using either
      \eqref{eq:is-tilted} or \eqref{eq:laplace-tilted}\;

      $\Delta\veta_i \gets \alpha \cdot
      (\widetilde{\veta}_i - \veta^t)$\;
    }
    $\veta^{t+1} \gets \veta^t + \sum_i \Delta \veta_i$\;
    $\veta_i^{t+1} \gets \veta_i^t + \Delta \veta_i$ and
    $t \leftarrow t + 1$\;
  }
  \KwSty{Output} $\veta^{t+1}$\;
\end{algorithm}

\subsection{Post-processing EPEL solution for skewed-posterior estimation}
\label{sec:skew-post-processing}
The accuracy of the EPEL approximation can further improve from a post-processing step that is
particularly useful for skewed Bayesian empirical likelihood posteriors. Following the skew-symmetric
construction of \citet{pozza26skewsymmetric}, we perturb the fitted Gaussian
EPEL approximation without refitting the EP sites. Let $\bar q_{\vtheta_c}$
denote the Gaussian EPEL approximation centred at $\vtheta_c$, and let
$2\vtheta_c-\vtheta$ be the reflection of $\vtheta$ about this centre. We define
the skewing factor
\begin{equation*}
  a_{\vtheta_c}(\vtheta)
  = \frac{p_{\EL}(\vtheta \mid \sD_{n})}{p_{\EL}(\vtheta \mid \sD_{n}) + p_{\EL}(2\vtheta_c - \vtheta \mid \sD_{n})}.
\end{equation*}
The skew-corrected approximation is then
\begin{equation*}
  q_{\vtheta_c}(\vtheta)
  =
  2 \bar q_{\vtheta_c}(\vtheta) a_{\vtheta_c}(\vtheta).
\end{equation*}
For points outside the posterior support $\mTheta_B$, we define
$\log p_{\EL}(\vtheta \mid \sD_n)=-\infty$. Thus, the expression for
$a_{\vtheta_c}(\vtheta)$ remains valid whenever at least one of $\vtheta$ and
$2\vtheta_c-\vtheta$ lies in $\mTheta_B$. If both points lie outside
$\mTheta_B$, we follow \citet{pozza26skewsymmetric} and set
$a_{\vtheta_c}(\vtheta)=1/2$.


\section{Asymptotic behaviour of EPEL}
\label{sec:theory}
In this section, we establish the asymptotic equivalence of EPEL and $p_{\EL}$
through three main contributions. First, we prove that one cycle of the EPEL
update is asymptotically equivalent to a Newton–Raphson step
(Theorem~\ref{thm:asymptotically-newton-raphson}). Second, we show that the
first and second moments of the EPEL posterior are asymptotically equivalent to
those of the Laplace approximation of $p_{\EL}$
(Theorem~\ref{thm:stable-region}). Third, we derive a Bernstein–von–Mises
theorem demonstrating that the EPEL posterior and $p_{\EL}$ converge to the same
normal distribution (Theorem~\ref{thm:ep-bvm}). All results are stated in the
limit $n \to \infty$ unless otherwise specified.


Our proof requires a non-trivial extension of the techniques in
\citet{dehaene18expectation} and \citet{yu24variational}. In particular, the
support of the Bayesian empirical likelihood posterior is highly non-smooth due
to its data-dependent support, whereas the assumptions in
\citet{dehaene18expectation} implicitly require the smoothness of the target
posterior and its support to be fixed for all $n$. This difficulty motivates the
central contribution of our theoretical work, i.e., we establish sufficient
conditions for the Bayesian empirical-likelihood posterior to be smooth throughout the
feasible parameter space (Theorem~\ref{thm:continuous-diff}). This smoothness then validates the adaptation of proof techniques in \citet{dehaene18expectation} to the Bayesian empirical likelihood paradigm. To the best of our
knowledge, this smoothness property has not been discussed in any preceding
works on Bayesian empirical likelihood.



\subsection{Notations and assumptions}
The true parameter is $\vtheta^{\star}$, and $\bP^{\star}$ denotes the
probability measure under $F_{0}$. The negative logarithm of each target site is
$\phi_{i}(\cdot) = - \log w_{i}(\cdot)$, $i = 1, \ldots, n$, with
$w_{0}(\vtheta)$ denoting the prior $p(\vtheta)$. One‑ and zero‑vectors are
$\vone_{K}$ and $\vzero_{K}$. Subscripts $i$ index sites; superscripts $t$ index
iterations. For theorems \ref{thm:continuous-diff} to \ref{thm:stable-region}, we require some or all of the below stated assumptions:
\begin{enumerate}[label=(\Roman*)]
  \item The parameter space $\mTheta$ is bounded and
        $\vtheta^\star$ is a unique interior point. \label{item:theta-bound}
  \item For countable $h(\sZ,\vtheta)= \{h(\vz, \vtheta): \vz \in \sZ \}$: for
        each $\vtheta \in \mTheta$, there exist $L \geq K + 1$ non‑zero vectors
        $\sP_{\vtheta} = \{ g_{1,\vtheta}, \ldots, g_{L,\vtheta} \}$ whose
        convex hull contains $\vzero_K$ in its interior; and
        $\min_{a \in [L]} \bP^\star \{ h(\vz,\vtheta) = g_{a,\vtheta} \} > 0$.
        For uncountable $h(\sZ,\vtheta)$: for each $\vtheta \in \mTheta$, there
        exist $L \geq K + 1$ closed, connected sets
        $\sG_{1,\vtheta}, \ldots, \sG_{L,\vtheta}$ such that any
        $\{ g_a \in \sG_{a,\vtheta} \}_{a=1}^{L}$ have a convex hull with
        $\vzero_K$ in its interior and $\vzero_K \notin \sG_{a,\vtheta}$; the
        induced density of $h(\vz,\vtheta)$ is strictly positive on
        $\bigcup_{a=1}^{L} \sG_{a,\vtheta}$; and
        $\min_{a \in [L]} \bP^\star \{ h(\vz,\vtheta) \in \sG_{a,\vtheta} \} > 0$.
        \label{item:h-location}
  \item The first, second, and third derivatives of $h (\vz, \vtheta)$ with
        respect to $\vtheta$ are continuously differentiable on $\mTheta$ for
        all $\vz \in \sZ$.
        \label{item:h-smoothness}
  \item The prior $p(\vtheta)$ is positive on a neighbourhood of
        $\vtheta^\star$. Moreover, there exists $M_p > 0$ such that, for all
        $(a, b, c, d) \in \{ 1, \ldots, p \}^4$,
        \begin{equation*}
          \left| \diffp{\log p(\vtheta)}{\theta_a, \theta_b, \theta_c, \theta_d} \right|
          \le M_p \quad \text{for all } \vtheta \in \mTheta .
        \end{equation*} \label{item:prior-smoothness}
\end{enumerate}
Assumption~\ref{item:theta-bound} is a standard condition for
Bernstein-von-Mises type of results in empirical likelihood and mirrors many of
the previous works, e.g.,
\citep{chernozhukov03mcmc,chib18bayesian,zhao20bayesian,yiu20inference}.
Assumption~\ref{item:h-location} is a restriction on $h$. Intuitively, it
requires that, for each $\vtheta \in \mTheta$, there is a non-zero probability
of $\sC_{\vtheta}$ containing $\vzero_{K}$ in its interior; more details in
Supplementary Material, Section~\ref{app:sec:assump2-comments}. This assumption
also implies $\operatorname{span}(\sP_{\vtheta}) = \R^K$ and
$\operatorname{span}( \{ g_a \in \sG_{a,\vtheta} \}_{a=1}^{L} ) = \R^K$, since a
$\sP_{\vtheta}$ that spans a $K-1$ or lower dimension subspace of $\R^{K}$ can
never contain $\vzero_{K}$ in its interior. Assumptions~\ref{item:h-smoothness}
and~\ref{item:prior-smoothness} imply that $h (\vz, \vtheta)$ is differentiable
with respect to $\vtheta$ up to the fourth-order in the domain $\mTheta$ for all
$\vz \in \sZ$. Fourth-order differentiability is needed for analysing the
behaviour of EPEL, as we need the quadratic expansion of the Hessian of
$\phi^{(2)}(\vtheta)$.

\subsection{Asymptotic smoothness of BayesEL posterior}
Our first major result pertains to the asymptotic smoothness
of the empirical likelihood function in the entire parameter space
$\mTheta$. In fact, the result follows from a theoretical analysis of the behaviour of the posterior support. While it is widely known in the empirical likelihood literature that this support expands with $n$, here we establish sufficient conditions to guarantee that the posterior is smooth throughout a compact $\mTheta$ for a sufficiently large $n$. To the best of our knowledge, no similar analysis has been presented in preceding empirical likelihood literature.
\begin{theorem}
  \label{thm:continuous-diff}
  Assume \ref{item:theta-bound} to \ref{item:h-smoothness} hold. Then, for a sufficiently large $n$,
  \begin{equation*}
    \left| \diffp{\phi_i (\vtheta; \sD_{n})}{\theta_a, \theta_b, \theta_c, \theta_d} \right| < \infty,
    \quad \emph{for all} \
    (a, b, c, d) \in \{1, \dots, p \}^4, \
    i = 1, \ldots, n, \
    \vtheta \in \mTheta.
  \end{equation*}
\end{theorem}
The proof and details are in Supplementary Material,
Section~\ref{app:sec:smoothness-el}. Note that a related result first appeared
in \cite{owen90empirical} which says that the true value is excluded from the
empirical likelihood support finitely often. The above theorem builds on
Lemma~\ref{app:thm:single-point-supported} in the Supplementary Material which
says that any point in the parameter space is excluded from the support finitely
often.

\subsection{Asymptotic equivalence to Newton-Raphson updates}
We prove tbat a cycle of EPEL is asymptotically equivalent to performing a Newton-Raphson
update on the negative log-posterior
$\psi(\cdot) = \sum_{i=0}^{n} \phi_{i}(\cdot)$. More concretely, with the global
approximation $\vr^{t}, \mQ^{t}$ and $\vmu^{t} = (\mQ^{t})^{-1} \vr^{t}$, the
$(t+1)$‑th iterate is asymptotically equivalent to
\begin{equation*}
  \vmu^{t+1} =  \vmu^{t} - \{ \psi^{(2)}(\vmu^{t}) \}^{-1} \psi^{(1)}(\vmu^{t}),
\end{equation*}
with $\mQ^{t+1} \approx \psi^{(2)}(\vmu^{t})$ and
$\vr^{t+1} \approx \mQ^{t+1} \vmu^{t} - \psi^{(1)}(\vmu^{t})$, where it is
evident that the right-hand side is the Newton-Raphson update with the current
state equal to $\vmu^{t}$.

\begin{theorem}
  \label{thm:asymptotically-newton-raphson}
  Assume~\ref{item:theta-bound} to~\ref{item:prior-smoothness} hold. Consider the
  EPEL Gaussian approximation at iteration~$t$, $\{\vr^{t}_i \}_{i=0}^n$ and
  $\{\mQ_i^{t} \}_{i=0}^n$ for the linear-shift and precision of the site
  approximations. The global approximation mean
  $\vmu^{t} = \{ \sum_i \mQ_i^{t} \}^{-1} \sum_i \vr^{t}_i$ is a fixed vector,
  and the global precision and linear-shift are
  $\mQ^{t} = \sum_{i=0}^n \mQ_i^{t}$ and $\vr^{t} = \sum_{i=0}^n \vr_i^{t}$
  respectively. Moreover, assume that (i)
  $\min_i \min \eigen ( \sum_{j \neq i} \mQ_{j}^{t}) = pn + O_{p}(1)$ with some
  positive constant $p$; (ii) the means of the tilted distributions are not too
  far apart, i.e.,
  $\sum_i \lVert \vr^{t}_i - \mQ_i^{t} \vmu^{t} + \phi_i^{(1)} (\vmu^{t}) \rVert = O_p(n)$.
  Then, the new global linear shift and new global precision after one EP cycle
  has the asymptotic behaviour:
  \begin{equation*}
    \left \lVert \vr^{t+1} +  \psi^{(1)} (\vmu^{t}) - \mQ^{t+1}  \vmu^{t} \right \rVert = O_p(1)
  \end{equation*}
  and
  \begin{equation*}
    \left \lVert \mQ^{t+1} - \psi^{(2)} (\vmu^{t}) \right \rVert = O_p(1)
  \end{equation*}
  where $\psi (\cdot) = \sum_{i=0}^n \phi_i (\cdot)$.
\end{theorem}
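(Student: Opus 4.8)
\begin{sketchproof}
The plan is to adapt the argument of \citet{dehaene18expectation} for the Newton-like behaviour of expectation-propagation to our setting. The key enabler is Lemma~\ref{thm:continuous-diff}: for large $n$ it makes every $\phi_{i}$ of class $C^{4}$ on the \emph{fixed} compact set $\mTheta$, with derivatives bounded uniformly in the site index, so that $\sup_{i}\sup_{\vtheta \in \mTheta}\lVert \phi_{i}^{(k)}(\vtheta)\rVert = O_{p}(1)$ for $k=1,2,3$. This removes the difficulty that the support of $p_{\EL}$ depends on $n$, letting me run a Laplace expansion of each tilted distribution over an $n$-free domain. The overall structure is: (a) show each tilted mode is close to $\vmu^{t}$ in aggregate over the sites; (b) Laplace-expand each tilted distribution to obtain its first two moments; (c) sum the $n+1$ site contributions over one parallel cycle, using the mode-stationarity equations to cancel the $n\mQ^{t}$ and $n\vr^{t}$ bookkeeping terms and leave only $\psi^{(1)}(\vmu^{t})$, $\psi^{(2)}(\vmu^{t})$ plus $O_{p}(1)$ remainders. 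Throughout I take the cycle undamped ($\alpha=1$), matching the Newton--Raphson formula displayed above.

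First I would set up the cavity quantities $\mQ_{-i}^{t} = \mQ^{t}-\mQ_{i}^{t}$, $\vr_{-i}^{t} = \vr^{t}-\vr_{i}^{t}$, $\vmu_{-i}^{t} = (\mQ_{-i}^{t})^{-1}\vr_{-i}^{t}$, and note that Assumption~(i) gives $\lVert(\mQ_{-i}^{t})^{-1}\rVert = O_{p}(1/n)$ uniformly in $i$. Since $\vr^{t}=\mQ^{t}\vmu^{t}$, one has the identity $\vmu_{-i}^{t}-\vmu^{t} = (\mQ_{-i}^{t})^{-1}(\mQ_{i}^{t}\vmu^{t}-\vr_{i}^{t})$; adding and subtracting $\phi_{i}^{(1)}(\vmu^{t})$, Assumption~(ii) together with the uniform bound on $\phi_{i}^{(1)}$ yields $\sum_{i=0}^{n}\lVert\vmu_{-i}^{t}-\vmu^{t}\rVert = O_{p}(1/n)\cdot O_{p}(n) = O_{p}(1)$. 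Next, for each $i$ the tilted density $q_{\backslash i}(\vtheta)\propto\exp\!\big(-\tfrac12(\vtheta-\vmu_{-i}^{t})^{\top}\mQ_{-i}^{t}(\vtheta-\vmu_{-i}^{t})-\phi_{i}(\vtheta)\big)$ on $\mTheta$ has, by a contraction argument on its stationarity equation $\mQ_{-i}^{t}\widetilde{\vmu}_{i}-\vr_{-i}^{t}+\phi_{i}^{(1)}(\widetilde{\vmu}_{i})=\vzero$ (using $\min\eigen(\mQ_{-i}^{t})=\Omega_{p}(n)$ and $\phi_{i}^{(1)}=O_{p}(1)$), an interior mode with $\sup_{i}\lVert\widetilde{\vmu}_{i}-\vmu_{-i}^{t}\rVert=O_{p}(1/n)$; combined with the previous display, $\sum_{i}\lVert\widetilde{\vmu}_{i}-\vmu^{t}\rVert=O_{p}(1)$. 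I would also record that, for large $n$, interiority of $\vtheta^{\star}$ forces all $\widetilde{\vmu}_{i}$ into a fixed compact subset of $\operatorname{int}\mTheta$, so the Gaussian factor suppresses $\partial\mTheta$ super-polynomially and each moment integral is effectively over $\R^{p}$.

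Then a Laplace expansion around $\widetilde{\vmu}_{i}$ with Hessian $\mH_{i}:=\mQ_{-i}^{t}+\phi_{i}^{(2)}(\widetilde{\vmu}_{i})\succ 0$ (scale $n^{-1/2}$, smooth $O_{p}(1)$ third and fourth derivatives) gives $\E_{q_{\backslash i}}[\vtheta]=\widetilde{\vmu}_{i}+O_{p}(n^{-2})$ and $\Cov_{q_{\backslash i}}[\vtheta]=\mH_{i}^{-1}+O_{p}(n^{-3})$ --- the would-be $O_{p}(n^{-2})$ cubic correction vanishes because it is an odd Gaussian moment, which is exactly why fourth-order smoothness is assumed. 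Hence the moment-matching projection returns a global Gaussian with natural parameters $\widetilde{\mQ}_{i}=\mH_{i}+O_{p}(n^{-1})$ and $\widetilde{\vr}_{i}=\mH_{i}\widetilde{\vmu}_{i}+O_{p}(n^{-1})$, uniformly in $i$. For the parallel undamped cycle, $\mQ^{t+1}=\sum_{i=0}^{n}\widetilde{\mQ}_{i}-n\mQ^{t}$ and $\vr^{t+1}=\sum_{i=0}^{n}\widetilde{\vr}_{i}-n\vr^{t}$; using $\sum_{i}\mQ_{-i}^{t}=n\mQ^{t}$ and summing the uniformly $O_{p}(n^{-1})$ per-site errors over $n+1$ sites, $\mQ^{t+1}=\sum_{i}\phi_{i}^{(2)}(\widetilde{\vmu}_{i})+O_{p}(1)$, and a first-order Taylor step (with the uniform bound on $\phi_{i}^{(3)}$ and $\sum_{i}\lVert\widetilde{\vmu}_{i}-\vmu^{t}\rVert=O_{p}(1)$) replaces the argument by $\vmu^{t}$, giving $\lVert\mQ^{t+1}-\psi^{(2)}(\vmu^{t})\rVert=O_{p}(1)$. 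Substituting the mode equation in the form $\mH_{i}\widetilde{\vmu}_{i}=\vr_{-i}^{t}-\phi_{i}^{(1)}(\widetilde{\vmu}_{i})+\phi_{i}^{(2)}(\widetilde{\vmu}_{i})\widetilde{\vmu}_{i}$, using $\sum_{i}\vr_{-i}^{t}=n\vr^{t}$, then subtracting $\mQ^{t+1}\vmu^{t}=\sum_{i}\phi_{i}^{(2)}(\widetilde{\vmu}_{i})\vmu^{t}+O_{p}(1)$ and Taylor-expanding $\phi_{i}^{(1)}(\widetilde{\vmu}_{i})$ about $\vmu^{t}$, the whole expression $\vr^{t+1}+\psi^{(1)}(\vmu^{t})-\mQ^{t+1}\vmu^{t}$ collapses to $\sum_{i}\big(\phi_{i}^{(1)}(\vmu^{t})-\phi_{i}^{(1)}(\widetilde{\vmu}_{i})\big)+\sum_{i}\phi_{i}^{(2)}(\widetilde{\vmu}_{i})(\widetilde{\vmu}_{i}-\vmu^{t})=O_{p}(1)$, which is the first claim.

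I expect the main obstacle to be the uniform moment control in the Laplace step: I need the error in the tilted \emph{precision} to be $O_{p}(n^{-1})$ \emph{per site} --- not merely $O_{p}(1)$ --- so that its sum over all $n+1$ sites stays $O_{p}(1)$. This forces exploiting both the cancellation of the odd cubic term and the fourth-order bound of Lemma~\ref{thm:continuous-diff}, and, crucially, ruling out a boundary Laplace regime by showing every tilted mode remains in a fixed compact subset of $\operatorname{int}\mTheta$ for large $n$. It is precisely here that Assumptions~(i)--(ii) and the interiority of $\vtheta^{\star}$ are genuinely used, and where the $n$-dependent support of $p_{\EL}$ would otherwise break the argument; the rest is bookkeeping with the identities $\sum_{i}\mQ_{-i}^{t}=n\mQ^{t}$ and $\sum_{i}\vr_{-i}^{t}=n\vr^{t}$ and first-order Taylor estimates.
\end{sketchproof}
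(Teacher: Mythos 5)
Your overall architecture matches the paper's: analyse each tilted distribution to get a per-site error of order $1/\min\eigen(\mQ_{-i}^{t})$, then sum over the $n+1$ sites and invoke assumptions (i)--(ii) via the triangle inequality. Where you genuinely diverge is the per-site engine. The paper (Theorem~\ref{thm:cavity-asymptote}) controls the \emph{exact} moments of the tilted distribution: log-concavity for large $n$, the Brascamp--Lieb inequality (plus a matching lower bound from log-concavity) to pin the tilted precision at $\phi_i^{(2)}(\vmu^{t})+\vbeta$ up to $O((1+\Delta_{\vr;i})/\min\eigen(\vbeta))$, and the score equation of the tilted density to locate its mean; crucially, the first derivative of $\phi_i$ only ever enters through the combination $\Delta_{\vr;i}=\lVert\phi_i^{(1)}(\vmu^{t})+\vdelta_i\rVert$, which is exactly what assumption (ii) controls. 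You instead run a Laplace/Edgeworth-type expansion around the tilted \emph{mode}, using the odd-term cancellation and the fourth-order bounds to claim mean error $O_p(n^{-2})$ and covariance error $O_p(n^{-3})$ per site. Those rates are formally right and, if made rigorous and uniform in $i$, would deliver the same per-site $O_p(n^{-1})$ natural-parameter errors and hence the theorem; the trade-off is that the paper's inequalities give nonasymptotic moment bounds almost for free, whereas your route must justify the expansion of the exact integrals (tail/normalisation control, uniformity of remainders over sites, and the boundary issue you correctly flag) -- precisely the step you defer.

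Two concrete caveats. First, your bookkeeping repeatedly uses a site-uniform bound on $\phi_i^{(1)}$ (to get $\sum_i\lVert\vmu_{-i}^{t}-\vmu^{t}\rVert=O_p(1)$ from (ii), and to get $\sup_i\lVert\widetilde{\vmu}_i-\vmu_{-i}^{t}\rVert=O_p(1/n)$ from the stationarity equation). The paper's Corollary~\ref{thm:useful-bound} only supplies uniform constants for the second through fourth derivatives, and its proof of the theorem is deliberately structured so that no separate bound on $\phi_i^{(1)}$ is needed; you should either prove the first-order analogue (plausible by the same argument, but it is an additional lemma) or rewrite your mode estimate directly in terms of $\Delta_{\vr;i}$, e.g.\ $\lVert\widetilde{\vmu}_i-\vmu^{t}\rVert\le\Delta_{\vr;i}/(cn-K_2)$ from $\mQ_{-i}^{t}(\widetilde{\vmu}_i-\vmu^{t})=(\mQ_i^{t}\vmu^{t}-\vr_i^{t})-\phi_i^{(1)}(\widetilde{\vmu}_i)$, which removes the issue. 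Second, the Laplace remainder claims are asserted, not proved; since the per-site precision error must be $O_p(n^{-1})$ (not just $o_p(1)$) for the sum to stay $O_p(1)$, this is the mathematical core of the argument, and in the paper it is carried by the Brascamp--Lieb step you would be replacing. With those two points addressed, your final summation and Taylor cancellation (using $\sum_i\mQ_{-i}^{t}=n\mQ^{t}$, $\sum_i\vr_{-i}^{t}=n\vr^{t}$, and $\sum_i\lVert\widetilde{\vmu}_i-\vmu^{t}\rVert=O_p(1)$) is correct and reproduces both displayed conclusions.
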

\begin{sketchproof} We first show the update of each individual site
  $\vr_{i}^{t}$ and $\mQ_{i}^{t}$ is equivalent to the first and second
  derivatives of $\phi_{i}$ evaluated at $\vtheta = \vmu^t$ (Supplementary
  Material, Theorem~\ref{app:thm:cavity-asymptote}). The key step of
  Theorem~\ref{app:thm:cavity-asymptote} is the application of Brascamp-Lieb
  inequality to upper-bound the covariance of the tilted distribution, then
  Taylor expand the terms in the upper-bound. Our stated result in this theorem
  follows by an application of the triangle inequality and
  Theorem~\ref{app:thm:cavity-asymptote}. Details are provided in Supplementary
  Material, Section~\ref{app:sec:equiv-nr-update}.
\end{sketchproof}

\textit{Remarks.} This theorem shows that, for a sufficiently large $n$, EPEL
behaves like Newton-Raphson if EPEL is initialized with a sufficiently large
precision, and the cavity and global mean are not too far apart. This connection
is important to establish the convergence of EPEL.

\subsection{Convergence of EPEL}


Let $\widehat{\vtheta} = \argmin_{\vtheta} \psi(\vtheta)$ be the
\emph{maximum-a-posteriori} solution. We now prove that $\mQ^{t}$ and $\vr^{t}$
are asymptotically equivalent to that of the Laplace approximation of
$p_{\EL}$, with precision $\psi^{(2)}(\widehat{\vtheta})$ and linear shift
$\psi^{(2)}(\widehat{\vtheta}) \widehat{\vtheta} - \psi^{(1)}(\widehat{\vtheta})$.
In particular, the EPEL algorithm converges asymptotically to a solution which coincides with the precision and linear shift of the Laplace approximate posterior in two iterations.
\begin{theorem}
  \label{thm:stable-region}
  Assume~\ref{item:theta-bound} to~\ref{item:prior-smoothness} hold. Consider
  the EPEL Gaussian initializations $\{\vr_i^0 \}_{i=0}^{n}$ and
  $\{\mQ_i^0 \}_{i=0}^{n}$ that satisfy
  \begin{equation*}
    n \max_i \lVert \mQ_i^0 \widehat{\vtheta} - \phi_i^{(1)} (\widehat{\vtheta})  - \vr_i^0    \rVert = \Delta_{\vr}^0
  \end{equation*}
  and
  \begin{equation*}
    n \max_i \lVert  \phi_i^{(2)} (\widehat{\vtheta})  - \mQ_i^0  \rVert = \Delta_{\vbeta}^0,
  \end{equation*}
  where $ \Delta_{\vr}^0 = \sqrt{n}$ and $\Delta_{\vbeta}^0 = \tfrac{1}{2} \lVert \psi^{(2)} (\widehat{\vtheta}) \rVert$. Then, for every $t=1,2,\ldots$, there exists a $\Delta_{\vr}^t$ and $\Delta_{\vbeta}^t$ such that
  \begin{equation*}
    \lVert \mQ_i^t \widehat{\vtheta} - \phi_i^{(1)} (\widehat{\vtheta}) - \vr_i^t \rVert \le n^{-1} \Delta_{\vr}^t
  \end{equation*}
  and
  \begin{equation*}
    \lVert \phi_i^{(2)} (\widehat{\vtheta})  - \mQ_i^t  \rVert \le n^{-1} \Delta_{\vbeta}^t,
  \end{equation*}
  where $\{\vr_i^t \}_{i=0}^{n}$ and $\{\mQ_i^t \}_{i=0}^{n}$ denote the EP
  approximation parameters after the $t$-th cycle. Moreover, if
  $ n^{-1} \psi^{(2)} (\widehat{\vtheta}) $ converges in probability to a
  constant positive definite matrix $\mV_{\vtheta^\star}$ with finite
  eigenvalues,
  $\sum_{i=0}^n \lVert \phi_i^{(1)} (\widehat{\vtheta}) \rVert = O_p(n)$, and
  $\sum_{i=0}^n \lVert \phi_i^{(2)} (\widehat{\vtheta}) \rVert = O_p(n)$, then
  \begin{equation*}
    \Delta_{\vr} ^t= O_p(1) \quad \emph{and} \quad \Delta_{\vbeta}^t = O_p(1) \quad \emph{for all} \quad t=2,3 \ldots.
  \end{equation*}
\end{theorem}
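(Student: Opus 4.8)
The plan is to treat Theorem~\ref{thm:stable-region} as a discrete-time contraction argument built on top of Theorem~\ref{thm:asymptotically-newton-raphson}. The global approximation after one cycle is, by that theorem, within $O_p(1)$ of a Newton--Raphson step on $\psi$ evaluated at the current global mean $\vmu^t$; since $\widehat\vtheta$ is the minimizer of $\psi$, the Newton map has $\widehat\vtheta$ as a fixed point, and near $\widehat\vtheta$ its linearization has spectral radius governed by the third-derivative term, which is $O_p(n)$ against a Hessian that is $\Theta_p(n)$ by the assumption $n^{-1}\psi^{(2)}(\widehat\vtheta)\to\mV_{\vtheta^\star}$. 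So I would first show that the hypotheses (i) and (ii) of Theorem~\ref{thm:asymptotically-newton-raphson} are \emph{preserved} along the iteration: the initialization bounds $\Delta_{\vr}^0=\sqrt n$, $\Delta_{\vbeta}^0=\tfrac12\lVert\psi^{(2)}(\widehat\vtheta)\rVert$ translate into $\min_i\min\eigen(\sum_{j\neq i}\mQ_j^t)=pn+O_p(1)$ (the precision stays positive definite and of order $n$) and into the bound $\sum_i\lVert\vr_i^t-\mQ_i^t\vmu^t+\phi_i^{(1)}(\vmu^t)\rVert=O_p(n)$, so that one cycle can legitimately be approximated by one Newton--Raphson cycle.

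Second, I would set up the recursion for $\Delta_{\vr}^t$ and $\Delta_{\vbeta}^t$ directly. Write the per-site error quantities $e_{\vr,i}^t=n\lVert\mQ_i^t\widehat\vtheta-\phi_i^{(1)}(\widehat\vtheta)-\vr_i^t\rVert$ and $e_{\vbeta,i}^t=n\lVert\phi_i^{(2)}(\widehat\vtheta)-\mQ_i^t\rVert$ and bound $\max_i$ of each. Using Theorem~\ref{thm:cavity-asymptote} (the single-site version quoted in the sketch of Theorem~\ref{thm:asymptotically-newton-raphson}), the updated site parameters $\widetilde\vr_i,\widetilde\mQ_i$ equal $\phi_i^{(1)}(\vmu^t)$ and $\phi_i^{(2)}(\vmu^t)$ up to an $O_p(1)$ remainder, provided the cavity precision is $\Theta_p(n)$. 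Then a Taylor expansion of $\phi_i^{(1)}$ and $\phi_i^{(2)}$ about $\widehat\vtheta$ converts these into $\phi_i^{(1)}(\widehat\vtheta)+\phi_i^{(2)}(\widehat\vtheta)(\vmu^t-\widehat\vtheta)+O_p(n)\lVert\vmu^t-\widehat\vtheta\rVert^2$ etc.; summing over $i$ and using $\sum_i\phi_i^{(1)}(\widehat\vtheta)=\psi^{(1)}(\widehat\vtheta)=0$ gives $\vmu^{t+1}-\widehat\vtheta=O(\lVert\vmu^t-\widehat\vtheta\rVert^2)+O_p(1/n)$, the usual Newton quadratic-convergence estimate. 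The damping factor $\alpha\in(0,1)$ only slows this to linear-then-quadratic, which is still fine. This yields a recursion of the schematic form $\Delta^{t+1}\le c\,(\Delta^t)^2/n+C$ for the appropriate combined error $\Delta^t$, which, once $\Delta^1$ is shown to be $O_p(n)$ — itself a consequence of $\Delta_{\vr}^0=\sqrt n$ feeding through one cycle — collapses to $\Delta^t=O_p(1)$ for all $t\ge 2$. Unwinding into the separate $\vr$- and $\vbeta$-components gives $\Delta_{\vr}^t=O_p(1)$ and $\Delta_{\vbeta}^t=O_p(1)$.

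The step I expect to be the main obstacle is controlling the error uniformly over all $n+1$ sites simultaneously while the posterior \emph{support} changes with $n$ — i.e., making sure the Taylor-expansion remainders and the $O_p(1)$ single-site remainders from Theorem~\ref{thm:cavity-asymptote} are genuinely uniform in $i$, not merely pointwise. This is precisely where Lemma~\ref{thm:continuous-diff} (uniform finiteness of the fourth derivatives of $\phi_i$ over all of $\mTheta$ for large $n$) does the heavy lifting: it supplies a deterministic-once-$n$-is-large bound on the third- and fourth-order terms so that $\max_i$ of the remainder is still $O_p(1)$. A secondary subtlety is verifying that, after the first damped cycle, the global mean $\vmu^1$ has already moved inside an $O_p(1/\sqrt n)$ neighbourhood of $\widehat\vtheta$ so that the quadratic-convergence regime is entered by $t=2$; this follows from $\Delta_{\vr}^0=\sqrt n$ together with the fact that one Newton step from any point with $\psi^{(1)}$-residual of order $\sqrt n$ lands within $O_p(1/\sqrt n)$ of the root when $\psi^{(2)}=\Theta_p(n)$.
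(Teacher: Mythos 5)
Your overall machinery is the right one and largely matches the paper: the paper's proof also runs each site update through Theorem~\ref{thm:cavity-asymptote} (with offset $\vdelta=\vzero$), Taylor-expands $\phi_i^{(1)}$ and $\phi_i^{(2)}$ to replace $\vmu^t$ by $\widehat{\vtheta}$ (using $\psi^{(1)}(\widehat{\vtheta})=\vzero$ and $\lVert \vmu^t-\widehat{\vtheta}\rVert \le \Delta_{\vr}^t/\min\eigen(\mQ^t)$), and closes a recursion in $(\Delta_{\vr}^t,\Delta_{\vbeta}^t)$, with Corollary~\ref{thm:useful-bound} supplying the uniform-in-$i$ derivative bounds you correctly flag as the crux. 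The paper does not route through the hypotheses of Theorem~\ref{thm:asymptotically-newton-raphson}; it controls $1/\min\eigen(\mQ^t)$ directly by $\lVert\{\psi^{(2)}(\widehat{\vtheta})-\Delta_{\vbeta}^t\mI\}^{-1}\rVert$, but that difference is cosmetic.

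The genuine gap is in how you close the recursion. You collapse the two error sequences into a single $\Delta^t$ obeying $\Delta^{t+1}\le c\,(\Delta^t)^2/n + C$ and then assert that $\Delta^1=O_p(n)$ suffices to get $O_p(1)$ for all $t\ge 2$. That does not follow from your own recursion: with $\Delta^1=O_p(n)$ you get $c\,(\Delta^1)^2/n + C = O_p(n)$, so the iteration never contracts below order $n$. Worse, the prescribed initialization already forces the combined error to be of order $n$, since $\Delta_{\vbeta}^0=\tfrac12\lVert\psi^{(2)}(\widehat{\vtheta})\rVert=\Theta_p(n)$, so any argument that needs the combined error to be $o_p(n)$ (or to be squared and divided by $n$) is dead on arrival. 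The resolution, and the way the paper argues, is that the recursion is coupled and asymmetric: $\Delta_{\vbeta}^t$ never enters quadratically but only through the resolvent $\lVert\{\psi^{(2)}(\widehat{\vtheta})-\Delta_{\vbeta}^t\mI\}^{-1}\rVert$, which stays $O_p(n^{-1})$ both at the specific initialization $\tfrac12\lVert\psi^{(2)}(\widehat{\vtheta})\rVert$ and whenever $\Delta_{\vbeta}^t=o_p(n)$; the quantity that actually drives the quadratic term is $\Delta_{\vr}^t$, and the choice $\Delta_{\vr}^0=\sqrt n$ is exactly what makes $(\Delta_{\vr}^0)^2/n=O_p(1)$. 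Carrying the coupled bounds through one cycle gives $\Delta_{\vr}^1=O_p(1)$ but only $\Delta_{\vbeta}^1=O_p(\sqrt n)$ (this is the spike noted in the remarks), and a second cycle is then needed to bring $\Delta_{\vbeta}^2$ down to $O_p(1)$ — which is precisely why the theorem requires $t\ge 2$. Your write-up, by symmetrizing the errors and settling for $O_p(n)$ after the first cycle, misses both the mechanism and the two-cycle structure; to repair it you must track $\Delta_{\vr}^t$ and $\Delta_{\vbeta}^t$ separately with the resolvent playing the role of the $n^{-1}$ factor. (A minor additional point: the paper's analysis is for undamped parallel updates, so the aside about the damping factor is outside the scope of what is being proved.)
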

\begin{sketchproof}
  We start from the bounds in Supplementary Material,
  Theorem~\ref{app:thm:cavity-asymptote} and replace $\vmu^{t}$ with
  $\widehat{\vtheta}$ via Taylor expansions and algebra. This yields recursive
  bounds in terms of $\Delta_{\vr}^{t}$ and $\Delta_{\vbeta}^{t}$. We substitute
  the initial $\Delta_{\vr}^{0}$ and $\Delta_{\vbeta}^{0}$ to obtain the stated
  rates.
\end{sketchproof}

\textit{Remarks.}
By the triangle inequality,
$\lVert \mQ^t \widehat{\vtheta} - \psi^{(1)} (\widehat{\vtheta}) - \vr^t \rVert \le \Delta_{\vr}^t$
and
$\lVert \psi^{(2)} (\widehat{\vtheta}) - \mQ^t \rVert \le \Delta_{\vbeta}^t$.
Notably, the first iteration has $\Delta_{\vbeta}^{1} = O_{p}(\sqrt{n})$ and
$\Delta_{\vr}^{1} = O_{p}(1)$. Empirically, we also observe a spike in $\mQ^{1}$
in the first iteration, followed by gradual convergence. The first and second
moments of EPEL are asymptotically equivalent to that of the Laplace
approximation of $p_{\EL}$ after at least two iterations under the required
assumptions. Details are provided in Supplementary Material,
Section~\ref{app:sec:epel-convergence}.

\subsection{EPEL Bernstein-von-Mises Theorem}
Finally, we establish a Bernstein–von–Mises result for EPEL under the stability
conditions from Theorem~\ref{thm:stable-region} and iterations of at least two
cycles. The Bayesian empirical‑likelihood posterior $p_{\EL}$ is asymptotically
normal and equivalent to the EPEL solution. For this result, we require the
following additional assumptions, which are standard for Bayesian empirical likelihood theory:
\begin{enumerate}[start=5, label=(\Roman*)]
  \item We assume
        $\mS^\star = \bE\left [ h (\vz, \vtheta^\star) h (\vz, \vtheta^\star)^\top \right ]$
        is positive definite with bounded entries. \label{item:S-star-pd-bounded}
  \item For any $a > 0$, there exists $\nu_{EL} > 0$ such that as
        $n \rightarrow \infty$, we have \label{item:el-bounded}
        \begin{equation*}
          \lim_{n \rightarrow \infty} \bP^\star \left ( \sup_{\lVert \vtheta - \vtheta^\star \rVert \ge a} n^{-1} \left \{ \log \EL_n (\vtheta) - \log \EL_n (\vtheta^\star) \right \} \le - \nu_{\EL} \right ) = 1
        \end{equation*}
  \item The quantities
        $\lVert \partial h(\vz, \vtheta^\star) / \partial \vtheta \rVert$,
        $\lVert \partial^2 h(\vz, \vtheta) / \partial \vtheta \partial \vtheta^\top \rVert$,
        $\lVert h(\vz, \vtheta) \rVert^3$ are bounded by some integrable
        function $\widetilde{G} (\vz)$ in a neighbourhood of $\vtheta^\star$. \label{item:h-bounded}
  \item We assume $\mD = \bE\{ \partial h(\vz, \vtheta) / \partial \vtheta \}$
        is full rank, where the expectation is taken with respect to the true
        distribution of $\vz$. \label{item:D-full-rank}
\end{enumerate}

\begin{theorem}
  \label{thm:ep-bvm}
  Assume conditions~\ref{item:theta-bound} to~\ref{item:D-full-rank} hold.
  Consider the EPEL Gaussian posterior parameterized by the linear-shift
  $\vr = \sum_{i=1}^{n} \vr_i$, precision $\mQ = \sum_{i=1}^{n} \mQ_i$ that are
  obtained after at least two EP cycles with initializations
  $\{\vr_i^0 \}_{i=0}^{n}$ and $\{\mQ_i^0 \}_{i=0}^{n}$ that satisfy
  \begin{equation*}
    n \max_i \lVert \mQ_i^0 \widehat{\vtheta} - \phi_i^{(1)} (\widehat{\vtheta}) - \vr_i^0 \rVert = \Delta_{\vr}^0
  \end{equation*}
  and
  \begin{equation*}
    n \max_i \lVert \phi_i^{(2)} (\widehat{\vtheta}) - \mQ_i^0 \rVert = \Delta_{\vbeta}^0,
  \end{equation*}
  where $ \Delta_{\vr}^0 = \sqrt{n}$ and $\Delta_{\vbeta}^0 = \tfrac{1}{2} \lVert \psi^{(2)} (\widehat{\vtheta}) \rVert$. Then, we have
  \begin{equation*}
    \dtv ( \mathcal{N}(\mQ^{-1}\vr,\mQ^{-1}), p_{\EL}(\vtheta \mid \sD_{n}) ) = o_p(1).
  \end{equation*}
\end{theorem}
\begin{sketchproof}
  First, we show that the asymptotic behaviour of $\psi$ and $\phi_{i}$ satisfy
  the rates required in Theorem~\ref{thm:stable-region}. We then show that the
  total variation distance between the EPEL solution and the Laplace
  approximation is $O_{p}(n^{-1/2})$. We then show the total variation distance
  between the posterior $p_{\EL}$ and the Laplace approximation is $o_{p}(1)$
  (Supplementary Material, Lemma~\ref{app:thm:posterior-mode-bvm}). Applying
  triangle inequality yields the claim. Details are provided in Supplementary
  Material, Section~\ref{app:sec:epel-bvm}.
\end{sketchproof}

\section{Experiments}
\label{sec:experiments}
In this section, we examine the cost–accuracy trade-off across different methods
for computing the Bayesian empirical likelihood posterior. As this type of
analysis is uncommon in the computational statistics literature, we briefly
explain the motivation for our numerical study and the way in which we report
the results. In particular, our study is motivated by skepticism about whether
certain approximate posterior inference methods are computationally worthwhile
in practice, given that algorithms such as expectation propagation and
variational Bayes can require non-negligible computation time to achieve
algorithmic convergence, while MCMC methods may, in some settings, explore the
posterior distribution surprisingly efficiently.

The methods under
consideration are: Laplace approximation, HMC \citep{chaudhuri17hamiltonian}, variational Bayes
\citep{yu24variational} with a full-covariance Gaussian variational
distribution, and the Metropolis–Hastings algorithm with a random-walk proposal.


We tracked the posterior approximation for each method over many iterations. For
HMC and Metropolis--Hastings, this was the empirical distribution of the
cumulative samples. For variational Bayes and EPEL, it was the current
variational approximation to $p_{\EL}(\vtheta \mid \sD_{n})$. The posterior
approximations were compared against a ‘gold standard’ to assess their quality.
This gold standard was an empirical distribution formed by either
$2 \times 10^{6}$ draws from an HMC sampler for examples with a differentiable
$\cons$, or otherwise $10^{7}$ draws from the Metropolis–Hastings algorithm. The
exact implementation and hyperparameters of these algorithms are given in
Supplementary Material, Section~\ref{app:sec:hyperparams}.

To assess the difference between the gold standard and each posterior
approximation, we drew 1000 samples from each and computed the optimal
non-bipartite pairings \citep[NBPs;][]{derigs88solving}. The NBP computation was
done on the pooled 2000 samples (1000 from each of the gold standard and the
assessed method). We then counted the cross-match NBPs, which are pairs with
exactly one sample from the gold standard and one sample from the assessed
method. A larger number of cross-match NBPs indicates that the two distributions
are more similar. We used 474 pairs as a cut-off to determine whether an
approximate posterior had achieved sufficient accuracy. This threshold
corresponds to the 0.05 quantile of the NBP statistic’s exact null distribution
(based on 1000 pseudo-samples), i.e., $\Pr(\mathrm{NBP} \geq 474) \geq 0.05$,
under the null hypothesis that the approximation is identical to the
gold-standard posterior. An NBP reading greater than 474 would fail to reject
the null hypothesis and suggests that the two distributions are statistically
indistinguishable. The NBPs were computed with the \texttt{nbpMatching}
\texttt{R} package \citep{beck24nbpmatching}.

We report the NBP statistic in two ways. First,
Figure~\ref{fig:nbp-median-main-experiments} shows how the distribution of the NBP statistic for
each method evolves with wall-clock time. This summarizes the cost--accuracy
trade-off. Second, Tables~\ref{tab:nbp-threshold-standard}
and~\ref{tab:nbp-threshold-skew} report the time required to reach sufficient
approximation quality. We define this time as the first recorded wall-clock
time at which the 50 replicate NBP statistics are significantly above 474 by a
one-sided Wilcoxon signed-rank test at the 5\% level.

We assigned a $\sN(0, 10^{2})$ prior on each element of $\vtheta$. The EPEL and
variational Bayes are initialized with the Laplace approximation of $p_{\EL}$.
This initial approximation is also included as a baseline in our comparisons.
The sites of EPEL $\veta^{0}_{i}$ are initialized as
$\veta^{0}_{i} = \veta^{0} / (\ndata + 1)$ where $\veta^{0}$ is the natural
parameter corresponding to the Laplace approximation of $p_{\EL}$. For the MCMC
methods (HMC and Metropolis–Hastings), we ran a sufficiently long burn-in period before
collecting samples. All results are averaged over 50 independent runs.

We implemented all methods with \texttt{JAX} \citep{bradbury18jax}, a
\texttt{Python} package which facilitates automatic differentiation and obviates
the need for manual derivations of derivatives for gradient-based methods. We
ran our experiments in containers to ensure reproducibility. Each container was
assigned two virtual cores and sufficient memory. The workload was run on
AMD EPYC 9474F 3.6 GHz processors. Code is available at
\url{https://github.com/weiyaw/epel}.

\subsection{Instrumental variables regression}
We begin with an instrumental variables regression using the \texttt{wage}
dataset of \citet{mroz87sensitivity}, which records the wages of 428 women in the
labour force. To control for the effects of unmeasured confounders, \cite{mroz87sensitivity} proposed using father's education and mother's education as instrumental variables. We model wage as a function of education, experience, and squared
experience. Let $y_i$ denote the logarithm of wage,
$\vx_i=(1,e_i,x_i,x_i^2)^\top$, and $\vz_i=(1,f_i,m_i,x_i,x_i^2)^\top$, where
$e_i$, $x_i$, $x_i^2$, $f_i$, and $m_i$ denote education, experience, squared
experience, father's education, and mother's education, respectively. All
continuous covariates are standardized. With
$\vtheta=(\theta_0,\theta_1,\theta_2,\theta_3)^\top$, the residual is
$r_i(\vtheta)=y_i-\vx_i^\top\vtheta$, and the constraint function is
$\cons(\vz_i,\vtheta)=\vz_i r_i(\vtheta)$. This gives five constraints for four
regression coefficients.

In this example, the posterior is relatively close to Gaussian. The
Gaussian-based methods (EPEL, variational Bayes, and the Laplace approximation)
therefore perform well from the outset; see Figure~\ref{fig:nbp-median-wage}.
EPEL continues to improve with additional computation and eventually reaches a
quality comparable to that of HMC.

\begin{table}[t]
  \centering
  \begin{tabular}{lrrrr}
    \toprule
    Setup & EPEL & HMC & MH & VB \\
    \midrule
    Instrumental variables regression & 0.0 & 78.3 & 257.4 & 0.0 \\
    Quantile regression & 18.1 & 27.1 & 63.3 & 4.5 \\
    Generalized estimating equations & 34.9 & 122.1 & 662.7 & -- \\
    \bottomrule
  \end{tabular}
  \caption{Time (in seconds) to reach sufficient approximation quality, defined
    as the first recorded time at which the 50 replicate NBP statistics are
    significantly above 474 by a one-sided Wilcoxon signed-rank test at the 5\%
    level. A dash indicates that the method did not attain this threshold within
    the recorded time grid.}
  \label{tab:nbp-threshold-standard}
\end{table}

\begin{figure*}[t]
  \centering
  \begin{subfigure}{0.49\textwidth}
    \includegraphics[width=\linewidth]{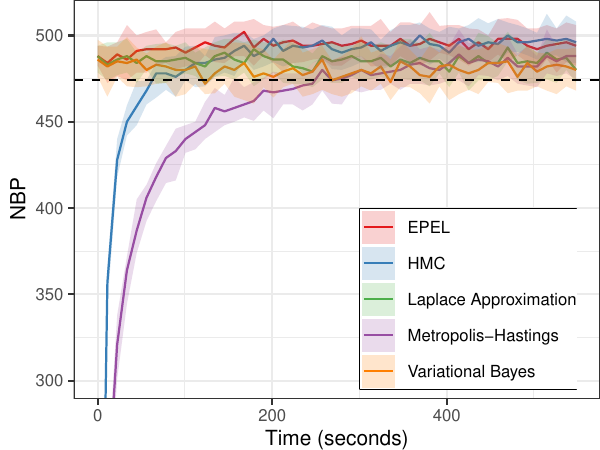}
    \caption{Instrumental variables regression}
    \label{fig:nbp-median-wage}
  \end{subfigure}%
  \begin{subfigure}{0.49\textwidth}
    \includegraphics[width=\linewidth]{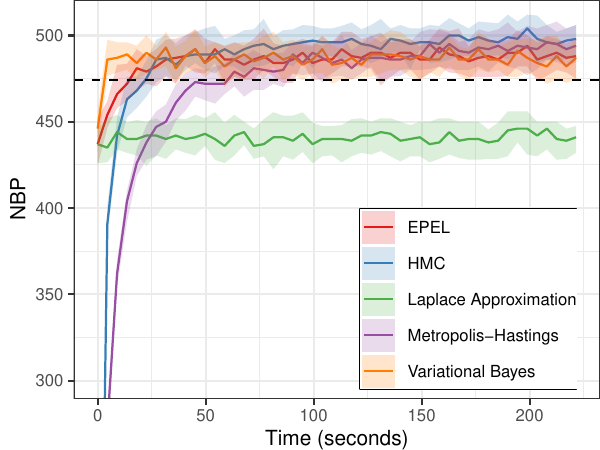}
    \caption{Quantile regression}
    \label{fig:nbp-median-quantregression}
  \end{subfigure}
  \begin{subfigure}{0.49\textwidth}
    \includegraphics[width=\linewidth]{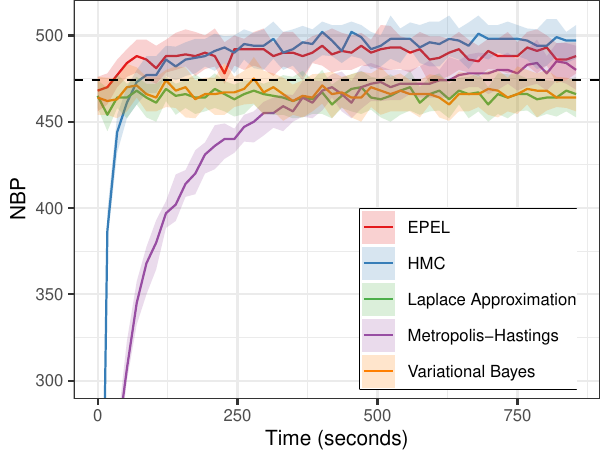}
    \caption{Generalized estimating equations}
    \label{fig:nbp-median-gee}
  \end{subfigure}%
  \begin{subfigure}{0.49\textwidth}
    \includegraphics[width=\linewidth]{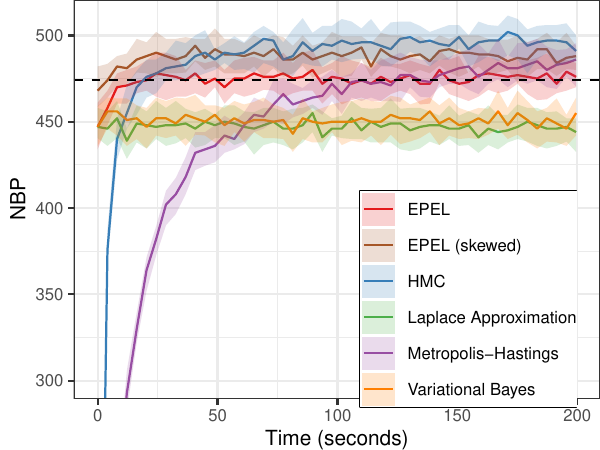}
    \caption{Logistic regression with skewed posterior}
    \label{fig:nbp-median-skew-logistic}
  \end{subfigure}
  \caption{NBP statistics tracked over computation time for all methods and
    experimental setups. Coloured curves show the median NBP with respect to the
    gold standard over 50 repetitions. Shaded bands denote the 0.25--0.75
    quantiles. The dotted horizontal line marks the accuracy threshold of 474.}
  \label{fig:nbp-median-main-experiments}
\end{figure*}

\subsection{Quantile regression}
We now consider quantile regression to estimate a function
$\vx \mapsto \vx^{\top} \vtheta$ such that
$\pr(y \leq \vx^{\top} \vtheta \mid \vx) = \tau$ for some quantile
$\tau \in [0, 1]$. Here, we set $\tau = 0.7$. We generated $n=100$ observations
from $y_i = \vx_i^\top \vtheta_0 + \epsilon_i$, where $\vx_i = (1, x_i)^\top$,
$x_i \sim \mathcal{N}(0,1)$, $\epsilon_i \sim \mathcal{N}(0,1)$, and
$\vtheta_0 = (0.5,1)^\top$. Quantile regression with Bayesian empirical
likelihood has previously been studied in \citet{yang12bayesian}. In their work,
they use the constraint function
$\cons(\vx, y, \vtheta) = \rho_{\tau}(y - \vx^{\top} \vtheta)\vx$, with quantile
score function $\rho_{\tau}(u) = (1-\tau)\ind\{u < 0\} - \tau\ind\{u > 0\}$ and
$\rho_{\tau}(0)=0$. This form of $\cons$ has zero value for
$\nabla_{\vtheta} \cons$ almost everywhere, and is problematic for
gradient-based methods, e.g., HMC and variational Bayes. For methods that
require a well-behaved $\nabla_{\vtheta} \cons$, we replace $\rho_{\tau}$ with a
smooth approximation,
$\widetilde \rho_{\tau}(u) = \expit(-u / \epsilon_{\rho}) - \tau$, with a
sufficiently small $\epsilon_{\rho}$. We use $\epsilon_{\rho} = 0.1$ here.

A well-behaved $\nabla_{\vtheta} \cons$ is not strictly necessary for EPEL if
$q_{\backslash i}$ is approximated with importance sampling. In practice, we
prefer the Laplace-based computation of $q_{\backslash i}$, as we observed
substantial gains in computation time with negligible impact on accuracy from
approximating~$\rho_{\tau}$.

Figure~\ref{fig:nbp-median-quantregression} shows that all methods except
the Laplace approximation achieve good approximation quality. Variational Bayes
reaches the threshold fastest, in 4.5 seconds. It is followed by EPEL in
18.1 seconds, HMC in 27.1 seconds, and Metropolis--Hastings in 63.3 seconds.
Within the recorded computation times, the smooth approximation to $\rho_{\tau}$
has negligible effect on approximation quality. The resulting approximations
are similar across the applicable methods.

\subsection{Generalized estimating equations}
Generalized Estimating Equations (GEE) is a classical technique for estimating
regression coefficients in longitudinal data analysis. In this example, we adopt
the over-identification case, i.e., the number of estimating equations is
greater than $\dparam$, from \citet{chang18new}. Our data were generated from a
repeated-measures model
$\vy_{i} = \vx^{\top}_{i} \vtheta_{0} + \vepsilon_{i}, i = 1, \ldots, 50$, where
$\vtheta_{0} = (3, 1.5, 0, 0, 2)^{\top}$, $\vy_{i} = (y_{i1}, y_{i2})^{\top}$,
$\vx_{i} = (x_{ijk})^{\dparam, 2}_{j,k=1} \in \R^{\dparam \times 2}$, and
$\vepsilon_{i} = (\epsilon_{i1}, \epsilon_{i2})^{\top}$. The
$(x_{ij1}, \ldots, x_{ijp})^{\top}$ are generated from
$\mathcal{N}(\vzero, \mSigma)$ with
$\mSigma = (0.5^{\lvert k - l \rvert})^{\dparam, \dparam}_{k, l=1} \in \R^{\dparam \times \dparam}$.
The $\vepsilon_{i}$ are generated from a bivariate Gaussian distribution with
zero mean and unit-marginal compound-symmetry covariance matrix with
off-diagonal terms set to $0.7$. We use the quadratic inference function
introduced in \citet{qu00improving} to construct the constraint function
\begin{equation*}
  \cons(\vy, \vx, \vtheta) =
  \begin{bmatrix*}
    \vx \bm{M}_{1} (\vy - \vx^{\top} \vtheta) \\
    \vx \bm{M}_{2} (\vy - \vx^{\top} \vtheta),
  \end{bmatrix*}
\end{equation*}
where $\bm{M}_{1}$ is a two-dimensional identity matrix, and $\bm{M}_{2}$ a
unit-marginal compound-symmetry covariance matrix with off-diagonal terms set to
$0.7$.

In this example, HMC achieves the highest approximation quality and is closely
followed by EPEL; see Figure~\ref{fig:nbp-median-gee}. EPEL reaches the required
threshold in 34.9 seconds. This is roughly one-third of the 122.1 seconds
required by HMC. Neither variational Bayes nor the Laplace approximation attains
sufficient approximation quality within the recorded time grid.
Metropolis--Hastings reaches the threshold only after substantially longer
computation, taking 662.7 seconds.


\subsection{Logistic regression}
We conclude the experiment with logistic regression on the \texttt{O-rings} data
\citep{dalal89risk}. The data contain 23 pre-Challenger space-shuttle missions,
with launch temperature and the number of damaged components recorded for each
mission. We use a binary response indicating whether any component was damaged,
with launch temperature as the only covariate. For logistic regression, we specify
an orthogonality constraint
$\cons(y, \vx, \vtheta) = \vx(y - \expit(\vx^{\top} \vtheta))$. All continuous
covariates are standardized.

\begin{figure*}[t]
  \centering
  \includegraphics[width=\textwidth]{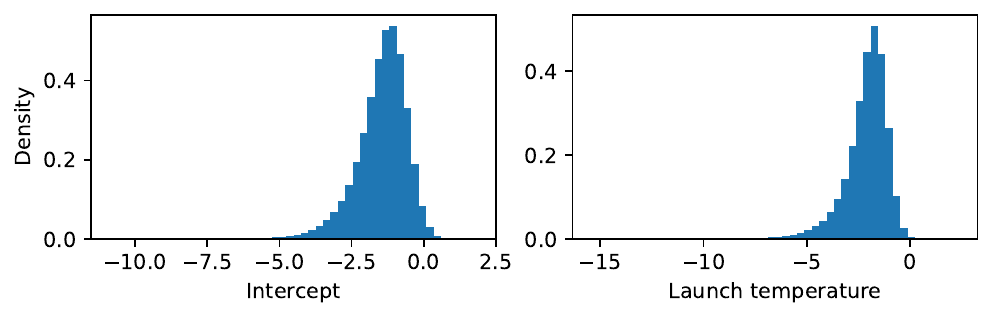}
  \caption{Gold-standard marginal posteriors for the \texttt{O-ring}
    logistic regression.}
  \label{fig:orings-gold-posterior}
\end{figure*}

The main computational challenge in this example is the skewness of the
posterior; see Figure~\ref{fig:orings-gold-posterior}. As expected, both
variational Bayes and the Laplace approximation perform poorly in this setting;
see Figure~\ref{fig:nbp-median-skew-logistic}. Standard EPEL provides a better
approximation, and its quality improves further after applying the
post-processing correction in Section~\ref{sec:skew-post-processing}. The
post-processed EPEL approximation has the best cost--accuracy trade-off. It
requires only 8.1 seconds compared with 28.5 seconds for HMC to reach sufficient
approximation quality (Table~\ref{tab:nbp-threshold-skew}).

\begin{table}[t]
  \centering
  \begin{tabular}{lrrrrr}
    \toprule
    Setup & EPEL & \shortstack[c]{EPEL\\(post-process)} & HMC & MH & VB \\
    \midrule
    Logistic regression & 85.5 & 8.1 & 28.5 & 130.3 & -- \\
    \bottomrule
  \end{tabular}
  \caption{Time (in seconds) to reach sufficient approximation quality,
    defined as the first recorded time at which the 50 replicate NBP statistics
    are significantly above 474 by a one-sided Wilcoxon signed-rank test at the
    5\% level. The EPEL (post-process) column reports EPEL after the skewness
    correction in Section~\ref{sec:skew-post-processing}. Missing entries
    indicate that the method did not attain this threshold within the recorded
    time grid.}
  \label{tab:nbp-threshold-skew}
\end{table}


\section{Discussion}
\label{sec:discussion}
Computing the posterior of Bayesian empirical likelihood is a challenging and
costly endeavour. To address this, we propose EPEL, a computationally efficient
alternative to existing methods. For sufficiently large datasets and under
standard regularity conditions, we show that EPEL and the exact posterior of
Bayesian empirical likelihood are asymptotically equivalent. In the course of showing this asymptotic equivalence, we proposed a set of realistic sufficient conditions for which the Bayesian empirical likelihood is smooth throughout the feasible parameter space. Through extensive
experiments, we also demonstrate that EPEL generally achieves a better
cost–accuracy trade-off than HMC, while avoiding the problematic multi-layered
approximations in variational Bayes empirical likelihood when $\ndata$ is small
relative to $\dparam$. Furthermore, EPEL can be implemented without
differentiating the constraint function when the tilted moments are estimated by
importance sampling, making it useful for models where a smooth approximation of
the constraint is unavailable. Moreover, we demonstrated that the accuracy of EPEL posterior approximation can be further enhanced through post-processing.

Like all methods, EPEL has limitations. First, our theoretical justification
covers only the case in which the approximate posterior is Gaussian. Such
Gaussian approximations may be inappropriate in rare circumstances where the
posterior is a product of multiple empirical likelihoods; see, e.g.,
\citet{chaudhuri11empirical}. Second, the algorithm may occasionally fail due to
non-convergence of the inner-loop optimization used for Laplace approximation.
Third, our proposed approximate posterior is justified only when
$\dparam < \ndata$. We leave extensions to high-dimensional Bayesian empirical
likelihood inference \citep{chang25bayesian} for future work.


\section*{Acknowledgement}
The authors would like to thank Susan Wei, Liam Hodgkinson, Minh-Ngoc Tran and Dino Sejdinovic for helpful comments in the development of this work. Most of this work was conducted while K.N. was at University of Melbourne. K.N. is
supported by the Australian Government Research Training Program and the
Statistical Society of Australia PhD Top-up Scholarship. Computational resources
were provided by the ARDC Nectar Research Cloud and the Melbourne Research
Cloud. H.B. is supported by the Australian Research Council.



\bibliography{main}
\bibliographystyle{dcu-custom}

\newpage

{\LARGE\bfseries Supplementary Material\par}
\vspace{1.5em}

\appendix
\setcounter{theorem}{4}
\setcounter{lemma}{2}

\section{Further numerical considerations for EPEL}
\label{app:sec:epel-practical-notes}

\paragraph{Practical consideration for importance sampler.} We monitor the
effective sample size \citep[Eq. 11.8]{gelman13bayesian} to determine the
quality of the approximations. For numerical stability when computing
$\widetilde{\mSigma}_{\IS}$, we perform QR decomposition on the weighted scatter
matrix $\mS = \mQ \mR$, where each row of $\mS$ is
$\sqrt{\xi^{l}} (\vtheta^{l} - \widetilde{\vmu}_{\IS})^{\top}$. Then,
$\widetilde{\mSigma}_{\IS} = (\sum^{\nis}_{l=1} \xi^{l})^{-1} \mR^{\top} \mR$,
and $\widetilde{\mQ}_{\IS}$ can be obtained directly from $\mR$.

\paragraph{Reducing the number of sites.} To increase computational efficiency,
the individual terms in $\EL(\vtheta)$ can be pooled to reduce the number of
sites. That is, the global approximation is
$q(\vtheta) = \prod_{j=1}^{\nsites} q_{j}(\vtheta)$, where
$\nsites < \ndata + 1$ is the number of sites and each of these $q_{j}(\vtheta)$
will correspond to multiple~$w_{i}$.

\paragraph{`Warm-up' cycles using Laplace approximate updates.}
The expectation-propagation algorithm is widely known to be numerical unstable,
especially with poor initializations. To improve numerical stability, we
introduce warm-up cycles in the EPEL procedure with the moments of the Laplace
approximate posterior as the updates for $\widetilde{\veta}_{i}$. Once
$\|\Delta \veta_{i}\|$ is small, we switch over to importance sampling to ensure
we are indeed approximating the moments of the tilted distributions.

\paragraph{Positive-definiteness of the precision of Laplace's approximation.}
Laplace's approximation requires a numerical optimizer to compute the mode of
the target distribution. Due to machine precision errors, the optimizer may not
converge sufficiently close to the mode, and the Hessian at the optimizer
solution can be substantially different from that at the true mode. The negative
Hessian may not even be positive-definiteness, yielding an ill-defined
approximation. When this occurs when approximating $q_{\backslash i}$, we use
importance sampling as a fallback, using either the cavity $q_{-i}$ or the
current global approximation $q^{t}$ as the proposal; the latter is guaranteed
to be a proper by construction.

\paragraph{Positive‑definiteness of the global precision.}
Since $\mQ^{t}$ is updated with addition and subtraction, the new update
$\mQ^{t+1}$ may not be positive‑definite. We enforce positive‑definiteness by
starting from a positive‑definite $\mQ^{0}$ and dynamically decreasing the
damping factor if the proposed update $\sum_{i} \Delta \veta_{i}$ would render
$\mQ^{t+1}$ improper. Nonetheless, this procedure does not guarantee
positive‑definiteness of each site precision $\mQ_{i}$, which is acceptable as
the site approximations are intermediate quantities
\citep{vehtari20expectation}.

\section{Further Comments of Assumption~\ref{app:item:h-location}}
\label{app:sec:assump2-comments}
Assumption~\ref{app:item:h-location} is in fact a \emph{compatibility} assumption
imposed on the combination of the true data-generating process $F_{\vz}$, the
specified parameter space $\mTheta$ and the moment conditions $h$. Intuitively,
it says that the specified parameter space \emph{cannot} be larger than the
\emph{induced parameter space}, i.e., the set of possible values of $\vtheta$ induced by the true data-generating
distribution. In particular, consider the data-generating distribution $F_\vz$
and $h$ such that there exists a $\vtheta^\star \in \bR^p$ that satisfies
$\bE \{ h(\vz,\vtheta^\star) \} = \vzero$. The induced parameter space of
$(F_\vz, h)$ is
\begin{equation*}
  \mTheta^{F_{\vz}, h} = \{ \vtheta \in \bR^p \, : \, h(\vz, \vtheta) \text{ satisfies (*)} \},
\end{equation*}
where (*) is the following condition: \emph{If $h(\vz, \vtheta)$ is a discrete
  random vector, there exists
  $\sP_{\vtheta} = \{ g_{1,\vtheta}, \ldots, g_{L,\vtheta} \}$ such that
  $\min_{a \in [L]} \bP^\star \{ h(\vz,\vtheta) = g_{a,\vtheta} \} > 0$ and
  $\vzero_K$ is in the interior of the convex hull of $\sP_{\vtheta}$. If
  $h(\vz, \vtheta)$ is a continuous random vector, there exists closed and
  connected sets $\sG_{1,\vtheta}, \ldots, \sG_{L,\vtheta}$ such that every
  combination of points in the respective sets
  $\{ g_a \in \sG_{a,\vtheta} \}_{a =1}^{L} $ form a convex hull containing
  $\vzero_K$ in its interior and $\vzero_K \notin \sG_{a,\vtheta}$. Moreover,
  the induced density of $h(\vz,\vtheta)$ is strictly positive for all points in
  $\bigcup_{a=1}^{L} \sG_{a,\vtheta}$ and
  $\min_{a \in [L]} \bP^\star \{ h(\vz,\vtheta) \in \sG_{a,\vtheta} \} > 0$.}
Then, assumption~\ref{app:item:h-location} is violated whenever $\mTheta$ is not a
subset of $\mTheta^{F_{\vz}, h}$. While this compatibility assumption is
non-restrictive, we provide some conceivable examples where it may be violated.
A common issue among these examples is a misspecification of the data space.

\subsection*{Inference of Pareto means}
Suppose the observed data $z_i$ are i.i.d.~$\mathrm{Pareto}( x_m, \alpha^\star)$
for some $x_m > 0$ and $\alpha^\star > 0$ and the modeller specified the moment
condition function $h(z, \theta) = z - \theta$. Here, the accommodated parameter
space is $\mTheta^{F_{\vz}, h} = ( x_m , \infty)$. On the other hand, if the
modeller misjudged the lower bound of the data support so that the specified the
parameter space is $\mTheta = [ x_m/2, \vtheta_{\high} ]$ for some
$\vtheta_{\high} \gg x_m /2$. Then, $\mTheta$ is clearly not a subset of
$\mTheta^{F_{\vz}, h}$. In particular, the distribution of $h(\vz, x_m /2)$ does
not satisfy (*).

\subsection*{Regression with bounded response and covariate support}
Consider the observed data $\{(x_i, y_i)\}_{i=1}^n$ that is generated independently through the following regression model: $y = \theta_0^\star + \theta_1^\star x + \epsilon^\star$, where the support of the covariate is $[x_{\min}, x_{\max}]$ and the support of the random error is $[\epsilon_{\low}, \epsilon_{\high}]$ for some $\epsilon_{\low} < 0 < \epsilon_{\high}$ and $x_{\min} >0$. Suppose the modeller specifies the moment condition
\begin{equation*}
  h(x_i, y_i, \vtheta) =
  \begin{pmatrix}
    y_i - \theta_0 - \theta_1 x_i \\
    x_i (y_i - \theta_0 - \theta_1 x_i)
  \end{pmatrix}
\end{equation*}
and
$\mTheta = [ \theta_{0; \low}, \theta_{0; \high}] \times [ \theta_{1; \low}, \theta_{1; \high}]$,
where $\theta_{0; \low} < \theta_0^\star < \theta_{0; \high}$ and
$\theta_{1; \low} < \theta_1^\star < \theta_{1; \high}$. Since $x$ is strictly
positive with probability one, a necessary and sufficient condition for which
assumption~\ref{app:item:h-location} is violated is whenever we can find a point $\vtheta$ in the specified parameter space $\mTheta$ such that $y - \theta_0 - \theta_1 x > 0$ for all $x \in [x_{\min}, x_{\max}]$ and $\epsilon^\star \in [\epsilon_{\low}, \epsilon_{\high}]$ or $y - \theta_0 - \theta_1 x < 0$ for all $x \in [x_{\min}, x_{\max}]$ and $\epsilon^\star \in [\epsilon_{\low}, \epsilon_{\high}]$.
These are equivalent to the conditions
$\epsilon^\star > (\theta_0 - \theta_0^\star) + x(\theta_1 - \theta_1^\star)$ for all $x \in [x_{\min}, x_{\max}]$ or $\epsilon^\star < (\theta_0 - \theta_0^\star) + x(\theta_1 - \theta_1^\star)$ for all $x \in [x_{\min}, x_{\max}]$.
Hence, a sufficient condition for which $\mTheta$ is not a subset of the
accommodated parameter space $\mTheta^{F_{x,y},h}$ is
\begin{equation*}
  \epsilon_{\low} > (\theta_{0, \low} - \theta_0^\star) + (\theta_{1, \low} - \theta_1^\star) x_{\min}
\end{equation*}
or
\begin{equation*}
  \epsilon_{\high} < (\theta_{0, \high} - \theta_0^\star) + (\theta_{1, \high} - \theta_1^\star) x_{\min}.
\end{equation*}
\section{Asymptotics of EPEL}
\subsection{Notations and assumptions}
Suppose we have i.i.d.~data $\sD_{n} = \{ \vz_i \}_{i=1}^n$ and we are interested in
inferring an unknown parameter $\vtheta$ such that
\begin{equation*}
  \bE \left \{ h (\vz_i, \vtheta) \right \} = \vzero,
\end{equation*}
for some moment  condition function $h: \sZ \times \mTheta \rightarrow \bR^K$ for some data space $\sZ$ and parameter space $\mTheta$. The log-empirical likelihood function is given by
\begin{equation*}
  \log \EL (\vtheta; \sD_{n}) = \sum_{i=1}^n \log \{ w_i (\vtheta; \sD_{n}) \},
\end{equation*}
where $w_i (\vtheta; \sD_{n}) = \{ n + n \vlambda (\vtheta) ^\top h (\vz_i, \vtheta) \}^{-1}$ if there exists a solution $\vlambda (\vtheta) \in \bR^K$ that satisfies
\begin{equation*}
  \sum_{i=1}^n \frac{h (\vz_i, \vtheta)}{1 + \vlambda^\top h (\vz_i, \vtheta) } = \vzero
\end{equation*}
and $w_i (\vtheta; \sD_{n}) = 0$ otherwise. The support of the empirical likelihood
posterior is denoted as
\begin{equation*}
  \mTheta_{B;n} = \left \{ \vtheta : \text{there exists weights $\vw$ such that} \; \sum_{i=1}^n w_i h(\vz_i, \vtheta), \; w_ i > 0, \; \text{and} \; \sum_{i=1}^n w_i = 1  \right \}.
\end{equation*}

Furthermore, let $\phi_i (\vtheta; \sD_{n}) = - \log \{ w_i (\vtheta; \sD_{n}) \}$,
$\phi_{0}(\vtheta) = - \log p(\vtheta)$, and
$\psi (\cdot) = \sum_{i=0}^n \phi_i (\cdot)$.

We make the following assumptions:
\begin{enumerate}[label=(\Roman*)]
  \item The parameter space $\mTheta$ is bounded and
        $\vtheta^\star$ is a unique interior point. \label{app:item:theta-bound}
  \item For countable $h(\sZ,\vtheta)= \{h(\vz, \vtheta): \vz \in \sZ \}$: for
        each $\vtheta \in \mTheta$, there exist $L \geq K + 1$ non‑zero vectors
        $\sP_{\vtheta} = \{ g_{1,\vtheta}, \ldots, g_{L,\vtheta} \}$ whose
        convex hull contains $\vzero_K$ in its interior; and
        $\min_{a \in [L]} \bP^\star \{ h(\vz,\vtheta) = g_{a,\vtheta} \} > 0$.
        For uncountable $h(\sZ,\vtheta)$: for each $\vtheta \in \mTheta$, there
        exist $L \geq K + 1$ closed, connected sets
        $\sG_{1,\vtheta}, \ldots, \sG_{L,\vtheta}$ such that any
        $\{ g_a \in \sG_{a,\vtheta} \}_{a=1}^{L}$ have a convex hull with
        $\vzero_K$ in its interior and $\vzero_K \notin \sG_{a,\vtheta}$; the
        induced density of $h(\vz,\vtheta)$ is strictly positive on
        $\bigcup_{a=1}^{L} \sG_{a,\vtheta}$; and
        $\min_{a \in [L]} \bP^\star \{ h(\vz,\vtheta) \in \sG_{a,\vtheta} \} > 0$.
        \label{app:item:h-location}
  \item The first, second, and third derivatives of $h (\vz, \vtheta)$ with
        respect to $\vtheta$ are continuously differentiable on $\mTheta$ for
        all $\vz \in \sZ$.
        \label{app:item:h-smoothness}
  \item The prior $p(\vtheta)$ is positive on a neighbourhood of
        $\vtheta^\star$. Moreover, there exists $M_p > 0$ such that, for all
        $(a, b, c, d) \in \{ 1, \ldots, p \}^4$,
        \begin{equation*}
          \left| \diffp{\log p(\vtheta)}{\theta_a, \theta_b, \theta_c, \theta_d} \right|
          \le M_p \quad \text{for all } \vtheta \in \mTheta .
        \end{equation*} \label{app:item:prior-smoothness}
\end{enumerate}

\subsection{Smoothness of the empirical likelihood}
\label{app:sec:smoothness-el}
In parametric likelihood inference, the likelihood function is smooth throughout
the parameter space $\mTheta$ for any finite sample size $n \ge 1$. The same
cannot be said for empirical likelihood inference. In fact, for a finite sample
size, the empirical likelihood function may not be smooth in regions of
$\mTheta$ that are not near $\vtheta^\star$. In the first segment of this
section, we prove that the empirical likelihood function is non-smooth
throughout $\mTheta$ finitely often.

\begin{lemma}
  \label{app:thm:single-point-supported}
  Assume~\ref{app:item:h-location} holds. Then, for any $\vtheta \in \mTheta$,
  \begin{equation*}
    \bP^\star \left \{ \vtheta \notin \operatorname{int}(\mTheta_{B;n}) \; \emph{finitely often} \right \} = 1.
  \end{equation*}
\end{lemma}
\begin{proof}
  Denote
  $\sE_{n,\vtheta} = \{ \vtheta \notin \operatorname{int}(\mTheta_{B;n}) \}$.
  By Borel-Cantelli Lemma, we need only to show that
  \begin{equation*}
    \sum_{n=1}^\infty \bP^\star (\sE_{n,\vtheta}) < \infty.
  \end{equation*}
  For each $n \le L-1$, we have $\bP^\star (\sE_{n,\vtheta}) \le 1$. Hence,
  \begin{equation*}
    \sum_{n=1}^\infty \bP^\star (\sE_{n,\vtheta}) \le L - 1 + \sum_{n=L}^\infty \bP^\star (\sE_{n,\vtheta}).
  \end{equation*}
  Next, we bound $\sum_{n=L}^\infty \bP^\star \left (  \sE_{n,\vtheta} \right )$. In the countable $h(\sZ,\vtheta)$ case, we let
  \begin{equation*}
    \sF_{n,\vtheta} = \{ h(\vz_{n-L+1},\vtheta) = g_{1,\vtheta}, \, \ldots \, ,h(\vz_{n},\vtheta) = g_{L,\vtheta} \}.
  \end{equation*}
  Observe that $\bigcup_{m=L}^n \sF_{m,\vtheta} \subseteq \sE_{n,\vtheta}^c$
  for all $n \ge L$ and hence
  $\sE_{n,\vtheta} \subseteq \bigcap_{m=L}^n \sF_{m,\vtheta}^c$. Then,
  \begin{align*}
    \bP^\star ( \sE_{n,\vtheta} )
     & \le \bP^\star \left( \textstyle \bigcap_{m=L}^n \sF_{m,\vtheta}^c \right)                                   \\
     & \le \bP^\star \left( \textstyle \bigcap_{m=0}^{ \lfloor (n-L)/L \rfloor } \sF_{ (m+1) L ,\vtheta}^c \right) \\
     & = \prod_{m=0}^{ \lfloor (n-L)/L \rfloor } \bP^\star ( \sF_{ (m+1) L ,\vtheta}^c ).
  \end{align*}
  The final equality is due to the independence among
  $\sF_{L ,\vtheta}, \sF_{2L ,\vtheta}, \sF_{3L ,\vtheta}, \ldots$. Now, for
  any $n \ge L$, we have
  $\bP(\sF_{n,\vtheta}^c) = 1 - \bP^\star ( h(\vz,\vtheta) = g_{1,\vtheta} ) \times \ldots \times\bP^\star ( h(\vz,\vtheta) = g_{L,\vtheta} )\le 1 - \inf_{\vtheta \in \mTheta} \bP^\star ( h(\vz,\vtheta) = g_{1,\vtheta} ) \times \ldots \times\bP^\star ( h(\vz,\vtheta) = g_{L,\vtheta} ) < 1$,
  where the last inequality follows from~\ref{app:item:h-location}. Hence,
  \begin{align*}
    \sum_{n=L}^\infty \bP^\star (\sE_{n,\vtheta})
     & \le \sum_{n=L}^\infty \prod_{m=0}^{ \lfloor (n-L)/(L) \rfloor  } \bP^\star ( \sF_{ (m+1)L,\vtheta}^c ) \\
     & \le L \times \frac{  1 - \inf_{\vtheta \in \mTheta} \bP^\star ( h(\vz,\vtheta)
      = g_{1,\vtheta} ) \times  \ldots \times\bP^\star ( h(\vz,\vtheta)
      = g_{L,\vtheta} ) }{ \inf_{\vtheta \in \mTheta} \bP^\star ( h(\vz,\vtheta)
      = g_{1,\vtheta} ) \times  \ldots \times\bP^\star ( h(\vz,\vtheta)
      = g_{L,\vtheta} )  } < \infty.
  \end{align*}
  In the uncountable $h(\sZ,\vtheta)$ case, the required result holds by replacing $\sF_{n,\vtheta}$ with
  \begin{equation*}
    \sF_{n,\vtheta} = \{ h(\vz_{n-L+1},\vtheta) \in \sG_{1,\vtheta}, \, \ldots \, ,h(\vz_{n},\vtheta) \in \sG_{L,\vtheta} \}.
  \end{equation*}
  and $\bP^\star ( h(\vz,\vtheta)  = g_{a,\vtheta} )$ with $\bP^\star ( h(\vz,\vtheta)  \in  \sG_{a,\vtheta} )$ for every $a \in [L]$.
\end{proof}
Note that Lemma~\ref{app:thm:single-point-supported} may be compared with equation
(2.7) of \citet{owen90empirical} which establishes that
$\vtheta^\star \notin \operatorname{int}(\mTheta)$ finitely often.
\begin{lemma}
  \label{app:thm:lemma-truth-supported}
  Assume ~\ref{app:item:theta-bound}, ~\ref{app:item:h-location}, and~\ref{app:item:h-smoothness} holds. Then
  \begin{equation*}
    \bP^\star \left \{ \mTheta \nsubseteq \operatorname{int} ( \mTheta_{B;n} ) \; \emph{finitely often}  \right \} = 1.
  \end{equation*}
\end{lemma}
\begin{proof}
  Let $\sA_n = \{ \mTheta \nsubseteq \operatorname{int} ( \mTheta_{B;n} ) \}$.
  Similar to Lemma~\ref{app:thm:single-point-supported}, it suffices to show that
  \begin{equation*}
    \sum_{n=1}^\infty \bP^\star (\sA_n) < \infty.
  \end{equation*}
  For each $n \le L - 1$, we have $\bP^\star (\sA_n) \le 1$. Hence,
  \begin{equation*}
    \sum_{n=1}^\infty \bP^\star (\sA_n) \le L - 1 + \sum_{n=L}^\infty \bP^\star (\sA_n).
  \end{equation*}
  Note that it is difficult to directly analyse
  $\sA_n = \bigcup_{\vtheta \in \mTheta} \sE_{n,\vtheta}$ as it is an
  uncountable union of sets. To circumvent this issue, we express $\mTheta$ as
  a finite union of small compact balls (by assumption \ref{app:item:theta-bound}) and study a sufficient condition for which
  each of these balls are wholly included in the posterior support
  $\mTheta_{B;n}$. For any $\epsilon > 0$, consider a finite sequence of
  equal-sized possibly overlapping balls
  $\{ \overline{\sB}_{\epsilon} ( \vtheta^{(r)} ) \}_{r=1}^R$ such that each
  ball has radius $\epsilon$, $\vtheta^{(r)} \in \operatorname{int}(\mTheta)$,
  and
  $\mTheta \subseteq \bigcup_{r=1}^R \overline{\sB}_{\epsilon} ( \vtheta^{(r)} )$.
  \footnote{Strictly speaking, $R$ depends on $\epsilon$ but we omit this
    dependence from its notations for brevity. Also, for any $\epsilon > 0$,
    the integer $R$ is always finite.} For each ball, define a corresponding
  sliced ball as
  $\sB_{\epsilon } ( \vtheta^{(r)} ) = \overline{\sB}_{\epsilon} ( \vtheta^{(r)} ) \cap \mTheta$.
  Then, it suffices to show that
  \begin{equation*}
    \sum_{r=1}^R  \sum_{n = L}^\infty \bP \left \{ \sB_{ \epsilon  } ( \vtheta^{(r)} ) \nsubseteq \operatorname{int} ( \mTheta_{B;n} ) \right \} < \infty
  \end{equation*}
  for some sufficiently small $\epsilon$. The key to establishing the aforementioned inequality is to carefully ``tune" $\epsilon$ and then show that
  $\vtheta^{(r)} \in \operatorname{int}(\mTheta_{B;n})$ implies
  $\sB_{ \epsilon } ( \vtheta^{(r)} ) \subseteq \operatorname{int} ( \mTheta_{B;n} )$.
  In the countable $h(\sZ,\vtheta)$ case, we set
  \begin{equation*}
    0 < \epsilon < \frac{ \inf_{\vtheta \in \mTheta} \{ \text{shortest distance between origin and convex hull of } \sP_{\vtheta}  \}  }{2 b} ,
  \end{equation*}
  where
  $b = \sup_{\vtheta \in \mTheta, \vz \in \sZ} \lVert h^{(1)}( \vz, \vtheta ) \rVert$
  and we know from assumption~\ref{app:item:h-smoothness} that $b < \infty$. Note that $h^{(1)}$ denotes the derivative of $h$ with respect to $\vtheta$. By the
  smoothness of $h(\vz, \vtheta)$ in $\vtheta$, any perturbation of size
  $\lVert \vu \rVert \le \epsilon$ to $\vtheta$ defines a mapping
  $\sM_{\vtheta \to \vtheta + \vu}$ from a point in $\bR^K$ to another point
  $ \bR^K$. For example,
  $\sM_{\vtheta \to \vtheta + \vu} h(\vz, \vtheta) = h(\vz, \vtheta+\vu)$.
  Here, by using assumption~\ref{app:item:h-smoothness} and a Taylor's expansion,
  we bound the distance between the mapping input and output by
  \begin{equation*}
    \lVert h(\vz, \vtheta + \vu) - h(\vz, \vtheta) \rVert \le \epsilon b < \frac{ \inf_{\vtheta \in \mTheta} \{ \text{shortest distance between origin and convex hull of } \sP_{\vtheta}  \}  }{2}.
  \end{equation*}
  Now, since (i) $\vzero$ is in the interior of the convex hull of $\sP_{\vtheta}$;
  (ii) the mapped output $\sM_{\vtheta \rightarrow \vtheta + \vu} \sP_{\vtheta}$
  are no further than $\tfrac{1}{2} \times$ shortest distance between origin
  and convex hull of $\sP_{\vtheta}$ from their respective input, consequently the
  interior of the convex hull of
  $\sM_{\vtheta \rightarrow \vtheta + \vu} \sP_{\vtheta}$ contains $\vzero$.
  Hence we have proved that $\vtheta^{(r)} \in \operatorname{int}(\mTheta_{B;n})$
  implies
  $\sB_{ \epsilon } ( \vtheta^{(r)} ) \subseteq \operatorname{int} ( \mTheta_{B;n} )$.
  Consequently,
  \begin{equation*}
    \sum_{n = L}^\infty \bP \left \{ \sB_{ \epsilon  } ( \vtheta^{(r)} ) \nsubseteq \operatorname{int} ( \mTheta_{B;n} ) \right \} \le \sum_{n = L}^\infty \bP \left \{ \vtheta^{(r)} \notin \operatorname{int}(\mTheta_{B;n}) \right \}.
  \end{equation*}
  By following steps in Lemma~\ref{app:thm:single-point-supported}, we have
  $\sum_{r=1}^R \sum_{n = L}^\infty \bP \left \{ \vtheta^{(r)} \notin \operatorname{int}(\mTheta_{B;n}) \right \} < \infty$.
  Thus, we have proven our required result for the countable $h(\sZ,\vtheta)$
  case. In the uncountable $h(\sZ,\vtheta)$ case, let
  $\sG_{\vtheta} = \sG_{1,\vtheta} \times \ldots \times \sG_{L,\vtheta}$. Then,
  by assumption~\ref{app:item:h-location}, $\vzero$ is contained in the interior of the convex hull
  of any $\vg \in \sG_{\vtheta}$ for any $\vtheta \in \mTheta$. In this case, we
  set
  \begin{equation*}
    0 < \epsilon < \frac{ \inf_{\vtheta \in \mTheta, \vg \in \sG_{\vtheta}} \{ \text{shortest distance between origin and convex hull of } \vg  \}  }{2 b}.
  \end{equation*}
  Following similar arguments to the countable $h(\sZ,\vtheta)$ case, any
  perturbation $\lVert \vu \rVert \le \epsilon$ to $\vtheta$ defines a mapping
  $\sM_{\vtheta \rightarrow \vtheta + \vu}$ from a point in $\bR^K$ to another
  point $ \bR^K$. In fact, we have
  \begin{equation*}
    \lVert h(\vz, \vtheta + \vu) - h(\vz, \vtheta) \rVert
    \le \epsilon b
    < \frac{ \inf_{\vtheta \in \mTheta, \vg \in \sG_{\vtheta}} \{ \text{shortest distance between origin and convex hull of } \vg \}}{2}.
  \end{equation*}
  Now, consider any $\vg \in \sG_{\vtheta}$. If (i) $\vzero$ is in the interior
  of the convex hull of $\vg$; (ii) the mapped output
  $\sM_{\vtheta \rightarrow \vtheta + \vu} \vg$ are no further than
  $\tfrac{1}{2} \times$ shortest distance between origin and convex hull of
  $\vg$ from their respective input, then the interior of the convex hull of
  $\sM_{\vtheta \rightarrow \vtheta + \vu} \vg$ contains $\vzero$. Hence, we
  have proved that $\vtheta^{(r)} \in \operatorname{int}(\mTheta_{B;n})$ implies
  $\sB_{ \epsilon } ( \vtheta^{(r)} ) \subseteq \operatorname{int} ( \mTheta_{B;n} )$.
  By applying Lemma~\ref{app:thm:single-point-supported} in a similar way to the
  countable case, we have proven our result for the uncountable
  $h(\sZ, \vtheta)$ case.
\end{proof}
From the previous lemma, we can deduce that $\mTheta \subseteq \operatorname{int}(\mTheta_{B;n})$ for
a sufficiently large $n$. Under this paradigm, we establish the smoothness of
each site over the entire parameter space $\mTheta$. Details are provided in the
next result.


\begin{theorem}
  \label{app:thm:continuous-diff}
  Assume \ref{app:item:theta-bound} to \ref{app:item:h-smoothness} hold. Then, for a sufficiently large $n$,
  \begin{equation*}
    \left| \diffp{\phi_i (\vtheta; \sD_n)}{\theta_a, \theta_b, \theta_c, \theta_d} \right| < \infty,
    \quad \emph{for all} \
    (a, b, c, d) \in \{1, \dots, p \}^4, \
    i = 1, \ldots, n, \
    \vtheta \in \mTheta.
  \end{equation*}
\end{theorem}
\begin{proof}
  From Lemma~\ref{app:thm:lemma-truth-supported}, we know that each $\vtheta$ in
  $\mTheta$ corresponds to a set of finite and non-zero $w_{1}, \ldots, w_{n}$
  when $n$ is sufficiently large. Then, $\phi_i (\vtheta; \sD_n) = - \log w_i$
  is at least fourth-differentiable with respect to $w_{i}$, and
  $w_{i} = n^{-1}(1 + \vlambda^{\top} \vh_{i})^{-1}$ is at least
  fourth-differentiable with respect to $\vlambda^{\top} \vh_{i}$. From
  \ref{app:item:h-smoothness}, the constraint $\vh_{i}$ is at least
  fourth-differentiable, so it is sufficient to show that
  \begin{equation*}
    \left| \diffp{\lambda_k}{\theta_a, \theta_b, \theta_c, \theta_d} \right| < \infty,
    \quad \text{for all} \
    (a, b, c, d) \in \{1, \dots, p \}^4, \
    k = 1, \ldots, K, \
    \vtheta \in \mTheta,
  \end{equation*}
  is differentiable for each entries of $\vlambda = (\lambda_1, \ldots, \lambda_K)$. Let
  $\widetilde{\mH} = [w_{1} \vh_{1}, \ldots, w_{n} \vh_{n}]$. We apply the implicit
  function theorem on
  \begin{equation}
    \sum_{i=1}^{n} \frac{\vh_{i}}{1 + \vlambda^{\top} \vh_{i}} = \vzero.
  \end{equation}
  to get the first-order derivatives,
  \begin{equation}
    \label{app:eq:deriv-proof-1}
    \widetilde{\mH} \widetilde{\mH}^{\top}
    \diffp{\vlambda}{\theta_a}
    = \sum_{i=1}^{n} w_{i} \left( n^{-1} \mI
    - w_{i} \vh_{i} \vlambda^{\top}\right) \diffp{\vh_{i}}{\theta_a}.
  \end{equation}
  The right-hand side is finite, and we only need to show that
  $\widetilde{\mH} \widetilde{\mH}^{\top}$ is invertible.
  Assumption~\ref{app:item:h-location} implicitly implies that
  $\vh_{1}, \ldots, \vh_{n}$ spans $\R^{K}$ when $n$ is sufficiently large
  \footnote{Any $\vh_{1}, \ldots, \vh_{n}$ that does not span $\R^{K}$ can never
    contain $\vzero_{K}$ in the interior of its convex hull. More formally, we
    show that $\{\vh_1, \ldots, \vh_n\}$ spans $\bR^K$ if $\vzero_{K}$ is in the
    interior of the convex hull of $\{\vh_1, \ldots, \vh_n\}$. Since
    $\vzero_{K}$ is in the interior of the convex hull of
    $\{\vh_1, \ldots, \vh_n\}$, there exists a ball of radius $\xi$ centered at
    $\vzero_{K}$ such that the ball is a subset of the interior of the convex
    hull of $\{\vh_{1}, \ldots, \vh_{n}\}$. Now, any point in $\vv \in \bR^K$,
    we can write
    $\vv = \frac{\lVert \vv \rVert}{\xi} (\xi \frac{\vv}{\lVert \vv \rVert})$.
    Since $\xi \frac{\vv}{\lVert \vv \rVert}$ is a surface of the ball, then
    $\xi \frac{\vv}{\lVert \vv \rVert} = w_1 \vh_1 + \dots + w_n \vh_n$ where
    each $w_i$ is strictly positive. Then,
    $\vv = \frac{\lVert \vv \rVert}{\xi} (w_1 \vh_1 + \dots + w_n \vh_n)$.
    Hence, $\{\vh_1, \ldots \vh_n\}$ spans $\bR^K$.} Therefore,
  $\rank(\widetilde{\mH}) = \rank(\widetilde{\mH}\widetilde{\mH}^\top) = K$,
  $\widetilde{\mH}\widetilde{\mH}^{\top}$ is invertible, and
  $\partial \vlambda / \partial \theta_a$ is finite.

  For second-order derivatives, differentiate both side
  of~\eqref{app:eq:deriv-proof-1} with respect to $\theta_b$ and we get
  \begin{equation}
    \label{app:eq:deriv-proof-2}
    \widetilde{\mH}\widetilde{\mH}^{\top}
    \diffp{\vlambda}{\theta_a,\theta_b}
    =
    - \diffp{\widetilde{\mH}\widetilde{\mH}^{\top}}{\theta_b} \diffp{\vlambda}{\theta_a}
    + \diffp{}{\theta_b} \left\{ \sum_{i=1}^{n} w_{i} \left( n^{-1} \mI
    - w_{i} \vh_{i} \vlambda^{\top}\right) \diffp{\vh_{i}}{\theta_a} \right\}.
  \end{equation}
  As pointed out in \eqref{app:eq:deriv-proof-1} the matrix
  $\widetilde{\mH}\widetilde{\mH}^{\top}$ is invertible. The right-hand side
  only involves $\diffp{\vh_{i}}{\theta_a, \theta_b}$ and first-order
  derivatives of $w_{i}$ and $\vlambda$ with respect to $\vtheta$. The
  derivative $\diffp{\vh_{i}}{\theta_a, \theta_b}$ is finite by the smoothness
  assumption on $h$. The first-order derivative of $\vlambda$ has been proven to
  be finite and, as a direct consequence, the first-order derivative of $w_i$ is
  also finite. Therefore, $\diffp{\vlambda}{\theta_a, \theta_b}$ is also finite.

  We can differentiate both side of \eqref{app:eq:deriv-proof-2} to obtain the
  expression for third-order derivatives, and differentiate the subsequent
  expression again to obtain the fourth-order derivatives. The highest order
  derivatives in these expressions are
  $\diffp{\vh_{i}}{\theta_a, \theta_b, \theta_c, \theta_d}$ and the third-order
  derivatives of $w_{i}$ and $\vlambda$ with respect to $\vtheta$. These are all
  finite, and thus $\diffp{\vlambda}{\theta_a, \theta_b, \theta_c, \theta_d}$ is
  also finite.
\end{proof}
In the rest of the proof, we let $\phi^{(t)}(\vtheta)$ denote the $t$-order
tensor derivative of the scalar-valued function $\phi$. For example,
$\phi^{(2)}(\vtheta)$ denotes the Hessian matrix
$\tfrac{\partial \phi(\vtheta)}{\partial \vtheta \partial \vtheta^\top}$. Let
$\lVert \cdot \rVert$ denote the Euclidean norm of a vector, matrix, third or
fourth-order tensor. An obvious consequence of
Lemma~\ref{app:thm:lemma-truth-supported} and Theorem~\ref{app:thm:continuous-diff} is
the following upper bound on the respective Euclidean distance of the higher
derivatives:
\begin{corollary}
  \label{app:thm:useful-bound}
  Assume~\ref{app:item:theta-bound} to~\ref{app:item:prior-smoothness} hold. For a
  sufficiently large $n$, we have positive constants $K_2$, $K_3$, and $K_4$
  such that for all $i = 0, \ldots, n$:
  \begin{equation*}
    \sup_{\vtheta \in \mTheta} \left \lVert \phi_i^{(2)}(\vtheta)  \right \rVert \le K_2, \quad
    \sup_{\vtheta \in \mTheta} \left \lVert \phi_i^{(3)}(\vtheta) \right \rVert \le K_3, \quad
    \sup_{\vtheta \in \mTheta} \left \lVert \phi_i^{(4)}(\vtheta) \right \rVert \le K_4,
  \end{equation*}
  where $\lVert \cdot \rVert$ denotes tensor Euclidean norms.
\end{corollary}
Throughout the rest of the work, we will denote $K_M = \max \{ K_2, K_3, K_4 \}$.

\subsection{Asymptotic equivalence to Newton-Raphson updates}
\label{app:sec:equiv-nr-update}
In this segment of the proof, we works towards a result that establishes the
asymptotic equivalence ($n \rightarrow \infty$) of one update cycle of the EP
algorithm and one update cycle of the Newton-Raphson algorithm for
\emph{maximum-a-posteriori} (MAP) computation.

More concretely, at the $t$-th iteration, denote the linear shift and precision
of the $i$-th site approximations as $\vr_{i}^{t}$ and $\mQ_{i}^{t}$
respectively. We will show that an EP update of the global approximation at the
$i$-th iteration, $\vr^{t} = \sum_{i=0}^{n} \vr_{i}^{t}$ and
$\mQ^{t} = \sum_{i=0}^{n} \mQ_{i}^{t}$, is asymptotically equivalent to
performing an update on $\vmu^{t} = (\mQ^{t})^{-1} \vr^{t}$ with the following
Newton-Raphson update:
\begin{equation}
  \label{app:eq:one-step-newton-update}
  \vmu^{t+1} =  \vmu^{t} - \{ \psi^{(2)}(\vmu^{t}) \}^{-1} \psi^{(1)}(\vmu^{t}),
\end{equation}
then setting $\mQ^{t+1} = \psi^{(2)}(\vmu^{t})$ and
$\vr^{t+1} = \mQ^{t+1} \vmu^{t} - \psi^{(1)}(\vmu^{t})$.

To do so, we need to analyse the tilted distribution of EP. We start with the
cavity distribution at the $t$-th iteration, which we parameterise in the
following form
\begin{equation*}
  q_{-i} (\vtheta) = (2 \pi)^{-p/2} \det (\vbeta )^{1/2} \exp \left [ - \tfrac{1}{2} \left \{ \vtheta - (\vmu^{t} - \vbeta^{-1} \vdelta)  \right \}^\top \vbeta \left \{ \vtheta - (\vmu^{t} - \vbeta^{-1} \vdelta)  \right \}   \right ].
\end{equation*}
where $\vbeta = \sum_{j \neq i} \mQ_{j}^{t}$ is a $p \times p$ cavity precision
matrix, and the $p$-dimensional cavity mean, $\vmu^{t} - \vbeta^{-1} \vdelta$,
is expressed as a small deviation $\vbeta^{-1} \vdelta$ from an initial mean
estimate $\vmu^{t}$. Similarly, the linear-shift parameter of the cavity
distribution can be expressed as $\vbeta \vmu^{t} - \vdelta$. Both $\vbeta$ and
$\vdelta$ differ across sites but we omit the subscript~$i$ to avoid cumbersome
notation \footnote{In fact, $\vdelta := \vdelta_i = \vbeta \vmu^t - (\vr^t - \vr_i^t)$}. For each site $i= 0,\ldots, n$ , the corresponding tilted distribution
is
\begin{equation*}
  q_{\backslash i} ( \vtheta ) \propto w_i (\vtheta; \sD_{n}) q_{-i} (\vtheta),
\end{equation*}
with mean $\vmu_{\backslash i} = \E_{\vtheta \sim q_{\backslash i}} [\vtheta]$
and precision
$\mQ_{\backslash i} = \E_{\vtheta \sim q_{\backslash i}} [(\vtheta - \vmu_{\backslash i})(\vtheta - \vmu_{\backslash i})^{\top}]^{-1}$.
We use $w_0$ to denote the prior $p(\vtheta)$.

For two compatible square matrices $\mA$ and $\mC$, we write $\mA \ge \mC$ if
$\mA - \mC$ is positive semidefinite and $\mA \le \mC$ if $\mA - \mC$ is
negative semidefinite. For a positive/negative semidefinite matrix $\mM$, we
denote the $j$-th eigenvalue as $\eigen_j(\mM)$. Also, we denote the maximum and
minimum eigenvalues as $\max \eigen (\mM)$ and $\min \eigen (\mM)$ respectively.

The proofs of Theorems~\ref{app:thm:cavity-asymptote} and
\ref{app:thm:asymptotically-newton-raphson} presented in this section are similar to
Theorems 1 and 2 \cite{dehaene18expectation}. For completeness, we re-present
the proofs here in the context of Bayesian empirical likelihood approximation,
with special care taken to consider the multivariable property of $\vtheta$. The
next theorem establishes the limiting behaviour of the tilted distribution for
the site approximations as the precision norm diverges. Our result is stated
under the condition $\min \eigen (\vbeta) \rightarrow \infty$ and
$\max \eigen (\vbeta) / \min \eigen (\vbeta) \rightarrow b \in (0, \infty)$
which implies that $\lVert \vbeta \rVert \rightarrow \infty$, where $\vbeta$
denotes the cavity precision matrix.

\begin{theorem}
  \label{app:thm:cavity-asymptote}
  Assume~\ref{app:item:theta-bound} to~\ref{app:item:prior-smoothness} hold. Consider the cavity distribution with precision
  matrix $\vbeta$ and mean vector $\vmu^{t} - \vbeta^{-1} \vdelta$ for any
  $\vmu^{t} \in \mTheta$ and $\vdelta \in \bR^p$. Then, for any sufficiently
  large $n$, the mean of the $i$-th tilted distribution has the limiting
  behaviour in terms of $\min \eigen (\vbeta) \rightarrow \infty$ and
  $\max \eigen (\vbeta) / \min \eigen (\vbeta) \rightarrow b \in (0, \infty)$:
  \begin{equation*}
    \lVert \vmu_{\backslash i} - \vmu^{t} \rVert = O \left (  \Delta_{\vr;i}  / \min \eigen (\vbeta)  \right ),
  \end{equation*}
  where $\Delta_{\vr;i} =  \lVert \phi_i^{(1)} (\vmu^{t}) + \vdelta \rVert$. Moreover, the site approximation's parameters have the limiting behaviour:
  \begin{equation*}
    \left \lVert \mQ_{i}^{t+1} - \phi_i^{(2)} (\vmu^{t})  \right \rVert = O \left \{ ( 1 + \Delta_{\vr;i} )/ \min \eigen (\vbeta) \right \}
  \end{equation*}
  and
  \begin{equation*}
    \left \lVert \vr_{i}^{t+1} + \phi_i^{(1)} (\vmu^{t}) - \mQ_{i}^{t+1}\vmu^{t} \right \rVert = O \left ( \Delta_{r;i}^2  \{ \min \eigen (\vbeta) \}^{-2} +  p^2 K_M  \{ \min \eigen (\vbeta) \}^{-1} \right )
  \end{equation*}
  for some constant $K_M > 0$.
\end{theorem}
\begin{proof}
  We first show that the tilted distribution is strictly log-concave, i.e., showing
  $-\diffp[2]{}{\vtheta}\log q_{\backslash i}(\vtheta) = \phi_i^{(2)} (\vtheta) + \vbeta$
  is positive definite. Following Theorem~\ref{app:thm:continuous-diff}, there
  exists a positive definite matrix $\mB$ such that
  $\phi_i^{(2)} (\vtheta) - \phi_i^{(2)} (\vmu^{t}) + \mB$ is positive
  definite for all $\vtheta \in \mTheta$, or equivalently
  $\phi_i^{(2)} (\vtheta) > \phi_i^{(2)} (\vmu^{t}) - \mB$. One example is
  $\mB = (K_2 + 0.1) \mI$. Substitute this inequality into the second-order
  derivative and we obtain a lower bound
  \begin{equation*}
    -\diffp[2]{}{\vtheta}\log q_{\backslash i}(\vtheta) = \phi_i^{(2)} (\vtheta) + \vbeta \ge \phi_i^{(2)} (\vmu^{t}) + \vbeta - \mB.
  \end{equation*}
  For a $\vbeta$ with sufficiently large minimum eigenvalue, we can ensure
  that the lower bound is positive definite, and thus
  $\phi_i^{(2)} (\vtheta) + \vbeta$ is positive definite and the tilted
  distribution is log-concave.

  For a $\vbeta$ with sufficiently large minimum eigenvalue, we can then
  invoke the Brascamp-Lieb inequality:
  \begin{align}
    \bE_{\vtheta \sim q_{\backslash i}} \{ (\vtheta - \vmu_{\backslash i}) (\vtheta - \vmu_{\backslash i})^\top  \} & \le \bE_{\vtheta \sim q_{\backslash i}} \left [ \left \{ - \frac{\partial^2 \log q_{\backslash i} (\vtheta) }{\partial \vtheta \partial \vtheta^\top} \right \}^{-1} \right ] \nonumber \\
                                                                                                                    & =  \bE_{\vtheta \sim q_{\backslash i}} \left [ \left \{ \phi_i^{(2)} (\vtheta) + \vbeta \right \}^{-1} \right ] \nonumber                                                               \\
                                                                                                                    & \le  \left \{ \phi_i^{(2)} (\vmu^{t})  + \vbeta - \mB \right \}^{-1}
    \label{app:eq:tilted-covariance-bound}
  \end{align}
  Unless specified otherwise, all expectations for the remainder of this proof
  are taken with respect to $q_{\backslash i}(\vtheta)$. Before proceeding, we
  derive another bound on
  $\bE \{ \lVert \vtheta - \vmu_{\backslash i} \rVert^2 \}$ that will be
  useful for further derivation. This is done by taking the trace on both
  sides of the previous inequality:
  \begin{align}
    \label{app:CoarseBound}
    \bE \{ \lVert \vtheta - \vmu_{\backslash i} \rVert^2 \}
     & \le \tr \left (  \left \{ \phi_i^{(2)} (\vmu^{t})  + \vbeta - \mB \right \}^{-1} \right ) \nonumber            \\
     & = \sum_{j} \eigen_j \left (  \left \{ \phi_i^{(2)} (\vmu^{t})  + \vbeta - \mB \right \}^{-1} \right )\nonumber \\
     & \le p \max \eigen \left (  \left \{ \phi_i^{(2)} (\vmu^{t})  + \vbeta - \mB \right \}^{-1} \right ) \nonumber  \\
     & \le p \left \lVert \left \{ \phi_i^{(2)} (\vmu^{t})  + \vbeta - \mB \right \}^{-1} \right \rVert.
  \end{align}
  Back to \eqref{app:eq:tilted-covariance-bound}, we now deal with $\left \{ \phi_i^{(2)} (\vtheta ) + \vbeta \right \}^{-1}$. By Taylor's expansion, we get (for any $\vu \in \bR^p$):
  \begin{align*}
     & \vu^\top \left \{ \phi_i^{(2)} (\vtheta ) + \vbeta \right \}^{-1} \vu                                                                                                                                                                                                                                                                                                                      \\
     & \approx \vu^\top \left \{ \bE \{ \phi_i^{(2)} (\vtheta ) \} + \vbeta \right \}^{-1} \vu +  \frac{ \partial \vu^\top \left \{  \bE \phi_i^{(2)} (\vtheta ) + \vbeta \right \}^{-1} \vu  }{ \partial \vech( \phi_i^{(2)} (\vtheta ))  } \vech \left \{ \phi_i^{(2)} (\vtheta ) - \bE \{ \phi_i^{(2)} (\vtheta ) \} \right \}                                                                 \\
     & + \tfrac{1}{2} \vech \left \{ \phi_i^{(2)} (\vtheta ) - \bE \{ \phi_i^{(2)} (\vtheta ) \} \right \}^\top  \frac{ \partial^2 \vu^\top \left \{ \bE \phi_i^{(2)} (\vtheta ) + \vbeta \right \}^{-1} \vu  }{ \partial \vech( \phi_i^{(2)} (\vtheta )) \partial \vech( \phi_i^{(2)} (\vtheta ))^\top  } \vech \left \{ \phi_i^{(2)} (\vtheta ) - \bE \{ \phi_i^{(2)} (\vtheta ) \} \right \} , \\
     & = \vu^\top \mR  \vu - \left [ \vech \left ( \mR \vu \vu^\top \mR \right )^\top  \mK \right ]^\top \vech \left \{ \phi_i^{(2)} (\vtheta ) - \bE \{ \phi_i^{(2)} (\vtheta ) \} \right \}                                                                                                                                                                                                     \\
     & + \vech \left \{ \phi_i^{(2)} (\vtheta ) - \bE \{ \phi_i^{(2)} (\vtheta ) \} \right \}^\top \mK^\top  \mR  \otimes ( \mR \vu \vu^\top \mR ) \mK \vech \left \{ \phi_i^{(2)} (\vtheta ) - \bE \{ \phi_i^{(2)} (\vtheta ) \} \right \},
  \end{align*}
  where the first approximate equality is an application of Taylor's expansion
  with respect to $\vech( \phi_i^{(2)} (\vtheta ) )$ about its
  expectation, the second equality follows by evaluation the first and second
  order derivatives and denoting $\mK$ as a $p^2$ by $p(p+1)/2$ duplication
  matrix \citep{magnus88matrix} and
  $\mR = \left \{ \bE \{ \phi_i^{(2)} (\vtheta ) \} + \vbeta \right \}^{-1}$.
  By taking expectation on both sides of the inequality, we have
  \begin{align*}
     & \vu^\top \bE \left \{ \phi_i^{(2)} (\vtheta ) + \vbeta \right \}^{-1} \vu                                                                                                                                                                                     \\
     & \approx  \vu^\top \mR  \vu +\tr \left [   \Var \left \{ \vech \{ \phi_i^{(2)} (\vtheta ) \} \right\}    \mK^\top  \mR  \otimes ( \mR \vu \vu^\top \mR ) \mK  \right ],                                                                                        \\
     & \le \vu^\top \mR  \vu + \bE \left \lVert \vech \left \{ \phi_i^{(2)} (\vtheta ) - \bE \{ \phi_i^{(2)} (\vtheta ) \} \right \} \right \rVert^2  \tr \left [      \mK^\top  \mR  \otimes ( \mR \vu \vu^\top \mR )  \mK  \right ],                               \\
     & \le \vu^\top \mR  \vu + K_M \bE \lVert \vtheta - \vmu_{\backslash i}  \rVert^2 \tr \left [      \mK^\top  \mR   \otimes ( \mR \vu \vu^\top  \mR  )  \mK  \right ]                                                                                             \\
     & \le \vu^\top \mR  \vu + K_M \bE \lVert \vtheta - \vmu_{\backslash i}  \rVert^2 \tr \left ( \mK     \mK^\top    \right )    \tr \left ( \mR \otimes \mR \vu \vu^\top  \mR \right )                                                                             \\
     & \le \vu^\top \mR  \vu +  K_M \bE \lVert \vtheta - \vmu_{\backslash i}  \rVert^2 \tr \left ( \mK     \mK^\top \right ) \tr \left ( \vu     \vu^\top \right ) \tr \left (  \widetilde{\mR} \right )^3                                                           \\
     & \le \vu^\top \mR  \vu +  p^3 K_M \bE \lVert \vtheta - \vmu_{\backslash i}  \rVert^2 \tr \left ( \mK     \mK^\top \right ) \tr \left ( \vu     \vu^\top \right )  \left \lVert \left \{ \phi_i^{(2)} (\vmu^{t})  + \vbeta - \mB \right \}^{-1} \right \rVert^3 \\
     & \le \vu^\top \mR  \vu + p^4 K_M \tr \left ( \mK     \mK^\top \right ) \left \lVert \left \{ \phi_i^{(2)} (\vmu^{t})  + \vbeta - \mB \right \}^{-1} \right \rVert^4 \lVert \vu \rVert^2.
  \end{align*}
  The second inequality follows from swapping expectation and trace operators,
  and the trace inequality for two positive semidefinite matrices
  $\tr (\mL \mJ) \le \tr(\mL) \tr(\mJ)$. The third inequality follows from an
  element-wise second Taylor's expansion of
  $\vech ( \phi_i^{(2)} (\vtheta) )$ followed by taking expectation. The
  fourth equality follows from trace inequality. The fifth inequality follows
  from the trace inequality, $\tr (\mL \otimes \mJ) = \tr(\mL) \tr (\mJ)$, and
  $\widetilde{\mR} = \left \{ \phi_i^{(2)} (\vmu^{t}) + \vbeta - \mB \right \}^{-1}$,
  $\tr(\mR) \le \tr(\widetilde{\mR})$. The sixth inequality follows from the
  fact that for any square matrix $\mJ$ of order $p$, we have
  $\tr(\mJ) \le p \lVert \mJ \rVert$. The seventh inequality follows from
  (\ref{app:CoarseBound}). Consequently, from \eqref{app:eq:tilted-covariance-bound},
  we have
  \begin{equation*}
    \mQ_{\backslash i}^{-1}
    \le \mR +  p^{4} K_M \tr \left ( \mK     \mK^\top \right ) \left \lVert \left \{ \phi_i^{(2)} (\vmu^{t})  + \vbeta - \mB \right \}^{-4} \right \rVert  \mI.
  \end{equation*}
  By taking inverse on both sides, we have
  \begin{align}
    \label{app:eq:LowerBoundCritical}
    \mQ_{\backslash i}
     & \ge \left [ \mR + p^{4} K_M \tr \left ( \mK \mK^\top \right ) \left \lVert \left \{ \phi_i^{(2)} (\vmu^{t})  + \vbeta - \mB \right \}^{-1} \right \rVert^4  \mI \right ]^{-1}                              \nonumber \\
     & \ge \bE \phi_i^{(2)} (\vtheta) + \vbeta - p^{4} K_M \tr \left ( \mK \mK^\top \right ) \left \lVert \left \{ \phi_i^{(2)} (\vmu^{t})
    + \vbeta - \mB \right \}^{-1} \right \rVert^4   \{ \bE \phi_i^{(2)} (\vtheta) + \vbeta \}^2 \nonumber                                                                                                                   \\
     & \ge \bE \phi_i^{(2)} (\vtheta) + \vbeta - p^{4} K_M \tr \left ( \mK \mK^\top \right ) \left \lVert \left \{ \phi_i^{(2)} (\vmu^{t})
    + \vbeta - \mB \right \}^{-1} \right \rVert^4 \left \{ \max \eigen ( \bE \phi_i^{(2)} (\vtheta) + \vbeta ) \right \}^2 \mI \nonumber                                                                                    \\
     & \ge \bE \phi_i^{(2)} (\vtheta) + \vbeta - p^{4} K_M \tr \left ( \mK \mK^\top \right ) \left \lVert \left \{ \phi_i^{(2)} (\vmu^{t})
    + \vbeta - \mB \right \}^{-1} \right \rVert^4  \max\eigen \left \{ \left ( \phi_i^{(2)} (\vmu^{t}) + \vbeta + \mB \right ) \right \}^2 \mI \nonumber                                                                    \\
     & \ge \bE \phi_i^{(2)} (\vtheta) + \vbeta - p^{4} K_M \tr \left ( \mK \mK^\top \right ) \left \lVert \left \{ \phi_i^{(2)} (\vmu^{t})
    + \vbeta - \mB \right \}^{-1} \right \rVert^4  \lVert ( \phi_i^{(2)} (\vmu^{t}) + \vbeta + \mB ) \rVert^2 \mI \nonumber                                                                                                 \\
     & \ge  \bE \phi_i^{(2)} (\vtheta) + \vbeta - p^{7} K_M \tr \left ( \mK \mK^\top \right ) \times \frac{ \{ \max \eigen (\vbeta)
      + \max \eigen (\phi_i (\vmu^{t}) + \mB ) \}^2 }{\{ \min \eigen (\vbeta) + \min \eigen (\phi_i (\vmu^{t}) - \mB ) \}^4 } \mI
  \end{align}
  The second inequality follows from noting the diagonal entries of
  $\mK \mK^\top$ are either 0 or 1 and hence its trace is nonnegative
  \citep{magnus88matrix}. Then, by applying the result that if $\mC$ and $\mD$
  are positive definite matrices such that $\mC^{-1} \mD$ is also positive
  definite, then $\mC + \mD$ is also positive definite and in fact
  $(\mC + \mD)^{-1} \ge \mC^{-1} - \mC^{-2} \mD$. The third inequality follows
  from the result
  $\mA^2 \le \max \eigen ( \mA^2 ) \mI = \max \eigen ( \mA )^2 \mI $ for any psd
  matrix $\mA$. The fourth inequality follows from
  $ \bE \phi_i^{(2)} (\vtheta) \le \phi_i^{(2)} (\vmu^{t}) + \mB$. The fifth
  inequality follows from $\max \eigen ( \mA )^2 \le \lVert \mA \rVert^2$ for
  any psd $\mA$. The sixth inequality follows from
  $\lVert \mA^{-1} \rVert \le \sqrt{p}/ \min \eigen(\mA)$ and
  $\lVert \mA \rVert \le \sqrt{p} \max \eigen(\mA)$ for any psd $\mA$, and
  followed by Weyl's inequality for eigenvalues. Note these argument hold under
  a sufficiently large $\min \eigen (\vbeta)$. We analyse the term
  $\bE \phi_i^{(2)} (\vtheta)$. Note that for any $\vu \in \bR^p$, we have
  \begin{align*}
    \vu^\top \phi_i^{(2)} (\vtheta) \vu \ge (\le) \sum_{j} \sum_\ell u_j u_\ell \left \{ \left [ \phi_i^{(2)} (\vmu_{\backslash i}) \right ]_{j \ell} + \frac{\partial \left [ \phi_i^{(2)} (\vmu_{\backslash i}) \right ]_{j \ell} }{ \partial \vtheta}^\top (\vtheta - \vmu_{\backslash i}) - (+) (K_M/2)  \lVert \vtheta - \vmu_{\backslash i} \rVert^2  \right \}
  \end{align*}
  By taking expectation on both sides, we have
  \begin{align}
    \label{app:eq:TakeExpectation}
    \bE \phi_i^{(2)} (\vtheta) & \ge (\le) \phi_i^{(2)} (\vmu_{\backslash i}) - (+) (K_M/2) \bE\lVert \vtheta - \vmu_{\backslash i} \rVert^2 \mI                                                     \nonumber \\
                               & \ge (\le) \phi_i^{(2)} (\vmu_{\backslash i}) -(+) (p K_M/2) \left \lVert \left \{ \phi_i^{(2)} (\vmu^{t})  + \vbeta - \mB \right \}^{-1} \right \rVert \mI ,
  \end{align}
  where the second inequality follows from (\ref{app:CoarseBound}). Moreover, we have
  \begin{align*}
    \mQ_{\backslash i}
     & \le \bE \{  \phi_i^{(2)}(\vtheta) + \vbeta \}                                                                                                               \\
     & \le \vbeta +  \phi_i^{(2)} (\vmu_{\backslash i}) + (p K_M/2) \left \lVert \left \{ \phi_i^{(2)} (\vmu^{t})  +\vbeta - \mB \right \}^{-1} \right \rVert \mI,
  \end{align*}
  where the first inequality follows from considering the strict log concavity
  of the tilted distribution and an inequality in \citet{saumard14logconcavity}
  \footnote{one line after (10.25)}, and the second inequality follows from
  (\ref{app:eq:TakeExpectation}). Since
  $\mQ_{i}^{t+1} = \mQ_{\backslash i} - \vbeta$
  , we have
  \begin{equation}
    \label{app:eq:UpperBoundCritical}
    \mQ_{i}^{t+1} - \phi_i^{(2)} (\vmu_{\backslash i}) \le (p K_M/2) \left \lVert \left \{ \phi_i^{(2)} (\vmu^{t})  + \vbeta - \mB \right \}^{-1} \right \rVert \mI
  \end{equation}
  By combining the bounds in (\ref{app:eq:LowerBoundCritical}), (\ref{app:eq:TakeExpectation}), \ref{app:eq:UpperBoundCritical}, and then noting that
  \begin{equation*}
    \frac{ \{ \max \eigen (\vbeta) + \max \eigen (\phi_i (\vmu^{t}) + \mB ) \}^2 }{\{ \min \eigen (\vbeta) + \min \eigen (\phi_i (\vmu^{t}) - \mB ) \}^4 } = O( \{\min \eigen (\vbeta) \}^{-2} ),
  \end{equation*}
  and
  \begin{equation*}
    \left \lVert \left \{ \phi_i^{(2)} (\vmu^{t}) - \mB + \vbeta \right \}^{-1} \right \rVert = O( \{\min \eigen (\vbeta) \}^{-1} ),
  \end{equation*}
  we obtain
  \begin{equation*}
    \left \lvert \eigen_j \left \{ \mQ_{i}^{t+1} - \phi_i^{(2)}(\vmu_{\backslash i}) \right \} \right \rvert = O (pK_M \{\min \eigen (\vbeta) \}^{-1}),
  \end{equation*}
  and hence
  \begin{equation*}
    \lVert \mQ_{i}^{t+1} - \phi_i^{(2)}(\vmu_{\backslash i})  \rVert = O\left ( p^2 K_M \left \lVert \left \{ \phi_i^{(2)} (\vmu^{t})  + \vbeta - \mB \right \}^{-1} \right \rVert \right ) = O( \{\min \eigen (\vbeta) \}^{-1} ).
  \end{equation*}
  To show the convergence of $\vmu_{\backslash i}$ to $\vmu^{t}$, we note that the score equation of the tilted distribution (expected first-order derivative of unnormalised tilted density equals zero):
  \begin{equation*}
    \vbeta (\vmu_{\backslash i} - \vmu^{t}) = - \vdelta - \bE\{ \phi_i^{(1)} (\vtheta) \}
  \end{equation*}
  Taylor's expansion of $\phi_i^{(1)} (\vtheta)$ about $\vmu^{t}$ gives us:
  \begin{equation*}
    \bE\{ \phi_i^{(1)} (\vtheta) \} = \phi_i^{(1)}(\vmu^{t}) + \phi_i^{(2)} ( \widetilde{\vmu}) ( \vmu_{\backslash i} - \vmu^{t} ),
  \end{equation*}
  where $\widetilde{\vmu}$ is a convex combination of $\vtheta$ and
  $\vmu^{t}$. By substituting the expansion back
  into the score equation, we have
  \begin{equation*}
    ( \phi_i^{(2)} ( \widetilde{\vmu}) + \vbeta ) (\vmu_{\backslash i} - \vmu^{t}) = - \vdelta - \phi_i^{(1)}(\vmu^{t}).
  \end{equation*}
  By pre-multiplying $(\vmu_{\backslash i} - \vmu^{t})^\top$ on both sides, we have
  \begin{align*}
    (\vmu_{\backslash i} - \vmu^{t})^\top  ( \phi_i^{(2)} ( \widetilde{\vmu}) + \vbeta ) (\vmu_{\backslash i} - \vmu^{t}) & = - (\vmu_{\backslash i} - \vmu^{t})^\top ( \vdelta + \phi_i^{(1)}(\vmu^{t}) )                  \\
                                                                                                                          & \le \lVert \vmu_{\backslash i} - \vmu^{t} \rVert \lVert \vdelta + \phi_i^{(1)}(\vmu^{t}) \rVert
  \end{align*}
  Also, we have the lower bound
  \begin{equation*}
    (\vmu_{\backslash i} - \vmu^{t})^\top  ( \phi_i^{(2)} ( \widetilde{\vmu}) + \vbeta ) (\vmu_{\backslash i} - \vmu^{t}) \ge \lVert \vmu_{\backslash i} - \vmu^{t}  \rVert^2 \min \eigen (\vbeta)
  \end{equation*}
  Consequently, we have
  \begin{equation*}
    \lVert \vmu_{\backslash i} - \vmu^{t} \rVert \le \frac{ \Delta_{r;i}  }{  \min \eigen (\vbeta) }.
  \end{equation*}
  Hence, we have
  \begin{align*}
    \lVert \mQ_{i}^{t+1} - \phi_i^{(2)}(\vmu^{t})  \rVert
     & \le \lVert \mQ_{i}^{t+1} - \phi_i^{(2)}(\vmu_{\backslash i})  \rVert +  \lVert \phi_i^{(2)}(\vmu_{\backslash i}) - \phi_i^{(2)}(\vmu^{t})   \rVert           \\
     & \le \lVert \mQ_{i}^{t+1} - \phi_i^{(2)}(\vmu_{\backslash i})  \rVert + K_M \frac{ \lVert \vdelta + \phi_i^{(1)}(\vmu^{t}) \rVert  }{  \min \eigen (\vbeta) } \\
     & = O \left \{ ( 1 + \Delta_{\vr;i} ) / \min \eigen (\vbeta) \right \}
  \end{align*}
  where the first inequality is an application of triangle inequality and the second inequality follows from a first order Taylor's expansion for an element of $\phi_i^{(2)}(\vmu^{t})$ about $\vmu_{\backslash i}$ and followed by bound on $\lVert \vmu_{\backslash i} - \vmu^{t} \rVert$. Next, we analyse the convergence of the site approximation's linear shift $\vr_{i}^{t+1}$. Recall that we have the score equation:
  \begin{equation*}
    \vbeta (\vmu_{\backslash i} - \vmu^{t}) = - \vdelta - \bE\{ \phi_i^{(1)} (\vtheta) \}
  \end{equation*}
  Note that the tilted distribution has precision matrix $\vbeta + \mQ_{i}^{t+1}$ and $\vbeta \vmu^{t} - \vdelta + \vr_{i}^{t+1}$. A multivariate normal distribution $g$ with the same mean and covariance matrix as the tilted distribution has density of the form:
  \begin{equation*}
    g(\vtheta) \propto \exp \left [ - \tfrac{1}{2} \{ \vtheta - (\vbeta + \mQ_{i}^{t+1})^{-1} ( \vbeta \vmu^{t} - \vdelta + \vr_{i}^{t+1} )  \}^\top (\vbeta + \mQ_{i}^{t+1}) \{ \vtheta - (\vbeta + \mQ_{i}^{t+1})^{-1} ( \vbeta \vmu^{t} - \vdelta + \vr_{i}^{t+1} )  \}  \right ]
  \end{equation*}
  The score equation corresponding to $g$ is:
  \begin{equation*}
    \mQ_{i}^{t+1} \vmu_{\backslash i} - \vr_{i}^{t+1} + \vbeta \vmu_{\backslash i} - \vbeta \vmu^{t} + \vdelta = \vzero.
  \end{equation*}
  By noting that $ \bE\{ \phi_i^{(1)} (\vtheta) \} = - \vdelta - \vbeta (\vmu_{\backslash i} - \vmu^{t})$, we have
  \begin{equation}
    \label{app:eq:rI}
    \vr_{i}^{t+1} = \mQ_{i}^{t+1} \vmu_{\backslash i} - \bE\{ \phi_i^{(1)} (\vtheta) \}.
  \end{equation}
  By considering each element of $\phi_i^{(1)} (\vtheta)$, we have the Taylor's expansion
  \begin{equation*}
    \left [ \phi_i^{(1)} (\vtheta)  \right ]_j =  \left [ \phi_i^{(1)} (\vmu^{t})  \right ]_j + \frac{\partial \left [ \phi_i^{(1)} (\vmu^{t})  \right ]_j}{\partial \vtheta}^\top (\vtheta - \vmu^{t}) + \frac{1}{2} (\vtheta - \vmu^{t})^\top \frac{\partial^2 \left [ \phi_i^{(1)} (\widetilde{\vmu}_{0j} )  \right ]_j}{\partial \vtheta \partial \vtheta^\top} (\vtheta - \vmu^{t}).
  \end{equation*}
  Rearrange the terms and we get
  \begin{align*}
     & \left \lvert \bE  	\left [ \phi_i^{(1)} (\vtheta)  \right ]_j -  \left [ \phi_i^{(1)} (\vmu^{t})  \right ]_j - \frac{\partial \left [ \phi_i^{(1)} (\vmu^{t})  \right ]_j}{\partial \vtheta}^\top (\vmu_{\backslash i} - \vmu^{t}) \right \rvert \\
     & = O\left( K_M \bE \lVert  \vtheta - \vmu^{t} \rVert^2 \right)                                                                                                                                                                                    \\
     & =  O\left( K_M  \tr ( \mQ_{\backslash i}^{-1}                                                                                                                                                                                                    
    ) + K_M \lVert \vmu_{\backslash i} - \vmu^{t} \rVert^2 \right)                                                                                                                                                                                      \\
     & \le O\left( K_M  \tr ( \widetilde{\mR} + p K_M^2 \tr( \mK \mK^\top) \lVert \widetilde{\mR} \rVert^{-4} \mI ) + K_M \lVert \vmu_{\backslash i} - \vmu^{t} \rVert^2 \right)                                                                        \\
     & \le O( p K_M  \{ \min \eigen (\vbeta) \}^{-1} ),
  \end{align*}
  where the first equality follows from the upper bound the second-order
  derivative of
  $\frac{\partial^2 \left [ \phi_i^{(1)} (\widetilde{\vmu}_{0j} ) \right ]_j}{\partial \vtheta \partial \vtheta^\top}$.
  The second equality follows from the result that
  $ \bE \lVert \vtheta - \vmu^{t} \rVert^2 = \tr( \mQ_{\backslash i}^{-1}
    ) + \lVert \vmu_{\backslash i} - \vmu^{t} \rVert^2$.
  The third inequality is an application of an earlier derived result for
  $\mQ_{\backslash i}^{-1}
  $. The fourth inequality follows from an
  earlier derived result for $\tr(\widetilde{\mR})$. Collecting individual
  elements back into a vector, we have
  \begin{equation}
    \label{app:eq:phiPrime}
    \lVert \bE\{ \phi_i^{(1)} (\vtheta) \} - \phi_i^{(1)} (\vmu^{t}) - \phi_i^{(2)} (\vmu^{t}) ( \vmu_{\backslash i} - \vmu^{t} )  \rVert \le O( p^2 K_M  \{ \min \eigen (\vbeta) \}^{-1} )
  \end{equation}
  Hence, from previously-derived result, we have
  \begin{align*}
    \lVert \vr_{i}^{t+1} - \mQ_{i}^{t+1} \vmu^{t} + \phi_i^{(1)} (\vmu^{t}) \rVert
     & \le \lVert \{ \mQ_{i}^{t+1} - \phi_i^{(2)} (\vmu^{t}) \} \{ \vmu_{\backslash i} - \vmu^{t} \} \rVert + O( p^2 K_M  \{ \min \eigen (\vbeta) \}^{-1} )     \\
     & \le  \lVert  \mQ_{i}^{t+1} - \phi_i^{(2)} (\vmu^{t}) \rVert \lVert \vmu_{\backslash i} - \vmu^{t} \rVert + O( p^2 K_M  \{ \min \eigen (\vbeta) \}^{-1} ) \\
     & \le O \left ( \Delta_{r;i}^2  \{ \min \eigen (\vbeta) \}^{-2} +  p^2 K_M  \{ \min \eigen (\vbeta) \}^{-1} \right )
  \end{align*}
  where the first inequality follows from substituting the result
  (\ref{app:eq:phiPrime}) into (\ref{app:eq:rI}). The second inequality follows is
  an application of Cauchy-Schwarz. The third inequality follows by noting
  that
  $\lVert \mQ_{i}^{t+1} - \phi_i^{(2)} (\vmu^{t}) \rVert = O \left ( \{ \min \eigen (\vbeta) \}^{-1} \right )$
  and
  $\lVert \vmu_{\backslash i} - \vmu^{t} \rVert \le \frac{ \lVert \vdelta + \phi_i^{(1)}(\vmu^{t}) \rVert }{ \min \eigen (\vbeta) }$.
\end{proof}

We now state and prove the theorem that in the large data limit
($n \rightarrow \infty$), one cycle of an EP update is equivalent to one cycle
of a Newton Raphson update. Note that $\vr_{0}^{t}$ and $\mQ_{0}^{t}$ denote the
linear shift and precision matrix of the prior site.

\begin{theorem}
  \label{app:thm:asymptotically-newton-raphson}
  Assume~\ref{app:item:theta-bound} to~\ref{app:item:prior-smoothness} hold. Consider
  the EPEL Gaussian approximation at iteration $t$, $\{\vr^{t}_i \}_{i=0}^n$ and
  $\{\mQ_i^{t} \}_{i=0}^n$ for the linear-shift and precision of the site
  approximations. The global approximation mean
  $\vmu^{t} = \{ \sum_i \mQ_i^{t} \}^{-1} \sum_i \vr^{t}_i$ is a fixed vector
  and the global precision is $\mQ^{t} = \sum_{i=0}^n \mQ_i^{t}$. Moreover,
  assume that (i)
  $\min_i \min \eigen ( \sum_{j \neq i} \mQ_{j}^{t}) = pn + O_{p}(1)$ with some
  positive constant $p$; (ii) the means of the tilted distributions are not too
  far apart, i.e.,
  $\sum_i \lVert \vr^{t}_i - \mQ_i^{t} \vmu^{t} + \phi_i^{(1)} (\vmu^{t}) \rVert = O_p(n)$.
  Then, the new global linear shift and new global precision after one EP cycle
  has the asymptotic behaviour:
  \begin{equation*}
    \left \lVert \vr^{t+1} +  \psi^{(1)} (\vmu^{t}) - \mQ^{t+1}  \vmu^{t} \right \rVert = O_p(1)
  \end{equation*}
  and
  \begin{equation*}
    \left \lVert \mQ^{t+1} - \psi^{(2)} (\vmu^{t}) \right \rVert = O_p(1)
  \end{equation*}
  where $\psi (\cdot) = \sum_{i=0}^n \phi_i (\cdot)$.
\end{theorem}
\begin{proof}
  Consider the update for likelihood site $i$. The corresponding cavity
  distribution has precision $\mQ^{t} - \mQ_i^{t}$, linear shift
  $\vr^{t} - \vr^{t}_i$, and mean
  $(\mQ^{t} - \mQ_i^{t})^{-1} (\vr^{t} - \vr^{t}_i)$, where
  $\vr^{t} = \sum_{i=0}^n \vr^{t}_i$. By Theorem \ref{app:thm:cavity-asymptote} with
  $\vdelta = \vdelta_i = \vr^{t}_i - \mQ_i^{t} \vmu^{t}$ and
  $\Delta_{r;i} = \lVert \phi_i(\vmu^t) + \vdelta_i \rVert$, we have
  \begin{equation}
    \label{app:eq:IndividualSiteQ}
    \lVert \mQ_i^{t+1} - \phi_i^{(2)} (\vmu^{t}) \rVert
    =  O_p\left ( \frac{1 + \Delta_{\vr;i} }{  \min \eigen ( \sum_{j \neq i} \mQ_{j}) } \right )
    \le O_p\left ( \frac{1 + \Delta_{\vr;i} }{  \min_i \min \eigen ( \sum_{j \neq i} \mQ_{j}) } \right ).
  \end{equation}
  Hence, the new global precision matrix deviates from $\psi^{(2)} (\vmu^{t})$ by order:
  \begin{align*}
    \lVert \mQ^{t+1} - \psi^{(2)} (\vmu^{t}) \rVert
     & \le \sum_{i=0}^n \lVert \mQ_i^{t+1} - \phi_i^{(2)} (\vmu^{t}) \rVert                                                 \\
     & \le O_p\left ( \frac{ (n + 1)  \sum_i \Delta_{\vr;i} }{ \min_i \min \eigen ( \sum_{j \neq i} \mQ_{j}^{t}) } \right ) \\
     & = O_p(1),
  \end{align*}
  where the first inequality is an application of triangle inequality, the
  second inequality follows from \eqref{app:eq:IndividualSiteQ}, and the third
  equality follows from (i) and (ii). For the new linear shift, we have
  \begin{align*}
    \lVert \vr^{t+1} + \psi^{(1)} (\vmu^{t}) - \mQ^{t+1} \vmu^{t} \rVert
     & \le \sum_{i=0}^n \lVert \vr_i^{t+1} + \phi_i^{(1)} (\vmu^{t}) - \mQ_i^{t+1} \vmu^{t} \rVert                          \\
     & \le O_p \left \{ \frac{(n +1) p^2 K_m }{ \min_i \min \eigen \left ( \sum_{j \neq i} \mQ_{j}^{t} \right ) } \right \} \\
     & = O_p (1)
  \end{align*}
  where the first inequality is an application of triangle inequality, the
  second inequality follows from Theorem \ref{app:thm:cavity-asymptote}, and the
  third equality follows from (i) and (ii).
\end{proof}
\paragraph{Remarks.} This results shows that an EP cycle is asymptotically
equivalent to performing a Newton-Raphson update as specified in
\eqref{app:eq:one-step-newton-update} as $n$ diverges.


\subsection{Algorithmic convergence}
\label{app:sec:epel-convergence}
The EP algorithm involves an arbitrary number of iterations (depending on the
user-specified termination condition). Here, a relevant questions pertains to
the criteria required to guarantee algorithmic convergence for the EP algorithm
\footnote{Convergence to a fixed point}. Since EP is related to the
Newton-Raphson's method as shown in the previous section, we study the
conditions for which the Newton-Raphson's attains algorithmic convergence and
use the results to identify the conditions that guarantee algorithmic
convergence for EP.

Let $\widehat{\vtheta} = \argmax p_{\EL} (\vtheta \mid \sD_n)$ denote the
posterior mode and $\vtheta^0$ denote the initial approximation of the posterior
mode. The updated approximate mode based on the Newton-Raphson algorithm is:
\begin{equation*}
  \vtheta^1 = \vtheta^0 - \{ \psi^{(2)} ( \vtheta^0 ) \}^{-1} \psi^{(1)} (  \vtheta^0 ).
\end{equation*}

\begin{theorem}
  \label{app:thm:stable-region-newton-raphson}
  Assume~\ref{app:item:theta-bound} to~\ref{app:item:prior-smoothness} hold and $ n^{-1} \psi^{(2)} (\widehat{\vtheta}) $
  converges in probability to a positive definite matrix $\mV_{\vtheta^\star}$
  with finite eigenvalues. Consider the updated approximate
  $\widehat{\vtheta}_{1}$ based on the Newton-Raphson algorithm. Then, there
  exists a sequence of stable regions $\Delta_{\operatorname{NR};n}$ such that
  if
  $\lVert \vtheta^0 - \widehat{\vtheta} \rVert \le \Delta_{\operatorname{NR};n}$,
  then
  \begin{equation*}
    \lVert \vtheta^1 - \widehat{\vtheta} \rVert \le \lVert \vtheta^0 - \widehat{\vtheta} \rVert.
  \end{equation*}
  Moreover, $\Delta_{\operatorname{NR};n} = O_p(1)$ and is bounded away from
  $0$.
\end{theorem}
\begin{proof}
  We perform Taylor's expansion of $\psi^{(1)}_j (\widehat{\vtheta})$ about $ \vtheta^0$:
  \begin{align*}
    0
     & = \psi_j^{(1)} (\widehat{\vtheta})                                                                                                                                                                                                                                                                                             \\
     & = \psi_j^{(1)} ( \vtheta^0 ) + \tfrac{\partial}{\partial \vtheta} \psi_j^{(1)} ( \vtheta^0 )^\top ( \widehat{\vtheta} - \vtheta^0 ) +  \tfrac{1}{2} ( \widehat{\vtheta} - \vtheta^0 )^\top \frac{\partial^2 \psi_j^{(1)} ( \widetilde{\vtheta}_j )}{\partial \vtheta \partial \vtheta^\top} ( \widehat{\vtheta} - \vtheta^0 ),
  \end{align*}
  where $\widetilde{\vtheta}_j  = u\vtheta^0 + (1-u) \widehat{\vtheta}$ and $0 < u < 1$. For each $j$, we have
  \begin{equation*}
    \frac{\partial^2 \psi_j^{(1)} ( \widetilde{\vtheta}_j )}{\partial \vtheta \partial \vtheta^\top} \le (n+1) K_3 \mI.
  \end{equation*}
  Consequently, we have
  \begin{equation*}
    \left \{  \psi^{(2)} ( \vtheta^0 )  \right \}^{-1} \psi^{(1)} (\widehat{\vtheta}) =  \left \{  \psi^{(2)} ( \vtheta^0 )  \right \}^{-1} \psi^{(1)} ( \vtheta^0 ) +  ( \widehat{\vtheta} - \vtheta^0 ) + \left \{  \psi^{(2)} ( \vtheta^0 )  \right \}^{-1} \sR_3 (\vtheta^0, \widehat{\vtheta}),
  \end{equation*}
  where
  \begin{align*}
    \lVert \sR_3 (\vtheta^0, \widehat{\vtheta}) \rVert \le \tfrac{p(n+1)K_3 }{2}  \lVert \widehat{\vtheta} - \vtheta^0 \rVert^2.
  \end{align*}
  Since $ \left \lVert \left \{  \psi^{(2)} ( \vtheta^0 )  \right \}^{-1}  \sR_3 (\vtheta^0, \widehat{\vtheta}) \right \rVert= \left \lVert \vtheta^0 -  \left \{  \psi^{(2)} ( \vtheta^0 )  \right \}^{-1} \psi^{(1)} ( \vtheta^0 ) - \widehat{\vtheta} \right \rVert = \left \lVert \vtheta^1  - \widehat{\vtheta} \right \rVert $, we have
  \begin{equation}
    \label{app:eq:IntermediateConditionNRforStability}
    \left \lVert \vtheta^1  - \widehat{\vtheta} \right \rVert \le \frac{p(n + 1)K_3 }{2}  \left \lVert \left \{  \psi^{(2)} ( \vtheta^0 )  \right \}^{-1} \right \rVert  \lVert \widehat{\vtheta} - \vtheta^0 \rVert^2.
  \end{equation}
  The above result facilitate us to verify the following claim: a sufficient condition on the initial value $\vtheta^0$ such that $\{ \lVert \widehat{\vtheta} - \vtheta^t \rVert \}_{t \ge 1}$ is a decreasing function in $t$ is
  \begin{equation*}
    \lVert \widehat{\vtheta} - \vtheta^0 \rVert \le \frac{ \lVert \psi^{(2)} ( \widehat{\vtheta}) \rVert  }{ (Ep/2+1) (n+1) p^2 K_3 }.
  \end{equation*}
  where $E = \sup_{\vtheta \in \mTheta} \lVert \psi^{(2)} ( \vtheta) \rVert \left \lVert \left \{  \psi^{(2)} ( \vtheta )  \right \}^{-1} \right \rVert$. By Lemma~\ref{app:thm:lemma-truth-supported}, $\mTheta \subset \mTheta_{B;n}$ finitely often and hence $E$ is bounded away from 0 and $\infty$ for a sufficiently large $n$. To verify the claim, we use Corollary \ref{app:thm:useful-bound} to obtain the bound
  \begin{equation*}
    \lVert \psi^{(2)} ( \widehat{\vtheta}) - \psi^{(2)} ( \vtheta^0) \rVert \le (n + 1) p^2 K_3 \lVert \widehat{\vtheta} - \vtheta^0 \rVert.
  \end{equation*}
  Hence by triangle inequality, we have
  \begin{equation*}
    \lVert \psi^{(2)} ( \widehat{\vtheta})  \rVert \le \lVert \psi^{(2)} ( \widehat{\vtheta}) - \psi^{(2)} ( \vtheta^0) \rVert + \lVert \psi^{(2)} ( \vtheta^0) \rVert \le (n + 1) p^2 K_3 \lVert \widehat{\vtheta} - \vtheta^0 \rVert + \lVert \psi^{(2)} ( \vtheta^0) \rVert
  \end{equation*}
  and consequently our claimed sufficient condition implies
  \begin{equation*}
    \lVert \widehat{\vtheta} - \vtheta^0 \rVert \le \frac{  (n + 1) p^2 K_3 \lVert \widehat{\vtheta} - \vtheta^0 \rVert + \lVert \psi^{(2)} ( \vtheta^0) \rVert  }{ (Ep/2+1) (n + 1) p^2 K_3 }.
  \end{equation*}
  or equivalently
  \begin{equation*}
    \lVert \widehat{\vtheta} - \vtheta^0 \rVert \le \frac{  2 \lVert \psi^{(2)} ( \vtheta^0) \rVert  }{ Ep^3 (n + 1) K_3 }.
  \end{equation*}
  By combining with equation (\ref{app:eq:IntermediateConditionNRforStability}), we have
  \begin{equation*}
    \frac{p(n + 1)K_3 }{2}  \left \lVert \left \{  \psi^{(2)} ( \vtheta^0 )  \right \}^{-1} \right \rVert  \lVert \widehat{\vtheta} - \vtheta^0 \rVert^2 \le  \frac{  \lVert \psi^{(2)} ( \vtheta^0) \rVert \left \lVert \left \{  \psi^{(2)} ( \vtheta^0 )  \right \}^{-1} \right \rVert    }{Ep^2}  \lVert \vtheta^0 - \widehat{\vtheta} \rVert \le  \lVert \vtheta^0 - \widehat{\vtheta} \rVert
  \end{equation*}
  where the last inequality follows by noting that $\lVert \psi^{(2)} ( \vtheta^0) \rVert \left \lVert \left \{  \psi^{(2)} ( \vtheta^0 )  \right \}^{-1} \right \rVert  \le E$ and $p \ge 1$. Lastly, we examine the asymptotic behaviour of the upper bound
  \begin{equation*}
    \frac{ \lVert \psi^{(2)} ( \widehat{\vtheta}) \rVert  }{ (Ep/2+1) (n + 1) p^2 K_3 }
  \end{equation*}
  Based on the theorem's assumed condition, we have $\lVert \psi^{(2)} ( \widehat{\vtheta}) \rVert = pn + O_p(1)$ and hence the upper bound is $O_p(1)$ and bounded away from $0$.
\end{proof}

\paragraph{Remarks.} The previous theorem shows that if the initial point is
from a set of points that are sufficiently close to $\widehat{\vtheta}$, then
the updated value $\vtheta^1$ belong to the same neighbourhood as the initial.
Furthermore, the sequence of solution $\{ \vtheta^t \}_{t \ge 1}$ will never get
further away from $\widehat{\vtheta}$.

Next, we investigate the convergence of the sequence of EP solutions
$\{ (\vr^t, \mQ^t) \}_{t \ge 1}$ towards a fixed point.
\begin{theorem}
  \label{app:thm:stable-region}
  Assume~\ref{app:item:theta-bound} to~\ref{app:item:prior-smoothness} hold. Consider
  the EPEL Gaussian initializations $\{\vr_i^0 \}_{i=0}^{n}$ and
  $\{\mQ_i^0 \}_{i=0}^{n}$ that satisfy
  \begin{equation*}
    n \max_i \lVert \mQ_i^0 \widehat{\vtheta} - \phi_i^{(1)} (\widehat{\vtheta})  - \vr_i^0    \rVert = \Delta_{\vr}^0
  \end{equation*}
  and
  \begin{equation*}
    n \max_i \lVert  \phi_i^{(2)} (\widehat{\vtheta})  - \mQ_i^0  \rVert = \Delta_{\vbeta}^0,
  \end{equation*}
  where $ \Delta_{\vr}^0 = \sqrt{n}$ and $\Delta_{\vbeta}^0 = \tfrac{1}{2} \lVert \psi^{(2)} (\widehat{\vtheta}) \rVert$. Then, for every $t=1,2,\ldots$, there exists a $\Delta_{\vr}^t$ and $\Delta_{\vbeta}^t$ such that
  \begin{equation*}
    \lVert \mQ_i^t \widehat{\vtheta} - \phi_i^{(1)} (\widehat{\vtheta}) - \vr_i^t \rVert \le n^{-1} \Delta_{\vr}^t
  \end{equation*}
  and
  \begin{equation*}
    \lVert \phi_i^{(2)} (\widehat{\vtheta})  - \mQ_i^t  \rVert \le n^{-1} \Delta_{\vbeta}^t,
  \end{equation*}
  where $\{\vr_i^t \}_{i=0}^{n}$ and $\{\mQ_i^t \}_{i=0}^{n}$ denote the EP
  approximation parameters after the $t$-th cycle. Moreover, if
  $ n^{-1} \psi^{(2)} (\widehat{\vtheta}) $ converges in probability to a
  constant positive definite matrix $\mV_{\vtheta^\star}$ with finite
  eigenvalues, $\sum_{i=0}^n \lVert \phi_i^{(1)} (\widehat{\vtheta}) \rVert$,
  and $\sum_{i=0}^n \lVert \phi_i^{(2)} (\widehat{\vtheta}) \rVert = O_p(n)$,
  then
  \begin{equation*}
    \Delta_{\vr} ^t= O_p(1) \quad \emph{and} \quad \Delta_{\vbeta}^t = O_p(1) \quad \emph{for all} \quad t=2,3 \ldots.
  \end{equation*}
\end{theorem}
\begin{proof}
  Let $\mQ^t = \sum_{i=0}^n \mQ_i^t$ and $\vr^t = \sum_{i=0}^n \vr_i^t$. Denote $\Delta_{\vbeta}^t = n \max_i \lVert \phi_i^{(2)}(\widehat{\vtheta}) -  \mQ_i^t \rVert$ and $\Delta_{\vr}^t = n \max_i \lVert \vr_i^t - \mQ_i^t \widehat{\vtheta} + \phi_i^{(1)} (\widehat{\vtheta}) \rVert$. We begin by observing that
  \begin{equation*}
    \phi_i^{(2)} (\widehat{\vtheta}) - \mQ_i^t \le (\Delta_{\vbeta}^t / n) \mI
  \end{equation*}
  and hence
  \begin{equation*}
    (\mQ^t)^{-1} \le \left \{ \psi^{(2)}(\widehat{\vtheta}) -  \Delta_{\vbeta}^t \mI \right \}^{-1}
  \end{equation*}
  and consequently
  \begin{equation*}
    \frac{1}{\min \eigen ( \mQ^t )} \le \left \lVert \left \{ \psi^{(2)}(\widehat{\vtheta}) -  \Delta_{\vbeta}^t \mI \right \}^{-1} \right \rVert
  \end{equation*}
  Let $\vmu^{t} = (\mQ^t)^{-1} \vr^t$. Following Theorem
  \ref{app:thm:cavity-asymptote} with offset $\vdelta = \vzero$, we can write
  \begin{align*}
    \left \lVert \mQ_{i}^{t+1} - \phi_i^{(2)} (\vmu^{t})  \right \rVert & = O \left ( ( 1 + \lVert \phi_i^{(1)}(\vmu^{t})   \rVert ) \left \lVert \left \{ \psi^{(2)}(\widehat{\vtheta}) -  \Delta_{\vbeta}^t \mI \right \}^{-1} \right \rVert \right )
  \end{align*}
  and
  \begin{align*}
    \left \lVert \vr_i^{t+1} + \phi_i^{(1)} (\vmu^{t}) - \mQ_{i}^t\vmu^{t} \right \rVert & = O( p^2 K_M  \{ \min \eigen (\mQ^{t}) \}^{-1} )                                                                                        \\
                                                                                         & = O \left ( p^2 K_M \left \lVert \left \{ \psi^{(2)}(\widehat{\vtheta}) -  \Delta_{\vbeta}^t \mI \right \}^{-1} \right \rVert \right ).
  \end{align*}
  The strategy in the remaining section of our proof is to modify the above
  two equalities such that $\vmu^{t}$ is replaced with $\widehat{\vtheta}$.
  Now, we have
  \begin{align*}
    \lVert \vr^{t} - \mQ^t \widehat{\vtheta} \rVert & = \lVert \vr^{t} - \mQ^t \widehat{\vtheta} + \psi^{(1)} (\widehat{\vtheta}) \rVert                      \\
                                                    & \le \sum_{i=0}^n \lVert \vr_i^{t} - \mQ_i^t \widehat{\vtheta} + \phi_i^{(1)} (\widehat{\vtheta}) \rVert \\
                                                    & \le \Delta_{\vr}^t
  \end{align*}
  On the other hand
  \begin{align*}
    \lVert \vr^{t} - \mQ^t \widehat{\vtheta} \rVert & \ge \min \eigen ( \mQ^t ) \lVert \widehat{\vtheta} - \vmu^{t} \rVert
  \end{align*}
  and hence $\lVert \widehat{\vtheta} - \vmu^{t} \rVert \le \Delta_{\vr}^t \{\min \eigen ( \mQ^t )\}^{-1} \le \Delta_{\vr}^t \left \lVert \left \{ \psi^{(2)}(\widehat{\vtheta}) -  \Delta_{\vbeta}^t \mI \right \}^{-1} \right \rVert$. By considering a first-order Taylor's expansion of $\phi_i^{(2)} (\vmu^{t})$ about $\widehat{\vtheta}$, we have
  \begin{align*}
    \left \lVert \mQ_{i}^{t+1} - \phi_i^{(2)} ( \widehat{\vtheta} )  \right \rVert
     & \le K_M \Delta_{\vr}^t \left \lVert \left \{ \psi^{(2)}(\widehat{\vtheta})
    - \Delta_{\vbeta}^t \mI \right \}^{-1} \right \rVert                                                                                                                                                 \\
     & + O \left \{ \left (  1 + \lVert \phi_i^{(1)}  ( \vmu^{t} ) \rVert  \right ) \left \lVert \left \{ \psi^{(2)}(\widehat{\vtheta}) -  \Delta_{\vbeta}^t \mI \right \}^{-1} \right \rVert \right \}.
  \end{align*}
  By Taylor's expansion about $\widehat{\vtheta}$, we have
  \begin{align*}
    \sum_{i=0}^n \lVert \phi_i^{(1)} (\vmu^{t}) \rVert
     & \le \sum_{i=0}^n \lVert \phi_i^{(1)} ( \widehat{\vtheta} ) \rVert + \lVert \widehat{\vtheta} - \vmu^{t} \rVert \sum_{i=0}^n \lVert \phi_i^{(2)} ( \widehat{\vtheta} ) \rVert + \frac{nK_3 \lVert \widehat{\vtheta} - \vmu^{t} \rVert^2}{2}                                                                                                                                                                     \\
     & \le \sum_{i=0}^n \lVert \phi_i^{(1)} ( \widehat{\vtheta} ) \rVert +   \Delta_{\vr}^t \left \lVert \left \{ \psi^{(2)}(\widehat{\vtheta}) -  \Delta_{\vbeta}^t \mI \right \}^{-1} \right \rVert \sum_{i=0}^n \lVert \phi_i^{(2)} ( \widehat{\vtheta} ) \rVert + \frac{nK_3 (\Delta_{\vr}^t)^2  }{2} \left \lVert \left \{ \psi^{(2)}(\widehat{\vtheta}) -  \Delta_{\vbeta}^t \mI \right \}^{-1} \right \rVert^2
  \end{align*}
  and hence
    {\scriptsize
      \begin{align}
        \label{app:eq:BoundforQt}
         & \left \lVert \mQ^{t+1} - \psi^{(2)} ( \widehat{\vtheta} )  \right \rVert \nonumber                                                                                                                                                                                                                                                                                                                                                                                                                                                                                     \\
         & \le (n + 1) K_M \Delta_{\vr}^t \left \lVert \left \{ \psi^{(2)}(\widehat{\vtheta}) -  \Delta_{\vbeta}^t \mI \right \}^{-1} \right \rVert \nonumber                                                                                                                                                                                                                                                                                                                                                                                                                     \\
         & + O \left \{ K_M \left (  n + \sum_{i=0}^n \lVert \phi_i^{(1)} ( \widehat{\vtheta} ) \rVert +   \Delta_{\vr}^t \left \lVert \left \{ \psi^{(2)}(\widehat{\vtheta}) -  \Delta_{\vbeta}^t \mI \right \}^{-1} \right \rVert \sum_{i=0}^n \lVert \phi_i^{(2)} ( \widehat{\vtheta} ) \rVert + \frac{nK_3 (\Delta_{\vr}^t)^2  }{2} \left \lVert \left \{ \psi^{(2)}(\widehat{\vtheta}) -  \Delta_{\vbeta}^t \mI \right \}^{-1} \right \rVert^2 \right ) \left \lVert \left \{ \psi^{(2)}(\widehat{\vtheta}) -  \Delta_{\vbeta}^t \mI \right \}^{-1} \right \rVert \right \}.
      \end{align} }
  Moreover, by adding and subtracting the appropriate terms, we get
  \begin{equation*}
    \vr_i^{t+1} - \mQ_i^{t+1} \widehat{\vtheta} + \phi_i^{(1)} (\widehat{\vtheta}) = \phi_i^{(1)} (\widehat{\vtheta}) - \phi_i^{(1)} (\vmu^{t}) + \mQ_i^{t+1} \vmu^{t} - \mQ_i^{t+1}  \widehat{\vtheta}  + \vr_i^{t+1} - \mQ_i^{t+1} \vmu^{t} + \phi_i^{(1)}(\vmu^{t})
  \end{equation*}
  which is used in the following bound
  \begin{align}
    \label{app:eq:BoundforRt}
    \lVert \vr^{t+1}  - \mQ^{t+1} \widehat{\vtheta} + \psi^{(1)} (\widehat{\vtheta}) \rVert
     & \le  \sum_{i=0}^n \lVert \vr_i^{t+1}  - \mQ_i^{t+1} \widehat{\vtheta} + \phi_i^{(1)} (\widehat{\vtheta}) \rVert\nonumber                                                                                                                                                                                                   \\
     & \le  \sum_{i=0}^n \left[ \lVert R_3(\vmu^{t}, \widehat{\vtheta}) \rVert  + \lVert \mQ_i^{t+1} - \phi_i^{(2)} (\widehat{\vtheta}) \rVert \lVert \vmu^{t} - \widehat{\vtheta} \rVert + \lVert \vr_i^{t+1}  - \mQ_i^{t+1} \vmu^{t} + \phi_i^{(1)} (\vmu^{t}) \rVert  \right ] \nonumber                                       \\
     & \le\frac{(n+1)\sqrt{p} K_3}{2}  (\Delta_{\vr}^t)^2 \left \lVert \left \{ \psi^{(2)}(\widehat{\vtheta}) -  \Delta_{\vbeta}^t \mI \right \}^{-1} \right \rVert^2 + \Delta_{\vr}^t \Delta_{\vbeta}^{t+1} \left \lVert \left \{ \psi^{(2)}(\widehat{\vtheta}) -  \Delta_{\vbeta}^t \mI \right \}^{-1} \right \rVert  \nonumber \\
     & + O \left \{ n p^2 K_M \left \lVert \left \{ \psi^{(2)}(\widehat{\vtheta}) -  \Delta_{\vbeta}^t \mI \right \}^{-1} \right \rVert  \right \},
  \end{align}
  where the first inequality is an application of triangle inequality. The
  second inequality follows from the preceding equation, the triangle
  inequality, and a Taylor's expansion:
  $\phi_i^{(1)}(\vmu^{t}) = \phi_i^{(1)}( \widehat{\vtheta}) + \phi_i^{(2)} (\vmu^{t} - \widehat{\vtheta}) - R_3(\vmu^{t}, \widehat{\vtheta})$
  where $R_3$ is a $p$-dimensional column vector with $j$-th entry equals to
  $-\tfrac{1}{2} (\vmu^{t} - \widehat{\vtheta})^\top \tfrac{\partial}{\partial \vtheta \partial \vtheta^\top} [ \phi_i^{(1)} (\widetilde{\vmu}^{t}) ]_j (\vmu^{t} - \widehat{\vtheta})$
  and $\widetilde{\vmu}^{t}$ is a point along the line joining $\vmu^{t}$ and
  $\widehat{\vtheta}$. The third inequality follows from
  $\lVert R_3(\vmu^{t}, \widehat{\vtheta}) \rVert \le \tfrac{1}{2} \sqrt{p} K_3 \lVert \vmu^{t} - \widehat{\vtheta} \rVert^2$.
  By considering the initial values such that
  $\Delta_{\vbeta}^0 = \tfrac{1}{2} \lVert \psi^{(2)} (\widehat{\vtheta}) \rVert $
  and $\Delta_{\vr}^0 = \sqrt{n}$, we have
  \begin{align*}
    \left \lVert \left \{ n^{-1} \psi^{(2)} (\widehat{\vtheta}) -  n^{-1} \Delta_{\vbeta}^0 \mI \right \}^{-1} \right \rVert \xrightarrow{\bP^\star} \left \lVert \left \{ \mV_{\vtheta^\star} -  \tfrac{1}{2} \lVert \mV_{\vtheta^\star} \rVert  \mI \right \}^{-1} \right \rVert \in (0, \infty)
  \end{align*}
  and hence $	\left \lVert \left \{  \psi^{(2)} (\widehat{\vtheta}) -  \Delta_{\vbeta}^0 \mI \right \}^{-1} \right \rVert = O_p(n^{-1})$. By substituting the convergence results into (\ref{app:eq:BoundforQt}) and (\ref{app:eq:BoundforRt}) and specifying $t = 0$, we have
  \begin{align*}
    \Delta_{\vbeta}^1 & = O_p(\sqrt{n}) + O_p(1) = O_p(\sqrt{n})
  \end{align*}
  and
  \begin{align*}
    \Delta_{\vr}^1 & = O_p(1) + O_p(1) + O_p(1) = O_p(1)
  \end{align*}
  By considering  $\Delta_{\vbeta}^1 =  O_p(\sqrt{n}) $ and $\Delta_{\vr}^1 = O_p(1)$, we have
  \begin{align*}
    \left \lVert \left \{ n^{-1} \psi^{(2)} (\widehat{\vtheta}) -  n^{-1} \Delta_{\vbeta}^1 \mI \right \}^{-1} \right \rVert \xrightarrow{\bP^\star} \left \lVert \left \{ \mV_{\vtheta^\star}  \right \}^{-1} \right \rVert \in (0, \infty)
  \end{align*}
  and hence
  $ \left \lVert \left \{ \psi^{(2)} (\widehat{\vtheta}) - \Delta_{\vbeta}^1 \mI \right \}^{-1} \right \rVert = O_p(n^{-1})$.
  By substituting the convergence results into (\ref{app:eq:BoundforQt}) and
  (\ref{app:eq:BoundforRt}) and specifying $t = 1$, we have
  \begin{align*}
    \Delta_{\vbeta}^2 & = O_p(1) + O_p(1) = O_p(1)
  \end{align*}
  and
  \begin{align*}
    \Delta_{\vr}^2 & = O_p(n^{-1}) + O_p(n^{-1}) + O_p(1) = O_p(1).
  \end{align*}
  By continued iteration, we obtain $\Delta_{\vbeta}^t = O_p(1)$ and $\Delta_{\vr}^t = O_p(1)$ for $t\ge 2$.
\end{proof}
\paragraph{Remarks.} Note that as long as we run the EP algorithm for at least
two cycles, Theorem \ref{app:thm:stable-region} guarantees that the EP parameters satisfy
the stable region criteria with an $O_{p}(1)$ stability width. The first and
second moments of the EP solution will also converge to that of Laplace's
approximation for a sufficiently large $n$.

In the rest of this article, we make reference to the following conditions
\begin{equation*}
  \lVert \vr_i^t - \mQ_i^t \widehat{\vtheta} + \phi_i^{(1)} (\widehat{\vtheta}) \rVert \le n^{-1} \Delta_{\vr}^t
\end{equation*}
and
\begin{equation*}
  \lVert \mQ_i^t - \phi_i^{(2)} (\widehat{\vtheta}) \rVert \le n^{-1} \Delta_{\vbeta}^t,
\end{equation*}
as the \textit{stable region criteria} and $(\Delta_{\vr}^t, \Delta_{\vbeta}^t)$ are referred as the \textit{stability width}.

\subsection{Expectation-Propagation Bernstein-von-Mises Theorem}
\label{app:sec:epel-bvm}
We are now ready to prove an expectation-propagation analogue of the
Bernstein-von-Mises theorem for Bayesian empirical likelihood models, where the
EP approximate posterior is constructed by using initial values as laid out in
the previous theorem and the number of iteration cycles is at least two. In
particular, our results says that the target Bayesian empirical likelihood
posterior has an asymptotic normal form that is equivalent to our proposed EP
approximate posterior. To the best of our knowledge, this is the first
Bernstein-von-Mises type result for Bayesian semiparametric EP posteriors. The
proof requires the following additional assumptions:
\begin{enumerate}[start=5, label=(\Roman*)]
  \item We assume that
        $\mS^\star = \bE\left [ h (\vz, \vtheta^\star) h (\vz, \vtheta^\star)^\top \right ]$
        is positive definite and all its entries are
        bounded. \label{app:item:S-star-pd-bounded}
  \item For any $a > 0$, there exists $\nu_{\EL} > 0$ such that as
        $n \rightarrow \infty$, we have \label{app:item:el-bounded}
        \begin{equation*}
          \lim_{n \rightarrow \infty} \bP^\star \left ( \sup_{\lVert \vtheta - \vtheta^\star \rVert \ge a} n^{-1} \left \{ \log \EL_n (\vtheta) - \log \EL_n (\vtheta^\star) \right \} \le - \nu_{\EL} \right ) = 1
        \end{equation*}
  \item The quantities
        $\lVert \partial h(\vz, \vtheta^\star) / \partial \vtheta \rVert$,
        $\lVert \partial^2 h(\vz, \vtheta) / \partial \vtheta \partial \vtheta^\top \rVert$,
        $\lVert h(\vz, \vtheta) \rVert^3$ are bounded by some integrable
        function $\widetilde{G} (\vz)$ in a neighbourhood of $\vtheta^\star$. \label{app:item:h-bounded}
  \item We assume $\mD = \bE\{ \partial h(\vz, \vtheta) / \partial \vtheta \}$
        is full rank, where the expectation is taken with respect to the true
        data generating distribution of $\vz$. \label{app:item:D-full-rank}
\end{enumerate}
We also denote $\mD_i (\vtheta) = \partial h(\vz_i, \vtheta) / \partial \vtheta$
and $\mV_{\vtheta^\star} = \mD^\top (\mS^\star)^{-1} \mD$.

\begin{theorem}
  \label{app:thm:ep-bvm}
  Assume conditions~\ref{app:item:theta-bound} to~\ref{app:item:D-full-rank} hold.
  Consider the EPEL Gaussian posterior parameterized by the linear-shift
  $\vr = \sum_{i=1}^{n} \vr_i$, precision $\mQ = \sum_{i=1}^{n} \mQ_i$ that are
  obtained after at least two EP cycles with initializations
  $\{\vr_i^0 \}_{i=0}^{n}$ and $\{\mQ_i^0 \}_{i=0}^{n}$ that satisfy
  \begin{equation*}
    n \max_i \lVert \mQ_i^0 \widehat{\vtheta} - \phi_i^{(1)} (\widehat{\vtheta}) - \vr_i^0 \rVert = \Delta_{\vr}^0
  \end{equation*}
  and
  \begin{equation*}
    n \max_i \lVert \phi_i^{(2)} (\widehat{\vtheta}) - \mQ_i^0 \rVert = \Delta_{\vbeta}^0,
  \end{equation*}
  where $ \Delta_{\vr}^0 = \sqrt{n}$ and $\Delta_{\vbeta}^0 = \tfrac{1}{2} \lVert \psi^{(2)} (\widehat{\vtheta}) \rVert$. Then, we have
  \begin{equation*}
    \dtv ( \mathcal{N}( \vr, \mQ ), p_{\EL}(\vtheta \mid \sD_{n}) ) = o_p(1).
  \end{equation*}
\end{theorem}
\begin{proof}
  The proof is presented in the following steps
  \begin{enumerate}
    \item Show that
          $\lVert \sum_{i=0}^n \phi_i^{(1)} ( \vtheta^\star ) \rVert = O_p (n^{1/2})$,
          $\lVert \sum_{i=0}^n \phi_i^{(2)} ( \vtheta^\star ) \rVert = O_p (n)$,
          $ \sum_{i=0}^n \lVert \phi_i^{(2)} ( \vtheta^\star ) \rVert = O_p (n)$
          and
          $\sum_{i=0}^n \lVert \phi_i^{(1)} ( \vtheta^\star ) \rVert = O_p(n)$. \label{app:item:bvm-1}
    \item Show that
          $\lVert \vtheta^\star - \widehat{\vtheta} \rVert = O_p(n^{-1/2})$. \label{app:item:bvm-2}
    \item Show that
          $\frac{1}{n} \psi^{(2)} (\widehat{\vtheta})$ converges in probability to a constant positive definite matrix
          and
          $\sum_{i=0}^n \lVert \phi_i^{(1)} ( \widehat{\vtheta} ) \rVert = O_p (n)$. \label{app:item:bvm-3}
    \item Use Theorem \ref{app:thm:stable-region} to show that there exists a stability
          region pivoted around $\widehat{\vtheta}$ and that the width of the
          stability regions scale as $O_p(1)$ whenever we specify the EP
          intial values as laid out in Theorem \ref{app:thm:stable-region} and iterate
          for at least two cycles. \label{app:item:bvm-4}
    \item Use previous step to show that the upper bound for KL-divergence
          between $\mathcal{N}( \vr, \mQ )$ and the Newton-Raphson posterior
          \begin{equation*}
            \mathcal{N} \left( \sum_{i=0}^n \phi_i^{(2)} (\widehat{\vtheta}) \widehat{\vtheta} , \sum_{i=0}^n \phi_i^{(2)} (\widehat{\vtheta})  \right)
          \end{equation*}
          is a function of $\Delta_{\vr}$ and $\Delta_{\vbeta}$ and
          consequently the upper bound is of order $O_p(n^{-1})$, for any
          $\vr$ and $\mQ$ within the stability region. Since
          $\KL \ge 2 (d_{TV})^2$, then upper bound of their total
          variation distance is $O_p(n^{-1/2})$. \label{app:item:bvm-5}
    \item Modify existing local asymptotic normality and Berstein-von-Mises
          results to show that the $\dtv$ between the target EL posterior and
          $\mathcal{N}( \sum_{i=0}^n \phi_i^{(2)} (\widehat{\vtheta}) \widehat{\vtheta} , \sum_{i=0}^n \phi_i^{(2)} (\widehat{\vtheta}) )$
          is of order $o_p(1)$. Note that, to the best of our ability, we are
          only to show that the remainder in the local asymptotic normality
          expansion is $o_p(1)$. \label{app:item:bvm-6}
    \item Previous two steps imply that $\dtv$ between $\mathcal{N}( \vr, \mQ )$
          and target EL posterior is of order $o_p(1)$, for any $\vr$ and $\mQ$
          within the stability region. \label{app:item:bvm-7}
  \end{enumerate}
  \textbf{Step~\ref{app:item:bvm-1}:}\\
  Since
  $\lVert \sum_{i=0}^n \phi_i^{(1)} ( \vtheta^\star ) \rVert \le \lVert \sum_{i=1}^n \phi_i^{(1)} ( \vtheta^\star ) \rVert + \lVert \log p (\vtheta^\star) \rVert$
  and $\lVert \log p (\vtheta^\star) \rVert = O_p(1)$, we need only to analyse
  $\lVert \sum_{i=1}^n \phi_i^{(1)} ( \vtheta^\star ) \rVert$. Note that
  \begin{align*}
    \sum_{i=1}^n \phi_i^{(1)} (\vtheta^\star) & = n \vlambda (\vtheta^\star)^\top  \sum_{i=1}^n w_i(\vtheta^\star; \sD_{n}) \mD_i (\vtheta^\star).
  \end{align*}
  From \citet{owen90empirical} eqn (2.17), we have
  $ \lVert \vlambda (\vtheta^\star) \rVert = \lVert \mS^{\star -1} \vu( \vtheta^\star) \rVert + o_p(n^{-1/2})$,
  where $\vu(\vtheta) = n^{-1} \sum_{i=1}^n \vh(\vz_i, \vtheta)$ and hence
  \begin{equation*}
    \lVert \sum_{i=1}^n \phi_i^{(1)} (\vtheta^\star)  \rVert
    \le n \lVert \vu( \vtheta^\star) \rVert \lVert {\mS^\star}^{-1} \rVert \lVert \sum_{i=1}^n w_i(\vtheta^\star; \sD_{n}) \mD_i (\vtheta^\star) \rVert
    + O_p(n^{1/2}).
  \end{equation*}
  Following Lemma~\ref{app:ConvergenceELmoments} in the auxiliary result section,
  we have
  $\sum_{i=1}^n w_i(\vtheta^\star; \sD_{n}) \mD_i (\vtheta^\star) \xrightarrow{\bP^\star} \mD (\vtheta^\star)$.
  Moreover, $\lVert \vu(\vtheta^\star) \rVert = O_p(n^{-1/2})$. Consequently,
  we have
  $ \lVert \sum_{i=1}^n \phi^{(1)} (\vtheta^\star) \rVert = O_p(n^{1/2})$.

  For the second-order derivative we have
  \begin{align}
    \label{app:LambdaSecondDerivative}
    \sum_{i=1}^n \phi_i^{(2)} (\vtheta^\star)
     & =   \sum_{i=1}^n n w_i(\vtheta^\star) \mC_i (\vtheta^\star) + \left ( \frac{\partial \vlambda (\vtheta^\star) }{ \partial \vtheta} \right ) ^\top \left \{ \sum_{i=1}^n n w_i (\vtheta) h(\vz_i, \vtheta^\star) h(\vz_i, \vtheta^\star)^\top \right \} \left ( \frac{\partial \vlambda (\vtheta^\star) }{ \partial \vtheta} \right ) \nonumber \\
     & - \sum_{i=1}^n n^2 w_i(\vtheta^\star)^2 \mD_i (\vtheta^\star)^\top \vlambda (\vtheta^\star) \vlambda (\vtheta^\star)^\top \mD_i (\vtheta^\star) ,
  \end{align}
  where
  \begin{gather*}
    \frac{\partial \vlambda (\vtheta^\star)}{ \partial \vtheta} = \left \{ \sum_{i=1}^n n w_i (\vtheta) h(\vz_i, \vtheta^\star) h(\vz_i, \vtheta^\star)^\top \right \}^{-1} \left \{ \sum_{i=1}^n n w_i (\vtheta^\star) \mD_i (\vtheta^\star) - \mB (\vtheta^\star) \right \}, \\
    \mB (\vtheta^\star) = \sum_{i=1}^n n^2 w_i (\vtheta^\star)^2 h(\vz_i, \vtheta^\star) \vlambda (\vtheta^\star)^\top \mD_i (\vtheta^\star)
  \end{gather*}
  and
  \begin{equation*}
    \mC_i (\vtheta^\star) = \left [  \frac{\partial^2 h(\vz_i, \vtheta^\star)^\top  }{\partial \theta_r \partial \theta_s} \vlambda(\vtheta^\star)  \right ]_{1 \le r \le p; 1 \le s \le p}.
  \end{equation*}
  It remains for us to show that $\sum_{i=1}^n \phi_i^{(2)} (\vtheta^\star)$ is dominated by the second term in the RHS of (\ref{app:LambdaSecondDerivative})
  \begin{equation*}
    \left ( \frac{\partial \vlambda (\vtheta^\star) }{ \partial \vtheta} \right ) ^\top \left \{ \sum_{i=1}^n n w_i (\vtheta) h(\vz_i, \vtheta^\star) h(\vz_i, \vtheta^\star)^\top \right \} \left ( \frac{\partial \vlambda (\vtheta^\star) }{ \partial \vtheta} \right ).
  \end{equation*}
  For the first term, we have
  \begin{align*}
    \sum_{i=1}^n n w_i(\vtheta^\star) \mC_i (\vtheta^\star)
     & = n \left [  \sum_{i=1}^n w_i(\vtheta^\star) \frac{\partial h(\vz_i, \vtheta^\star)}{ \partial \theta_r \partial \theta_s}^\top \vlambda(\vtheta^\star)  \right ]_{r,s}                                         \\
     & = n \left [  \sum_{i=1}^n w_i(\vtheta^\star) \frac{\partial h(\vz_i, \vtheta^\star)}{ \partial \theta_r \partial \theta_s}^\top \left \{ \mS^\star \vu(\vtheta^\star) + o_p(n^{-1/2}) \right \}  \right ]_{r,s}
  \end{align*}
  Since $\lVert \vu(\vtheta^\star) \rVert = O_p(n^{-1/2})$, we have $\lVert \sum_{i=1}^n n w_i(\vtheta^\star) \mC_i (\vtheta^\star) \rVert = O_p(n^{1/2})$.

  For the second term, we have
  \begin{align*}
     & \left \lVert \left ( \frac{\partial \vlambda (\vtheta^\star) }{ \partial \vtheta} \right ) ^\top \left \{ \sum_{i=1}^n n w_i (\vtheta) h(\vz_i, \vtheta^\star) h(\vz_i, \vtheta^\star)^\top \right \} \left ( \frac{\partial \vlambda (\vtheta^\star) }{ \partial \vtheta} \right ) \right \rVert                                                 \\
     & = n \left \lVert  \left \{ \sum_{i=1}^n w_i (\vtheta^\star) \mD_i (\vtheta^\star) - \mB (\vtheta^\star)/n \right \}^\top  \left \{ \sum_{i=1}^n w_i (\vtheta) h(\vz_i, \vtheta^\star) h(\vz_i, \vtheta^\star)^\top \right \}^{-1} \left \{ \sum_{i=1}^n w_i (\vtheta^\star) \mD_i (\vtheta^\star) - \mB (\vtheta^\star)/n \right \} \right \rVert \\
     & \le n \left \lVert \left \{ \sum_{i=1}^n w_i (\vtheta^\star) \mD_i (\vtheta^\star) \right \}^\top \left \{ \sum_{i=1}^n w_i (\vtheta) h(\vz_i, \vtheta^\star) h(\vz_i, \vtheta^\star)^\top \right \}^{-1} \left \{ \sum_{i=1}^n w_i (\vtheta^\star) \mD_i (\vtheta^\star) \right \} \right \rVert                                                 \\
     & + 2 n \left \lVert \left \{ \sum_{i=1}^n w_i (\vtheta^\star) \mD_i (\vtheta^\star) \right \}^\top \left \{ \sum_{i=1}^n w_i (\vtheta) h(\vz_i, \vtheta^\star) h(\vz_i, \vtheta^\star)^\top \right \}^{-1} \{ \mB (\vtheta^\star)/n \} \right \rVert                                                                                               \\
     & + 2n \left \lVert \{ \mB (\vtheta^\star)/n \}^\top \left \{ \sum_{i=1}^n w_i (\vtheta) h(\vz_i, \vtheta^\star) h(\vz_i, \vtheta^\star)^\top \right \}^{-1} \{ \mB (\vtheta^\star)/n \} \right \rVert
  \end{align*}
  Note that
  \begin{align*}
    \frac{\mB (\vtheta^\star)_{r,s}}{n}
     & = \sum_{i=1}^n w_i (\vtheta^\star) \frac{[ h(\vz_i, \vtheta^\star) \vlambda (\vtheta^\star)^\top \mD_i (\vtheta^\star) ]_{r,s}}{ 1 + \vlambda(\vtheta^\star)^\top h(\vz_i, \vtheta^\star) }    \\
     & \le \frac{1}{1 + o_p(1)}  \sum_{i=1}^n w_i (\vtheta^\star) \left \{ [h(\vz_i, \vtheta^\star) \vu(\vtheta^\star)^\top {\mS^\star}^{-1} \mD_i (\vtheta^\star) ]_{r,s} + o_p(n^{-1/2}) \right \},
  \end{align*}
  where the second inequality follows from arguments similar to \citet{owen90empirical} eqn (2.15). Since $\lVert \vu(\vtheta^\star) \rVert = O_p(n^{-1/2})$, we have the magnitude of each entry of $\mB (\vtheta^\star)/n $ converging to $0$ at rate $n^{-1/2}$. Moreover, by Lemma~\ref{app:ConvergenceELmoments}, $\lVert \{  \sum_{i=1}^n w_i (\vtheta^\star) h(\vz_i, \vtheta^\star) h(\vz_i, \vtheta^\star)^\top \}^{-1} \rVert = O_p(1)$ and $ \lVert \sum_{i=1}^n  w_i (\vtheta^\star) \mD_i (\vtheta^\star) \rVert = O_p(1)$. Since $\mD$ is a full rank matrix, $\lVert \sum_{i=1}^n  w_i (\vtheta^\star) \mD_i (\vtheta^\star) \rVert$ converges to a strictly positive constant. Hence,
  \begin{equation*}
    n^{-1} \left \lVert \left ( \frac{\partial \vlambda (\vtheta^\star) }{ \partial \vtheta} \right ) ^\top \left \{ \sum_{i=1}^n n w_i (\vtheta) h(\vz_i, \vtheta^\star) h(\vz_i, \vtheta^\star)^\top \right \} \left ( \frac{\partial \vlambda (\vtheta^\star) }{ \partial \vtheta} \right ) \right \rVert.
  \end{equation*}
  converges to a strictly positive constant. Consequently, $\psi^{(2)} (\vtheta^\star) = \frac{1}{n} \sum_{i=0} \phi_i^{(2)} (\vtheta^\star)$ converges in probability to a positive definite matrix.
  For the third term of (\ref{app:LambdaSecondDerivative}),
  \begin{align*}
     & \left \lVert \sum_{i=1}^n n^2 w_i(\vtheta^\star)^2 \mD_i (\vtheta^\star)^\top \vlambda (\vtheta^\star) \vlambda (\vtheta^\star)^\top \mD_i (\vtheta^\star) \right \rVert \\
     & \le \sum_{i=1}^n n^2 w_i(\vtheta^\star)^2 \lVert  \mD_i (\vtheta^\star) \rVert^2 \lVert \vlambda (\vtheta^\star) \rVert^2                                                \\
     & \le  \frac{n}{1 + o_p(1)} \sum_{i=1}^n w_i(\vtheta^\star) \lVert  \mD_i (\vtheta^\star) \rVert^2 \lVert \vlambda (\vtheta^\star) \rVert^2 ,
  \end{align*}
  where the third inequality follows from arguments similar to \citet{owen90empirical}
  eqn (2.15). Since $\lVert \vlambda (\vtheta^\star) \rVert^2 = O_p(n^{-1})$,
  we have
  $\lVert \sum_{i=1}^n n^2 w_i(\vtheta^\star)^2 \mD_i (\vtheta^\star)^\top \vlambda (\vtheta^\star) \vlambda (\vtheta^\star)^\top \mD_i (\vtheta^\star) \rVert = O_p(1)$.

  Consequently,
  $n^{-1} \left \lVert \sum_{i=1}^n \phi_i^{(2)} (\vtheta^\star) \right \rVert$
  converges to a strictly positive constant. Using very similar steps to our
  proof for
  $n^{-1} \left \lVert \sum_{i=1}^n \phi_i^{(2)} (\vtheta^\star) \right \rVert$,
  we may also show that
  $n^{-1} \sum_{i=1}^n \left \lVert \phi_i^{(2)} (\vtheta^\star) \right \rVert = O_p(1)$.
  Next, we analyse the divergence properties of
  $\sum_{i=1}^n \lVert \phi_i^{(1)} (\vtheta^\star) \rVert$. Note that
  \begin{equation*}
    \phi_i^{(1)} (\vtheta^\star) = n w_i(\vtheta^\star) \vlambda(\vtheta^\star)^\top \mD_i(\vtheta^\star) + n w_i (\vtheta^\star) h(\vz_i, \vtheta^\star)^\top  \frac{\partial \vlambda (\vtheta^\star) }{ \partial \vtheta}
  \end{equation*}
  and hence
  \begin{align*}
     & \frac{1}{n} \sum_{i=1}^n \lVert \phi_i^{(1)} (\vtheta^\star) \rVert                                                                                                                                                                                                                                                                                                       \\
     & \le \lVert \vlambda(\vtheta^\star) \rVert \sum_{i=1}^n w_i(\vtheta^\star) \lVert  \mD_i(\vtheta^\star) \rVert + \left \lVert  \sum_{i=1}^n w_i(\vtheta^\star)  h(\vz_i, \vtheta^\star) \right \rVert \left \lVert \frac{\partial \vlambda(\vtheta^\star) }{\partial \vtheta} \right \rVert                                                                                \\
     & \le \lVert \vlambda(\vtheta^\star) \rVert \sum_{i=1}^n w_i(\vtheta^\star) \lVert  \mD_i(\vtheta^\star) \rVert + \left \lVert  \sum_{i=1}^n w_i(\vtheta^\star)  h(\vz_i, \vtheta^\star) \right \rVert \left \lVert \left \{ \sum_{i^\prime=1}^n w_{i^\prime} (\vtheta) h(\vz_{i^\prime}, \vtheta^\star) h(\vz_{i^\prime}, \vtheta^\star)^\top \right \}^{-1} \right \rVert \\
     & \times \left \{ \left \lVert  \sum_{i^\prime=1}^n w_{i^\prime} (\vtheta^\star) \mD_{i^\prime} (\vtheta^\star) \right \rVert + \left \lVert  \frac{\mB (\vtheta^\star)}{n} \right \rVert   \right \}
  \end{align*}
  Clearly the first term of the upper bound is order $O_p(n^{-1/2})$ because
  $\lVert \vlambda(\vtheta^\star) \rVert = O_p(n^{-1/2})$ and
  $ \sum_{i=1}^n w_i(\vtheta^\star) \lVert \mD_i(\vtheta^\star) \rVert = O_p(1)$.
  Since $\lVert \mB (\vtheta^\star)\rVert/n$ is of order $O_p(n^{-1/2})$,
  $\left \lVert \left \{ \sum_{i^\prime=1}^n w_{i^\prime} (\vtheta) h(\vz_{i^\prime}, \vtheta^\star) h(\vz_{i^\prime}, \vtheta^\star)^\top \right \}^{-1} \right \rVert$
  converges to a strictly positive constant, and $\mD$ is full rank, we have
  that $\frac{1}{n} \sum_{i=1}^n \lVert \phi_i^{(1)} (\vtheta^\star) \rVert$
  converges to a strictly positive constant.

  \textbf{Step~\ref{app:item:bvm-2}:}\\
  The proof is a slight modification to the proof for Lemma~1 and parts of
  Theorem 1 in \citet{qin94empirical} (alternatively Theorems 3.1 and 3.2 of
  \citet{newey04higher}) and therefore we provide only a sketch here. First, we
  show that the MAP estimator occurs within the ball
  $\{ \vtheta \, : \, \lVert \vtheta - \vtheta^\star \rVert \le n^{-1/3} \}$.
  Now, consider a point on the surface of the ball where may be expressed as
  $\vtheta = \vtheta^\star + \vu n^{-1/3}$, for some $\lVert \vu \rVert = 1$.
  Following \citet{owen90empirical} and~\ref{app:item:h-bounded}, we have
  \begin{equation*}
    \vlambda_s (\vtheta) = \left [ \left \{  \frac{1}{n} \sum_{i=1}^n  h(\vz_i, \vtheta) h(\vz_i, \vtheta)^\top \right \}^{-1} \left ( \frac{1}{n} \sum_{i=1}^n  h(\vz_i, \vtheta) \right ) \right ]_s + o_p(n^{-1/3}).
  \end{equation*}
  Substituting the above and after some algebraic manipulation, we have
  \begin{align*}
     & \sum_{i=0}^n \phi_i (\vtheta) - n \log(n)                                                                                                                                                                                                                               \\
     & = \sum_{i=1}^n \phi_i (\vtheta) - \log p(\vtheta) - n \log(n)                                                                                                                                                                                                           \\
     & = \frac{n}{2} \left ( \frac{1}{n} \sum_{i=1}^n  h(\vz_i, \vtheta) \right )^\top \left [  \frac{1}{n} \sum_{i=1}^n  h(\vz_i, \vtheta) h(\vz_i, \vtheta)^\top \right ]^{-1} \left ( \frac{1}{n} \sum_{i=1}^n  h(\vz_i, \vtheta) \right ) - \log p(\vtheta) + o_p(n^{1/3})
  \end{align*}
  By~\ref{app:item:h-bounded} and a first-order Taylor's expansion of
  $h(\vz_i, \vtheta)$ about $\vtheta^\star$, we have
  $\lVert \tfrac{1}{n} \sum_{i=1}^n h(\vz_i, \vtheta) \rVert \le \lVert \tfrac{1}{n} \sum_{i=1}^n h(\vz_i, \vtheta^\star) \rVert + O_p(n^{-1/3})$
  and hence $\sum_{i=0}^n \phi_i (\vtheta) - n \log(n)$ diverges at rate
  $n^{1/3}$. On the other hand,
  \begin{equation*}
    \sum_{i=0}^n \phi_i (\vtheta^\star) - n \log(n)= \frac{n}{2} \left ( \frac{1}{n} \sum_{i=1}^n  h(\vz_i, \vtheta^\star) \right )^\top \left [  \frac{1}{n} \sum_{i=1}^n  h(\vz_i, \vtheta^\star) h(\vz_i, \vtheta^\star)^\top \right ]^{-1} \left ( \frac{1}{n} \sum_{i=1}^n  h(\vz_i, \vtheta^\star) \right ) -  \log p(\vtheta^\star) + o_p(1)
  \end{equation*}
  and hence $\sum_{i=0}^n \phi_i (\vtheta^\star)$ is bounded from infinity in probability. Hence, we have shown that $\sum_{i=0}^n \phi (\vtheta)$ has a minimiser $\widehat{\vtheta}$ in the interior of the ball  $\{ \vtheta \, : \, \lVert \vtheta - \vtheta^\star \rVert \le n^{-1/3} \}$ with probability approaching 1. Since the ball collapses into a point set $\{ \vtheta^\star \}$ as $n \rightarrow \infty$, this minimiser is consistent. We proceed to evaluate its consistency rate. Note that the maxima $(\widehat{\vtheta}, \widehat{\vlambda})$ of the function
  \begin{equation*}
    \sum_{i=1}^n \log \{ n + n \vlambda^\top h(\vz_i, \vtheta) \} - \log p(\theta)
  \end{equation*}
  satisfies $Q_{1n} (\vtheta, \vlambda) = \vzero$ and $Q_{2n} (\vtheta, \vlambda) - \tfrac{ p^{(1)} (\vtheta) }{n p(\vtheta) } = \vzero$, where
  \begin{equation*}
    Q_{1n} (\vtheta, \vlambda)  = n^{-1} \sum_{i=1}^n \frac{h(\vz_i, \vtheta)}{1 + \vlambda^\top h(\vz_i, \vtheta)}
  \end{equation*}
  and
  \begin{equation*}
    Q_{2n} (\vtheta, \vlambda) = n^{-1} \sum_{i=1}^n \frac{ \nabla_{\vtheta} h(\vz_i, \vtheta) ^\top \vlambda   }{1 + \vlambda^\top h(\vz_i, \vtheta)}.
  \end{equation*}
  By Taylor's expansion of each $Q_{1n}$ and $Q_{2n}$ about $(\vtheta^\star, \vzero)$, we have
  \begin{equation*}
    \vzero = Q_{1n} (\widehat{\vtheta}, \widehat{\vlambda}) = Q_{1n} (\vtheta^\star, \vzero) + \tfrac{\partial}{\partial \vtheta} Q_{1n} (\vtheta^\star, \vzero) ( \widehat{\vtheta} - \vtheta^\star ) + \tfrac{\partial}{\partial \vlambda} Q_{1n} (\vtheta^\star, \vzero) \widehat{\vlambda} +  o_p( n^{-1/2} ),
  \end{equation*}
  and
  \begin{equation*}
    \vzero = Q_{2n} (\widehat{\vtheta}, \widehat{\vlambda}) -  \frac{ p^{(1)}( \widehat{\vtheta}) }{n p(\widehat{\vtheta}) } = Q_{2n} (\vtheta^\star, \vzero) + \tfrac{\partial}{\partial \vtheta} Q_{2n} (\vtheta^\star, \vzero) ( \widehat{\vtheta} - \vtheta^\star )  + \frac{\partial Q_{2n} (\vtheta^\star, \vzero) }{  \partial \vlambda } \widehat{\vlambda} -  \frac{ p^{(1)}( \widehat{\vtheta}) }{n p(\widehat{\vtheta}) } +  o_p( n^{-1/2} ),
  \end{equation*}
  Note that $\tfrac{ \partial Q_{2n} (\vtheta^\star, \vzero) }{\partial \vtheta } = \vzero$. Hence, we have
  \begin{equation*}
    \mJ_n
    \begin{pmatrix}
      \widehat{\vlambda} \\
      \widehat{\vtheta} - \vtheta^\star
    \end{pmatrix}  =  \begin{pmatrix}
      - Q_{1n}(\vtheta^\star, \vzero) \\
      \vzero
    \end{pmatrix}  +  \begin{pmatrix}
      \vzero \\
      - \frac{ p^{(1)}( \widehat{\vtheta}) }{n p(\widehat{\vtheta}) }
    \end{pmatrix} + \vxi
  \end{equation*}
  where
  \begin{equation*}
    \mJ_n = \begin{pmatrix}
      \tfrac{\partial Q_{1n} (\vtheta^\star, \vzero) }{  \partial \vlambda } & \tfrac{\partial Q_{1n} (\vtheta^\star, \vzero) }{  \partial \vtheta } \\
      \tfrac{\partial Q_{2n} (\vtheta^\star, \vzero) }{  \partial \vlambda } & \vzero
    \end{pmatrix}
  \end{equation*}
  and each entry of $\vxi$ is $o_p(a_n)$, where
  $a_n = \lVert \widehat{\vtheta} - \vtheta^\star \rVert + \lVert \widehat{\vlambda} \rVert$.
  By taking $\mJ^{-1}_n$ and $\lVert \cdot \rVert$ on both sides and then
  apply appropriate bounds, we have $a_n = O_p(n^{-1/2})$. Consequently, we
  have
  \begin{align*}
     & \sqrt{n} (\widehat{\vtheta} - \vtheta^\star)                                                                                                                                                                                               \\
     & = \sqrt{n} \mV_{\vtheta^\star}^{-1} \mD ^\top \mS^{\star \, -1} Q_{1n} (\vtheta^\star, \vzero) + n^{-1/2} \mV_{\vtheta^\star}^{-1} \mD^\top \mS^{\star \, -1} \frac{  p^{(1)}(\widehat{\vtheta}) }{p(\widehat{\vtheta}) } + o_p(n^{-1/2}).
  \end{align*}
  Following~\ref{app:item:theta-bound} and~\ref{app:item:prior-smoothness}, we have $0 < \lVert p^{(1)}(\vtheta) \rVert < \infty$ and $0 < p(\vtheta) < \infty$ for all $\vtheta \in \mTheta$. Therefore, the second term involving $\frac{  p^{(1)}(\widehat{\vtheta}) }{p(\widehat{\vtheta}) }$ has norm of order $O_p(n^{-1/2})$. Moreover, $\lVert Q_{1n} (\vtheta^\star, \vzero) \rVert = O_p(n^{-1/2})$. Hence, it follows that $\lVert \widehat{\vtheta} - \vtheta^\star \rVert = O_p(n^{-1/2})$.\\
  \textbf{Step~\ref{app:item:bvm-3}:}\\
  We have
  \begin{align*}
    \left \lVert \sum_{i=0}^n \phi_i^{(2)} (\widehat{\vtheta}) - \sum_{i=0}^n \phi_i^{(2)} (\vtheta^\star) \right \rVert
     & \le  \sum_{i=1}^n \left \lVert \phi_i^{(2)} (\widehat{\vtheta}) -  \phi_i^{(2)} (\vtheta^\star) \right \rVert + \lVert   \phi_0^{(2)}( \widehat{\vtheta} ) -  \phi_0^{(2)}( \vtheta^\star ) \rVert \\
     & \le n p^2 K_3 \lVert \widehat{\vtheta} - \vtheta^\star \rVert + \lVert \phi_0^{(2)}( \widehat{\vtheta} ) -  \phi_0^{(2)}( \vtheta^\star ) \rVert
  \end{align*}
  and hence
  \begin{equation*}
    \left \lVert \sum_{i=1}^n \phi_i^{(2)} (\widehat{\vtheta})  \right \rVert
    \le \left \lVert \sum_{i=1}^n \phi_i^{(2)} (\vtheta^\star)  \right \rVert +  n p^2 K_3 \lVert \widehat{\vtheta} - \vtheta^\star \rVert + \lVert   \phi_0^{(2)}( \widehat{\vtheta} ) -  \phi_0^{(2)}( \vtheta^\star ) \rVert
  \end{equation*}
  From~\ref{app:item:prior-smoothness} about the higher-order differentiability of
  the prior, we can deduce that
  $\lVert \phi_0^{(2)}( \vtheta^\prime ) - \phi_0^{(2)}( \vtheta^{\prime \prime} ) \rVert < \infty$
  for all $\vtheta \in \mTheta$ and $\vtheta^{\prime \prime} \in \mTheta$. Also,
  using results from Step~\ref{app:item:bvm-1} that
  $\left \lVert \sum_{i=1}^n \phi_i^{(2)} (\vtheta^\star) \right \rVert = O_p(n)$
  and from Step~\ref{app:item:bvm-2} that
  $\lVert \widehat{\vtheta} - \vtheta^\star \rVert = O_p (n^{-1/2})$, we have
  $\left  \lVert \frac{1}{n} \sum_{i=0}^n \phi_i^{(2)} (\widehat{\vtheta}) - \frac{1}{n} \sum_{i=0}^n \phi_i^{(2)} (\vtheta^\star) \right \rVert$ converges in probability to $0$. Hence, $\frac{1}{n} \psi (\widehat{\vtheta}) = \frac{1}{n} \psi (\vtheta^\star) + \frac{1}{n} \psi (\widehat{\vtheta}) - \frac{1}{n} \psi (\vtheta^\star)$ converges in probability to a positive definite matrix (same limiting matrix as $\frac{1}{n} \psi (\vtheta^\star)$).
  Moreover, using similar steps, we have
  \begin{equation*}
    \sum_{i=0}^n \left \lVert \phi_i^{(2)} (\widehat{\vtheta})  \right \rVert \le \sum_{i=1}^n \left \lVert \phi_i^{(2)} (\vtheta^\star)  \right \rVert + n K_3 \lVert \widehat{\vtheta} - \vtheta^\star \rVert + \lVert   \phi_0^{(2)}( \widehat{\vtheta} ) -  \phi_0^{(2)}( \vtheta^\star ) \rVert
  \end{equation*}
  and
  \begin{equation*}
    \sum_{i=0}^n \left \lVert \phi_i^{(1)} (\widehat{\vtheta})  \right \rVert \le \sum_{i=1}^n \left \lVert \phi_i^{(1)} ( \vtheta^\star)  \right \rVert + nK_2 \lVert \widehat{\vtheta} - \vtheta^\star \rVert + \lVert   \phi_0^{(1)}( \widehat{\vtheta} ) -  \phi_0^{(1)}( \vtheta^\star ) \rVert
  \end{equation*}
  Hence, we have $ \sum_{i=0}^n \left \lVert \phi_i^{(2)} (\widehat{\vtheta})  \right \rVert = O_p(n)$ and $\sum_{i=0}^n \left \lVert \phi_i^{(1)} (\widehat{\vtheta})  \right \rVert = O_p(n)$.\\
  \textbf{Step~\ref{app:item:bvm-4}:}\\
  Since $n^{-1} \left \lVert \sum_{i=0}^n \phi_i^{(2)} (\widehat{\vtheta})  \right \rVert$ converges to a strictly positive constant, $ \sum_{i=0}^n \left \lVert \phi_i^{(2)} (\widehat{\vtheta})  \right \rVert = O_p(n)$, and $\sum_{i=0}^n \left \lVert \phi_i^{(1)} (\widehat{\vtheta})  \right \rVert = O_p(n)$, our required follows directly from Theorem \ref{app:thm:stable-region}. In particular, for every $t \ge 2$ and $i = 1, \ldots, n$, we have
  \begin{equation*}
    n \lVert \vr_i^t - \mQ_i^t \widehat{\vtheta} + \phi_i^{(1)} (\widehat{\vtheta}) \rVert \le \Delta_{\vr}^t
  \end{equation*}
  and
  \begin{equation*}
    n \lVert \mQ_i^t - \phi_i^{(2)} (\widehat{\vtheta}) \rVert \le \Delta_{\vbeta}^t
  \end{equation*}
  where $\Delta_{\vr}^t = O_p(1)$ and $\Delta_{\vbeta}^t = O_p(1)$.\\
  \textbf{Step~\ref{app:item:bvm-5}:}\\
  Note that the Laplace approximate posterior is
  $\mathcal{N}( \psi^{(2)} (\widehat{\vtheta}) \widehat{\vtheta} , \psi^{(2)} (\widehat{\vtheta}) )$
  and the EP posterior (truncating at iteration $t$) is
  $\mathcal{N}( \vr^t, \mQ^t )$, where we use the linear shift-precision
  parameterisation for the multivariate normal distribution. Recall that
  \begin{equation*}
    \lVert \vr_i^t - \mQ_i^t \widehat{\vtheta} + \phi_i^{(1)} (\widehat{\vtheta}) \rVert \le \Delta_{\vr}^t / n
  \end{equation*}
  and
  \begin{equation*}
    \lVert \mQ_i^t - \phi_i^{(2)} (\widehat{\vtheta}) \rVert \le \Delta_{\vbeta}^t / n
  \end{equation*}
  By letting $\mOmega_1 = \psi^{(2)} (\widehat{\vtheta})$ and $\mOmega_2 = \mQ^t$, the KL-divergence between the two distributions is
  \begin{equation*}
    2 \KL^{12} = \left ( \widehat{\vtheta} - \mOmega_2^{-1} \vr^t  \right )^\top \mOmega_2 \left ( \widehat{\vtheta} - \mOmega_2^{-1} \vr^t  \right ) + \tr \{  (\mOmega_2 - \mOmega_1) \mOmega_1^{-1} \} - \log ( \det (\mOmega_2 ) / \det (\mOmega_1) )
  \end{equation*}
  We first prove that $\lVert \mOmega_1^{-1} \rVert = O_p(n^{-1})$. Note that $\mOmega_1$ is dominated by $\sum_{i=1}^n \phi_i^{(2)} (\widehat{\vtheta})$. Consider the inequality
  \begin{align*}
     & \left \lVert \frac{1}{n} \sum_{i=1}^n \phi_i^{(2)} (\widehat{\vtheta}) - \mD(\vtheta^\star)^\top \mS^{\star -1} \mD(\vtheta^\star)  \right \rVert                                                                                                                                                    \\
     & \le \left \lVert \frac{1}{n}  \sum_{i=1}^n \phi_i^{(2)} (\widehat{\vtheta}) - \frac{1}{n}  \sum_{i=1}^n \phi_i^{(2)} (\vtheta^\star)  \right \rVert + \left \lVert \frac{1}{n}  \sum_{i=1}^n \phi_i^{(2)} (\vtheta^\star) - \mD(\vtheta^\star)^\top \mS^{\star -1} \mD(\vtheta^\star)  \right \rVert \\
  \end{align*}
  Following the proof of Step~\ref{app:item:bvm-1}, the term
  $\sum_{i=1}^n \phi_i^{(2)} (\vtheta^\star)$ is dominated as
  \begin{align*}
     & \frac{1}{n} \sum_{i=1}^n \phi_i^{(2)} (\vtheta^\star)                                                                                                                                                                                                                            \\
     & = \frac{1}{n} \left \{ \sum_{i=1}^n w_i (\vtheta^\star) \mD_i (\vtheta^\star) \right \}^\top  \left \{ \sum_{i=1}^n w_i (\vtheta) h(\vz_i, \vtheta^\star) h(\vz_i, \vtheta^\star)^\top \right \}^{-1} \left \{ \sum_{i=1}^n w_i (\vtheta^\star) \mD_i (\vtheta^\star)  \right \} \\
     & + \text{matrix with elementwise $O_p(n^{-1/2})$}.
  \end{align*}
  Following Lemma~\ref{app:ConvergenceELmoments}, we have the convergence $\sum_{i=1}^n w_i (\vtheta^\star) \mD_i (\vtheta^\star) \xrightarrow{\bP^\star} \mD$ and $\sum_{i=1}^n w_i (\vtheta^\star) h(\vz_i, \vtheta^\star) h(\vz_i, \vtheta^\star)^\top \xrightarrow{\bP^\star} \mS^\star$. Hence,
  \begin{equation*}
    \frac{1}{n} \sum_{i=1}^n \phi_i^{(2)} (\vtheta^\star) \xrightarrow{\bP^\star} \mD ^\top (\mS^\star)^{-1}  \mD = \mV_{\vtheta^\star}
  \end{equation*}
  Since the eigenvalues are continuous functions of matrix entries, by continuous mapping theorem we have for each $j = 1, \ldots, p$,
  \begin{equation*}
    \eigen_j \{ \mOmega_1 / n \} \xrightarrow{\bP^\star} \eigen_j \{ \mV_{\vtheta^\star} \}.
  \end{equation*}
  Since $\mS^\star$ is positive definite and $\mD$ is full rank, the matrix $\mV_{\vtheta^\star}$ is positive definite, we may apply continuous mapping theorem again to obtain
  \begin{equation*}
    1/ \eigen_j \{ \mOmega_1 / n \} \xrightarrow{\bP^\star} 1/\eigen_j \{ \mV_{\vtheta^\star} \}.
  \end{equation*}
  For every $n$, $ \mOmega_1$ is positive definite. Hence, $1/ \eigen_j \{ \mOmega_1 / n \} = \eigen_j \{ n \mOmega_1^{-1} \}$. Consequently, we have
  \begin{equation*}
    \eigen_j \{ n \mOmega_1^{-1} \}  \xrightarrow{\bP^\star} 1/\eigen_j \{ \mV_{\vtheta^\star} \}.
  \end{equation*}
  Note that the squared-Euclidean norm may be expressed as the sum of squares of the eigenvalues:
  \begin{align*}
    \lVert \mOmega_1^{-1} \rVert^2 & = \sum_{j=1}^p \eigen_j \{ \mOmega_1^{-1} \}^2            \\
                                   & = n^{-2}  \sum_{j=1}^p \eigen_j \{ n \mOmega_1^{-1} \}^2.
  \end{align*}
  Since $\sum_{j=1}^p \eigen_j \{ n \mOmega_1^{-1} \}^2 = O_p(1)$, we have $\lVert \mOmega_1^{-1} \rVert = O_p(n^{-1})$. Moreover, since $\lVert \mOmega_2 - \mOmega_1 \rVert \le \Delta_{\vbeta} = O_p(1)$ and $ \mOmega_1/n \xrightarrow{\bP^\star}   \mV_{\vtheta^\star} $ , we have  $ \mOmega_2/n  \xrightarrow{\bP^\star}   \mV_{\vtheta^\star} $ and hence $\lVert \mOmega_2^{-1} \rVert = O_p(n^{-1})$. It remains for us to analyse the asymptotic behaviour of $\left \lVert \widehat{\vtheta} - \mOmega_2^{-1} \vr^t  \right \rVert$ and $\log ( \det (\mOmega_2 ) / \det (\mOmega_1) )$:
  \begin{align*}
    \left \lVert \widehat{\vtheta} - \mOmega_2^{-1} \vr^t  \right \rVert & = \lVert \mOmega_2^{-1}  ( \mOmega_2 \widehat{\vtheta} - \vr^t  ) \rVert                                                \\
                                                                         & \le \lVert \mOmega_2^{-1} \rVert \lVert \mOmega_2 \widehat{\vtheta} - \vr^t  \rVert                                     \\
                                                                         & =  \lVert \mOmega_2^{-1} \rVert   \lVert \vr^t -  \mOmega_2 \widehat{\vtheta}  + \psi^{(1)} (\widehat{\vtheta})  \rVert \\
                                                                         & \le O_p(n^{-1}),
  \end{align*}
  where the third equality follows from noting that
  $\psi^{(1)} (\widehat{\vtheta}) = \vzero$ and the fourth inequality follows
  from Step~\ref{app:item:bvm-4}. Consequently, the first term of the upper bound
  for $\KL^{12}$ is of order
  $\lVert \mOmega_2 \rVert \times O_p(n^{-2}) = O_p (n^{-1})$. Also we analyse
  the second term:
  \begin{align}
    \label{app:remake::secondEqn}
    \log ( \det (\mOmega_2 ) / \det (\mOmega_1) )
     & = \log ( \det (\mOmega_2 \mOmega_1^{-1})  ) \nonumber                                                                        \\
     & = \log \det \left \{ \mI +  (\mOmega_2 - \mOmega_1) \mOmega_1^{-1}  \right \} \nonumber                                      \\
     & = \sum_{j=1}^p \log  \{ 1 + \eigen_j \{  (\mOmega_2 - \mOmega_1) \mOmega_1^{-1}  \} \} \nonumber                             \\
     & =  \sum_{j=1}^p \log  \{ 1 + \eigen_j \{ \mOmega_1^{-1/2}   (\mOmega_2 - \mOmega_1) \mOmega_1^{-1/2}  \} \} \nonumber        \\
     & =  \sum_{j=1}^p \log  \{ 1 + \eigen_j \{ \mOmega_1^{-1/2}   \mOmega_2 \mOmega_1^{-1/2} - \mI   \} \} \nonumber               \\
     & \ge p - \sum_{j=1}^p \left [ 1 + \eigen_j \{ \mOmega_1^{-1/2}   \mOmega_2 \mOmega_1^{-1/2} - \mI \}  \right ]^{-1} \nonumber \\
     & = p - \sum_{j=1}^p \frac{1}{\eigen_j \{ \mOmega_1^{-1/2}   \mOmega_2 \mOmega_1^{-1/2} \} } \nonumber                         \\
     & = p - \tr\{ \mOmega_1 \mOmega_2^{-1}  \}
  \end{align}
  where third equality follows from the fact that for any matrix $\mA$ with real
  eigenvalues (not necessarily symmetric), we have
  $\eigen_j(\mA + \mI) = \eigen_j(\mA) + 1$ for all $j=1,\ldots,p$, and then
  noting that the log determinant of a matrix equals to sum of the
  log-eigenvalues. The fourth equality follows from the fact that for any two
  compatible square matrices $\eigen_j(\mA \mC) = \eigen_j(\mC \mA)$ for all
  $j = 1, \ldots p$. The fifth equality follows from noting that
  $\mOmega_1^{-1/2} (\mOmega_2 - \mOmega_1) \mOmega_1^{-1/2} = \mOmega_1^{-1/2} \mOmega_2 \mOmega_1^{-1/2} - \mI$.
  The sixth inequality follow from $\log(1 + x) \ge x/(1+x)$. The seventh
  equality follows from the identity
  $\eigen_j \{ \mOmega_1^{-1/2} \mOmega_2 \mOmega_1^{-1/2} - \mI \} = \eigen_j \{ \mOmega_1^{-1/2} \mOmega_2 \mOmega_1^{-1/2} \} - 1$.
  The eigth equality follows from the result that for any pd $\mA$, we have
  $\eigen_j(\mA^{-1}) = 1/\eigen_j(\mA)$ and then applying the result that
  $\eigen_j(\mA \mC) = \eigen_j(\mC \mA)$. The ninth equality follows from the
  result that trace of a symmetric matrix equals to the its sum of eigenvalues.
  Hence we have
  \begin{align*}
    \KL^{12}
     & \le \left ( \widehat{\vtheta} - \mOmega_2^{-1} \vr^t  \right )^\top \mOmega_2 \left ( \widehat{\vtheta} - \mOmega_2^{-1} \vr^t  \right ) + \tr \{  (\mOmega_2 - \mOmega_1) \mOmega_1^{-1} \} - p + \tr\{ \mOmega_1 \mOmega_2^{-1}  \}                                                 \\
     & = \left ( \widehat{\vtheta} - \mOmega_2^{-1} \vr^t  \right )^\top \mOmega_2 \left ( \widehat{\vtheta} - \mOmega_2^{-1} \vr^t  \right ) + \tr\{ \mOmega_1 \mOmega_2^{-1}  - \mI \} + \tr\{ \mOmega_2 \mOmega_1^{-1}  - \mI \}                                                          \\
     & \le \left ( \widehat{\vtheta} - \mOmega_2^{-1} \vr^t  \right )^\top \mOmega_2 \left ( \widehat{\vtheta} - \mOmega_2^{-1} \vr^t  \right ) +  \lVert \mOmega_1  - \mOmega_2   \rVert \lVert \mOmega_2^{-1} \rVert +  \lVert \mOmega_2  - \mOmega_1  \rVert \lVert \mOmega_1^{-1} \rVert \\
     & = O_p (n^{-1}),
  \end{align*}
  where the first inequality follows from (\ref{app:remake::secondEqn}), the
  second equality follows from the identity $p = \tr(\mI)$, the third
  inequality follows from
  $\mI = \mOmega_2 \mOmega_2^{-1} = \mOmega_1 \mOmega_1^{-1} $ and the
  result that for two square matrices of order $p$ we have
  $\tr(\mA \mC) = \sum_{j=1}^p \va_{j \cdot}^\top \vc_{\cdot j} \le \sum_{j=1}^p \lVert \va_{j \cdot} \rVert \lVert \vc_{\cdot j} \rVert \le \sqrt{ \sum_{j=1}^p \lVert \va_{j \cdot} \rVert^2 } \times \sqrt{ \sum_{j=1}^p \lVert \vc_{\cdot j} \rVert^2 } = \lVert \mA \rVert \lVert \mC \rVert$,
  and also earlier derived results:
  $\lVert \mOmega_1^{-1} \rVert = O_p(n^{-1})$,
  $\lVert \mOmega_2^{-1} \rVert = O_p(n^{-1})$,
  $\lVert \mOmega_1 \rVert = O_p(n)$, $\lVert \mOmega_2 \rVert = O_p(n)$,
  and
  $\left \lVert \widehat{\vtheta} - \mOmega_2^{-1} \vr^t \right \rVert = O_p(n^{-1})$.
  The last term follows by noting that all terms in the upper bound are
  $O_p(n^{-1})$. Following Pinkser's inequality, we have
  $\dtv ( \mathcal{N} (\vr^t, \mOmega_2), \mathcal{N} ( \mOmega_1 \widehat{\vtheta}, \mOmega_1 ) ) \le \sqrt{ \KL^{12} /2 } = O_p(n^{-1/2})$.

  \textbf{Step~\ref{app:item:bvm-6}}:\\
  Note that
  \begin{align*}
     & \dtv \left \{ \mathcal{N} (\vr^t, \mQ), p_{\EL} (\vtheta \mid \sD_{n}) \right \}                                                                                                                                                                                                                                                                         \\
     & \le \dtv \left \{ \mathcal{N} (\vr^t, \mQ), \mathcal{N} \left ( \psi^{(2)} ( \widehat{\vtheta} ) \widehat{\vtheta} , \psi^{(2)} ( \widehat{\vtheta} )  \right ) \right \} +  \dtv \left \{ \mathcal{N} \left ( \psi^{(2)} ( \widehat{\vtheta} ) \widehat{\vtheta} , \psi^{(2)} ( \widehat{\vtheta} )  \right ), p_{\EL} (\vtheta \mid \sD_{n}) \right \}
  \end{align*}
  As shown in Step~\ref{app:item:bvm-5}, the first term of the upper bound is of
  order $O_p(n^{-1/2})$. The second term follows by a substantial modification
  of the proofs of Lemmas 1 to 3 in \citet{yu24variational} with $\vtheta^\star$
  replaced by $\widehat{\vtheta}$. Details have been provided in the next
  section.
\end{proof}

We state and prove the following results that mirror Lemmas 1 to 3 in
\citet{yu24variational}, with $\vtheta^\star$ replaced by $\widehat{\vtheta}$.
\begin{lemma}
  \label{app:LemmaLANMELE}
  Assume conditions~\ref{app:item:theta-bound} to~\ref{app:item:D-full-rank} hold. Then,
  the log empirical likehood evaluated at the posterior mode satisfies the local
  asymptotic normality (LAN) condition, i.e., for any $t \in \bR^m$, we have
  \begin{align*}
     & \log \EL ( \widehat{\vtheta} + n^{-1/2} t ) - \log \EL ( \widehat{\vtheta})                                                                                                                                                                                                                                                      \\
     & = - \frac{1}{2} t^\top \left \{ \tfrac{1}{n} \sum_{i=1}^n \phi_i^{(2)} (\widehat{\vtheta}) \right \} t + \sqrt{n} t^\top \mD ^\top (\mS^\star)  ^{-1} \vu(\vtheta^\star) - \sqrt{n} (\widehat{\vtheta} - \vtheta^\star)^\top  \left \{ \tfrac{1}{n} \sum_{i=1}^n \phi_i^{(2)} (\widehat{\vtheta}) \right \} t + \widehat{R}_n(t)
  \end{align*}
  where $\widehat{R}_n(t)$ satisfies the stochastic equicontinuity condition
  \citep{andrews94empirical,chernozhukov03mcmc}.
\end{lemma}
\begin{proof}
  We begin by using Lemma~1 in \citet{yu24variational} to evaluate a LAN
  expansion for $\log \EL ( \widehat{\vtheta} + n^{-1/2} t )$ and
  $\log \EL ( \widehat{\vtheta})$ to obtain the difference
  \begin{align*}
     & \log \EL ( \widehat{\vtheta} + n^{-1/2} t ) - \log \EL ( \widehat{\vtheta})                                                                                                                                    \\
     & = - \frac{1}{2} t^\top \mV_{\vtheta^\star} t +  \sqrt{n} t^\top \mD^\top (\mS^\star) ^{-1} \vu(\vtheta^\star) - \sqrt{n} (\widehat{\vtheta} - \vtheta^\star)^\top  \mV_{\vtheta^\star} t  + \overline{R}_n(t),
  \end{align*}
  where
  $\overline{R}_n(t) = R_n ( \sqrt{n} (\widehat{\vtheta} - \vtheta^\star) + t ) - R_n ( \sqrt{n} (\widehat{\vtheta} - \vtheta^\star) )$
  and $ R_n (t) = O_p ( (\lVert t \rVert + \lVert t \rVert^2)/\sqrt{n})$. From
  Step~\ref{app:item:bvm-1} of Theorem~\ref{app:thm:ep-bvm}, we have
  \begin{equation*}
    \left \lVert \frac{1}{n} \sum_{i=1}^n \phi_i^{(2)} (\widehat{\vtheta}) - \mV_{\vtheta^\star} \right \rVert = O_p(n^{-1/2})
  \end{equation*}
  Hence, $\widehat{R}$ may be expressed as
  \begin{equation*}
    \widehat{R}_n (t) = \overline{R}_n(t) + o_p(1)
  \end{equation*}
  To show that $\widehat{R}_n$ indeed satisfies the stochastic equicontinuity conditions, we need only to show that for any sequence $\delta_n \rightarrow 0$, we have
  \begin{equation*}
    \sup_{\lVert t \rVert \le \delta_n \sqrt{n}} \frac{ \lvert \widehat{R}_n(t) \rvert }{1 + \lVert t \rVert^2} = o_p(1)
  \end{equation*}
  We bound
  \begin{align*}
    \lvert \widehat{R}_n(t) \rvert & \le \lvert R_n ( \sqrt{n} (\widehat{\vtheta} - \vtheta^\star) + t ) \rvert + \lvert R_n ( \sqrt{n} (\widehat{\vtheta} - \vtheta^\star) ) \rvert
  \end{align*}
  It suffices to show that
  \begin{equation*}
    \sup_{\lVert t \rVert \le \delta_n \sqrt{n}}  \frac{ \lvert R_n ( \sqrt{n} (\widehat{\vtheta} - \vtheta^\star) + t ) \rvert }{1 + \lVert t \rVert^2} = o_p(1)
  \end{equation*}
  and
  \begin{equation*}
    \sup_{\lVert t \rVert \le \delta_n \sqrt{n}} \frac{ \lvert R_n ( \sqrt{n} (\widehat{\vtheta} - \vtheta^\star) ) \rvert }{1 + \lVert t \rVert^2} = o_p(1)
  \end{equation*}
  To show the first result for $\lvert R_n ( \sqrt{n} (\widehat{\vtheta} - \vtheta^\star) + t ) \rvert$, we note that due to the boundedness of $R_n$ in $\lVert t \rVert \le \sqrt{n} \vdelta$, there exists a sequence of random variables $C_n = O_p(1)$ such that
  \begin{equation*}
    R_n (\sqrt{n} (\widehat{\vtheta} - \vtheta^\star) + t) \le C_n n^{-1/2} \{  \sqrt{n} \lVert \widehat{\vtheta} - \vtheta^\star \rVert + n \lVert \widehat{\vtheta} - \vtheta^\star \rVert^2 \} = o_p(1)
  \end{equation*}
  where the last equality follows by the previously-shown result: $\lVert \widehat{\vtheta} - \vtheta^\star \rVert= O_p(n^{-1/2})$. To show the second result, we consider a sequence $\delta_n = \lVert \widehat{\vtheta} - \vtheta^\star \rVert$. Then, we have
  \begin{equation*}
    \sup_{\lVert t \rVert \le  \sqrt{n}  \lVert \widehat{\vtheta} - \vtheta^\star \rVert }  \frac{ \lvert  R_n ( t ) \rvert }{1 + \lVert t \rVert^2} = o_p(1).
  \end{equation*}
  and hence $\lvert R_n ( \sqrt{n} (\widehat{\vtheta} - \vtheta^\star) ) \rvert = o_p(1)$.
\end{proof}

\begin{lemma}
  \label{app:BvMNormalisingConstant}
  Assume conditions~\ref{app:item:theta-bound} to~\ref{app:item:D-full-rank} hold. Let $\sL (t) = \log \EL (\widehat{\sJ}_n + n^{-1/2} t ) - \log \EL (\widehat{\vtheta}) - \frac{n}{2} \vc_n^\top  \left \{ \tfrac{1}{n} \sum_{i=1}^n \phi_i^{(2)} (\widehat{\vtheta}) \right \} \vc_n$, where
  $$\vc_n =  \left \{ \tfrac{1}{n} \sum_{i=1}^n \phi_i^{(2)} (\widehat{\vtheta}) \right \}^{-1} \mD ^\top (\mS^\star)  ^{-1} \vu(\vtheta^\star) - ( \widehat{ \vtheta} - \vtheta^\star)$$
  and $\widehat{\sJ}_n = \widehat{\vtheta} + \vc_n$.
  Then
  \begin{equation*}
    \int \left \lvert p( \widehat{\sJ}_n + n^{-1/2} t) \exp \{  \sL(t) \} - p ( \widehat{\vtheta} ) \exp \left \{ - \tfrac{1}{2n} t^\top \sum_{i=1}^n \phi_i^{(2)} (\widehat{\vtheta}) t \right \} \right \rvert \; \mathrm{d}t = o_p(1).
  \end{equation*}
\end{lemma}
\begin{proof}
  Since $\widehat{R}_n (t)$ satisfies the stochastic equicontinuity conditions, for every $\epsilon > 0$, there exists a sufficiently small $\delta > 0$ and large $M > 0$ such that
  \begin{itemize}
    \item $\limsup_{n \rightarrow \infty} \bP^\star \left \{ \sup_{ M \le \lVert t \rVert \le \delta n^{1/2} } \frac{ \lvert \widehat{R}_n(t) \rvert }{ \lVert t \rVert^2 } > \epsilon  \right \} < \epsilon $,
    \item $\limsup_{n \rightarrow \infty} \bP^\star \left \{ \sup_{ \lVert t \rVert \le M } \lvert \widehat{R}_n(t) \rvert > \epsilon  \right \} = 0 $.
  \end{itemize}
  The proof proceeds by considering three integral regions : $\sA_{1n} = \{ t \, : \, \lVert t \rVert \le M \}$, $\sA_{2n} = \{ t \, : \, M < \lVert t \rVert \le \delta \sqrt{n} \}$, and $\sA_{3n} = \{ t \, : \,  \lVert t \rVert > \delta \sqrt{n} \}$. Throughout the proof, we use the result that $\lVert \sqrt{n} \vc_n \rVert = O_p(n^{-1/2})$. More rigorously,
  \begin{align*}
    \sqrt{n} \lVert \vc_n \rVert & \le  \lVert \sqrt{n} \vu(\vtheta^\star) \rVert \lVert n \psi^{(2)} (\widehat{\vtheta})^{-1} - \mV_{\vtheta^\star}^{-1} \rVert \lVert \mD \rVert \lVert (\mS^{\star} )^{-1} \rVert + n^{-1/2} \lVert \mV_{\vtheta^\star}^{-1} \rVert \lVert \mD \rVert \lVert (\mS^{\star} )^{-1} \rVert \sup_{\vtheta} \lVert p^{(1)} (\vtheta) \rVert / p(\widehat{\vtheta}) \\
                                 & = O_p(n^{-1/2}),
  \end{align*}
  where the convergence order follows from noting that $\lVert n \{ \sum_{i=1}^n \phi_i^{(2)} (\widehat{\vtheta}) \}^{-1} - \mV_{\vtheta^\star}^{-1} \rVert = O_p(n^{-1/2})$.  We begin by consider the integral over $\sA_{3n}$:
  \begin{align*}
     & \int_{\sA_{3n}} \left \lvert p( \widehat{\sJ}_n + n^{-1/2} t) \exp \{  \sL(t) \} - p( \widehat{\vtheta} ) \exp \left \{ - \tfrac{1}{2n} t^\top \sum_{i=1}^n \phi_i^{(2)} (\widehat{\vtheta}) t \right \} \right \rvert \; \mathrm{d}t          \\
     & \le \int_{\sA_{3n}} p( \widehat{\sJ}_n + n^{-1/2} t) \exp \{  \sL(t) \} \; \mathrm{d}t + \int_{\sA_{3n}} p ( \widehat{\vtheta} ) \exp \left \{ - \tfrac{1}{2n} t^\top \sum_{i=1}^n \phi_i^{(2)} (\widehat{\vtheta}) t \right \} \; \mathrm{d}t
  \end{align*}
  The second integral in the upper bound is
  \begin{align*}
     & \int_{\sA_{3n}} p ( \widehat{\vtheta} ) \exp \left \{ - \tfrac{1}{2n} t^\top \sum_{i=1}^n \phi_i^{(2)} (\widehat{\vtheta}) t \right \} \; \mathrm{d}t                                                                                                                                                                                  \\
     & \le \left \{ \sup_{\vtheta}  p ( \vtheta ) \right \} \int_{\sA_{3n}}  \exp \left \{ - \tfrac{1}{2n} t^\top \sum_{i=1}^n \phi_i^{(2)} (\widehat{\vtheta}) t \right \} \; \mathrm{d}t                                                                                                                                                    \\
     & \le \left \{ \sup_{\vtheta}  \pi ( \vtheta ) \right \} (2 \pi)^{p/2} \det \left (  \left \{ n^{-1} \sum_{i=1}^n \phi_i^{(2)} (\widehat{\vtheta}) \right \}^{-1/2} \right ) \bP_{t \sim \mathcal{N} (\vzero, \left \{  \tfrac{1}{n} \sum_{i=1}^n \phi_i^{(2)} (\widehat{\vtheta}) \right \}^{-1} )}(  \lVert t \rVert^2 > n \delta^2  ) \\
     & \le \left \{ \sup_{\vtheta}  \pi ( \vtheta ) \right \} (2 \pi)^{p/2} \det \left (  \left \{ n^{-1} \sum_{i=1}^n \phi_i^{(2)} (\widehat{\vtheta}) \right \}^{-1/2} \right ) \frac{\tr( \{ \tfrac{1}{n} \sum_{i=1}^n \phi_i^{(2)} (\widehat{\vtheta}) \}^{-1} )}{ n \delta^2 } = O_p(n^{-1}),
  \end{align*}
  where the last inequality corresponds to Markov's inequality and noting that $\det \left (  \left \{ n^{-1} \sum_{i=1}^n \phi_i^{(2)} (\widehat{\vtheta}) \right \}^{-1/2} \right ) \xrightarrow{p} \det \left (  \left \{ \mV_{\vtheta^\star} \right \}^{-1/2} \right )$ and  $\tr( \{ \tfrac{1}{n} \sum_{i=1}^n \phi_i^{(2)} (\widehat{\vtheta}) \}^{-1} ) \xrightarrow{p} \tr( \mV_{\vtheta^\star}^{-1} )$. To analyse the first integral, we first note that
  \begin{align*}
    \log \EL (\vtheta) - \log \EL (\widehat{\vtheta}) - \log p ( \widehat{\vtheta} )
     & \le  \log \EL (\vtheta) - \log \EL (\vtheta^\star) - \log p( \vtheta^\star )                                                                      \\
     & \le \sup_{ \lVert \vtheta - \vtheta^\star \rVert > \delta } \left \{ \EL (\vtheta) - \log \EL (\vtheta^\star) \right \} - \log p( \vtheta^\star ) \\
     & \le -n v_{\EL} - \log p( \vtheta^\star ) + o_p(1),
  \end{align*}
  where the first inequality follows from the definition of
  $\widehat{\vtheta}$ as the posterior mode. The third inequality follows
  from~\ref{app:item:el-bounded}. Hence, the first integral in the upper bound is
  \begin{align*}
    \int_{\sA_{3n}} p( \widehat{\sJ}_n + n^{-1/2} t) \exp \{  \sL(t) \} \; \mathrm{d}t & \le \left \{ \sup_{\vtheta}  p ( \vtheta ) \right \}  \int_{\sA_{3n}} \exp \{  \sL(t) \} \; \mathrm{d}t                                                           \\
                                                                                       & \le \left \{ \sup_{\vtheta}  p( \vtheta ) \right \} \int_{\sA_{3n}} \exp \{  -n v_{\EL} - \log p( \vtheta^\star ) + \log p( \widehat{\vtheta} ) \} \; \mathrm{d}t \\
                                                                                       & \le O(e^{-n v_{\EL}} ),
  \end{align*}
  where the second inequality follows from~\ref{app:item:theta-bound}
  and~\ref{app:item:prior-smoothness}. To examine the convergence of the integral
  in $A_{2n}$, we have
  \begin{align*}
     & \int_{\sA_{2n}} \left \lvert p( \widehat{\sJ}_n + n^{-1/2} t) \exp \{  \sL(t) \} - p ( \widehat{\vtheta} ) \exp \left \{ - \tfrac{1}{2n} t^\top \sum_{i=1}^n \phi_i^{(2)} (\widehat{\vtheta}) t \right \} \right \rvert \; \mathrm{d}t         \\
     & \le \int_{\sA_{2n}} p( \widehat{\sJ}_n + n^{-1/2} t) \exp \{  \sL(t) \} \; \mathrm{d}t + \int_{\sA_{2n}} p ( \widehat{\vtheta} ) \exp \left \{ - \tfrac{1}{2n} t^\top \sum_{i=1}^n \phi_i^{(2)} (\widehat{\vtheta}) t \right \} \; \mathrm{d}t
  \end{align*}
  For the second term in the upper bound:
  \begin{align*}
    \int_{\sA_{2n}} p ( \widehat{\vtheta} ) \exp \left \{ - \tfrac{1}{2n} t^\top \sum_{i=1}^n \phi_i^{(2)} (\widehat{\vtheta}) t \right \} \; \mathrm{d}t
     & \le p(\widehat{\vtheta}) \int_{\sA_{2n}}  \exp \left \{ - \frac{ \rho_{\min,n} }{2} \lVert t \rVert^2 \right \}                                    \\
     & \le p(\widehat{\vtheta}) \exp \left \{ - \tfrac{ 1 }{2} \rho_{\min,n} M^2 \right \} \int_{\sA_{2n}} \; \mathrm{d}t                                 \\
     & = \frac{\pi^{p/2}}{ \Gamma(p/2 + 1) } p(\widehat{\vtheta}) \exp \left \{ - \tfrac{ 1 }{2} \rho_{\min,n} M^2 \right \} \{ \delta^p n^{p/2} - M^p \} \\
     & \le C^\prime p(\widehat{\vtheta}) \exp \left \{ - \tfrac{ 1 }{2} \rho_{\min,n} M^2  + \tfrac{p}{2} \log(n) \right  \}
  \end{align*}
  By choosing
  $M = \sqrt{(p \{ \min\eigen(\mV_{\vtheta^\star} ) \}^{-1} + 1) \log(n)}$ and
  noting that
  $\rho_{\min,n} \xrightarrow{p} \min\eigen(\mV_{\vtheta^\star} ) > 0$, we
  have
  $\int_{\sA_{2n}} p ( \widehat{\vtheta} ) \exp \left \{ - \tfrac{1}{2n} t^\top \sum_{i=1}^n \phi_i^{(2)} (\widehat{\vtheta}) t \right \} \; \mathrm{d}t = O_p(n^{-1})$.
  For the first term in the upper bound:
  \begin{align*}
    \sL(t) \le - \frac{1}{2n} t^\top  \sum_{i=1}^n \phi_i^{(2)} (\widehat{\vtheta}) t + \lvert \widehat{R}_n (t+ n^{1/2} \vc_n) \rvert
  \end{align*}
  Following \citet{davidson94stochastic}, we have for a sufficiently small
  $\delta > 0$ and large $M$, there exists an $\epsilon > 0$ such that:
  $$\liminf_{n \rightarrow \infty} \bP^\star \left\{ \sup_{M \le \lVert t \rVert \le \delta \sqrt{n}} \frac{\widehat{R}_n(t + n^{1/2} \vc_n)}{\lVert t + n^{1/2} \vc_n \rVert^2} \le \tfrac{1}{4} \rho_{\min,n} \right\} \ge 1 - \epsilon,$$
  and hence
  \begin{align*}
    \lvert \widehat{R}_n (t+ n^{1/2} \vc_n) \rvert & \le \tfrac{\rho_{\min,n}}{4} \lVert t + n^{1/2} \vc_n \rVert^2 + o_p(1)
  \end{align*}
  Consequently, by writing $\mA_n = \tfrac{1}{n} \sum_{i=1}^n \phi_i^{(2)} (\widehat{\vtheta}) - \frac{\rho_{\min,n}}{2}
    \mI$ and $\ell_n = \{ M^2 - \lVert  \tfrac{1}{2} \sqrt{n} \rho_{\min,n} \mA_n^{-1} \vc_n \rVert^2 \}/ \lVert \mA^{-1/2} \rVert^2$, we have
  \begin{align*}
     & \int_{\sA_{2n}} p( \widehat{\sJ}_n + n^{-1/2} t) \exp \{  \sL(t) \} \; \mathrm{d}t                                                                                                                                                                               \\
     & \le \sup_{\vtheta \in \mTheta} p (\vtheta) \int_{\sA_{2n}}  \exp \left \{  - \tfrac{1}{2n} t^\top \sum_{i=1}^n \phi_i^{(2)} (\widehat{\vtheta}) t +  \lvert \widehat{R}_n (t+ n^{1/2} \vc_n) \rvert  \right \} \; \mathrm{d}t                                    \\
     & \le \sup_{\vtheta \in \mTheta} p (\vtheta) \int_{\sA_{2n}}  \exp \left \{  - \tfrac{1}{2n} t^\top \sum_{i=1}^n \phi_i^{(2)} (\widehat{\vtheta}) t +  \tfrac{\rho_{\min,n}}{4} \lVert t + n^{1/2} \vc_n \rVert^2 \right \} \; \mathrm{d}t                         \\
     & \le \sup_{\vtheta \in \mTheta} p (\vtheta) \int_{ \lVert t \rVert \ge M}  \exp \left \{  - \tfrac{1}{2n} t^\top \sum_{i=1}^n \phi_i^{(2)} (\widehat{\vtheta}) t +  \tfrac{\rho_{\min,n}}{4} \lVert t + n^{1/2} \vc_n \rVert^2 \right \} \; \mathrm{d}t           \\
     & \le \sup_{\vtheta \in \mTheta} p (\vtheta) (2 \pi)^{p/2} \det ( \mA_n^{-1/2} ) \exp \left \{ \frac{n \rho_{\min,n} \lVert \vc_n \rVert^2}{4} + \tfrac{1}{8} \rho_{\min,n}^2 n \vc_n^\top \mA_n \vc_n  \right \} \times \bP_{\chi_p^2} \{  \chi_p^2 \ge \ell_n \} \\
     & \le G_n \exp \left \{ - \frac{\ell_n}{2} + \frac{p}{2} (1 + \log(\ell_n/p))  \right \} = O_p(n^{-1}),
  \end{align*}
  where the first and second inequality follows from our bound on
  $\lvert \widehat{R}_n \rvert$, the third inequality follow by dropping upper
  limit of the integral range, the fourth inequality follows from triangle
  inequality on
  $\lVert \mA_n^{-1/2} \widetilde{t} + \tfrac{1}{2} \mA_n^{-1} \rho_{\min,n} n^{1/2} \vc_n \rVert$
  where $\widetilde{t} \sim \mathcal{N}(\vzero_p, \mI_p)$, the fifth inequality
  follows from upper tail bounds of $\chi_p^2$ as presented in Theorem 1 of
  \citet{ghosh21exponential} and noting that $G_n = O_p(1)$, and the last
  equality follows by noting that $-\ell_n/2$ is the dominant term in the
  exponent of the previous line and that $\ell_n = O_p(\log(n))$.

  Next, we examine the convergence of the integral over $\sA_{1n}$. We have
  \begin{align*}
     & \int_{\sA_{1n}} \left \lvert  p( \widehat{\sJ}_n + n^{-1/2} t) \exp \{  \sL(t) \} - p ( \widehat{\vtheta} ) \exp \left \{ - \tfrac{1}{2n} t^\top \sum_{i=1}^n \phi_i^{(2)} (\widehat{\vtheta}) t  \right \} \right \rvert \; \mathrm{d}t \le \widetilde{M}_n \int_{\sA_{1n}} \; \mathrm{d}t,
  \end{align*}
  where
  \begin{equation*}
    \widetilde{M}_n = \sup_{\lVert t \rVert \le M} \left \lvert  p( \widehat{\sJ}_n + n^{-1/2} t) \exp \{  \sL(t) \} - p ( \widehat{\vtheta} ) \exp \left \{ - \tfrac{1}{2n} t^\top \sum_{i=1}^n \phi_i^{(2)} (\widehat{\vtheta}) t  \right \} \right \rvert
  \end{equation*}
  Hence, we need only to examine the convergence rate of $\widetilde{M}_n$. We have
  \begin{align*}
    \widetilde{M}_n & \le \sup_{\lVert t \rVert \le M} \left \lvert p(\widehat{J}_n + t n^{-1/2}) \exp \left \{ \sL (t) \right \} - p(\widehat{J}_n + t n^{-1/2}) \exp \left \{ -\tfrac{1}{2} t^\top \left \{ \tfrac{1}{n}  \sum_{i=1}^n \phi_i^{(2)} (\widehat{\vtheta}) \right \} t \right \} \right \rvert \\
                    & + \sup_{\lVert t \rVert \le M} \left \lvert \left \{ p(\widehat{J}_n + t n^{-1/2}) - p ( \widehat{\vtheta} ) \right \} \exp \left \{ -\tfrac{1}{2} t^\top \left \{ \tfrac{1}{n}  \sum_{i=1}^n \phi_i^{(2)} (\widehat{\vtheta}) \right \} t \right \} \right \rvert                      \\
                    & \le \sup_{\vtheta \in \mTheta} p (\vtheta) \sup_{\lVert t \rVert \le M} \exp \left \{ - \tfrac{1}{2n} t^\top \sum_{i=1}^n \phi_i^{(2)} (\widehat{\vtheta}) t \right \}  \sup_{\lVert t \rVert \le M} \left \lvert \exp \{ \widehat{R}_n (t + n^{1/2} \vc_n) \} - 1 \right \rvert        \\
                    & + \sup_{\lVert t \rVert \le M} \exp \left \{ - \tfrac{1}{2n} t^\top \sum_{i=1}^n \phi_i^{(2)} (\widehat{\vtheta}) t \right \} \sup_{\lVert t \rVert \le M}  \left \lvert p(\widehat{J}_n + t n^{-1/2}) - p ( \widehat{\vtheta} ) \right \rvert                                          \\
                    & \le e^1 \sup_{\vtheta \in \mTheta} \pi (\vtheta) \sup_{\lVert t \rVert \le M} \left \lvert \widehat{R}_n (t + n^{1/2} \vc_n) \right \rvert + \sup_{\lVert t \rVert \le M}  \left \lvert p(\widehat{J}_n + t n^{-1/2}) - p ( \widehat{\vtheta} ) \right \rvert = o_p(1)
  \end{align*}
  where the first inequality is an application of triangle inequality, the second inequality follows by noting that $\sup_{\lVert t \rVert \le M} p(\widehat{J}_n + t n^{-1/2}) \le \sup_{\vtheta} p(\vtheta)$, the third equality follows by noting that $e^{\lvert \widehat{R} \rvert } - 1 \le e^1  \lvert \widehat{R} \rvert$ for all $\lvert \widehat{R} \rvert \le 1$, $\left \lvert \widehat{R}_n (t + n^{1/2} \vc_n) \right \rvert \xrightarrow{p} 0$, and that $ \sum_{i=1}^n \phi_i^{(2)} (\widehat{\vtheta})$ is positive definite, the fourth equality follows from the stochastic equicontinuity properties of $\widehat{R}$ and $\sup_{\lVert t \rVert \le M}  \left \lvert p(\widehat{J}_n + t n^{-1/2}) - p ( \widehat{\vtheta} ) \right \rvert \le \lVert\sup_{\vtheta} p^{(1)} (\vtheta) \rVert (M n^{-1/2} + \lVert \vc_n \rVert ) = O_p \left ( n^{-1/2} \sqrt{ \log(n)} \right )$ and $\left \lvert \widehat{R}_n (t + n^{1/2} \vc_n) \right \rvert = o_p(1)$.
\end{proof}

\begin{lemma}
  \label{app:thm:posterior-mode-bvm}
  Assume conditions~\ref{app:item:theta-bound} to~\ref{app:item:D-full-rank} hold. Then,
  \begin{equation*}
    \dtv \left \{ N \left ( \psi^{(2)} (\widehat{\vtheta}) \widehat{\vtheta} , \psi^{(2)} (\widehat{\vtheta})  \right ), p_{\EL} (\vtheta \mid \sD_{n}) \right \} = o_p(1).
  \end{equation*}
\end{lemma}
\begin{proof}
  Define $t = \sqrt{n}(\vtheta - \widehat{\sJ}_n)$ and $p_t ( t  \mid \sD_{n} ) = p_{\EL} ( \vtheta \mid \sD_{n} ) \lvert \det \left ( \tfrac{\partial \vtheta}{\partial t} \right ) \rvert$. Then, by a change-of-variables argument, we can express the total variation distance as
  \begin{align*}
     & \dtv \left \{ N \left ( \psi^{(2)} (\widehat{\vtheta}) \widehat{\vtheta} , \psi^{(2)} (\widehat{\vtheta})  \right ), p_{\EL} (\vtheta \mid \sD_{n}) \right \}                                                                                                                                                                                                   \\
     & = \frac{1}{2} \int \left \lvert K_n^{-1} p ( \widehat{\sJ}_n + n^{-1/2} t)\exp \{ \sL(t)  \} -   \frac{\det \left ( \frac{1}{n} \psi^{(2)} (\widehat{\vtheta}) \right )^{\tfrac{1}{2}}}{  (2\pi)^{p/2} } \exp \left \{  -\tfrac{1}{2n} (t + \sqrt{n} \vc_n)  ^\top \psi^{(2)} (\widehat{\vtheta}) (t + \sqrt{n} \vc_n)  \right \} \right \rvert \; \mathrm{d} t \\
     & \le \frac{1}{2} \int \left \lvert K_n^{-1} p ( \widehat{\sJ}_n + n^{-1/2} t)\exp \{ \sL(t)  \} -   \frac{\det \left ( \frac{1}{n} \psi^{(2)} (\widehat{\vtheta}) \right )^{\tfrac{1}{2}}}{  (2\pi)^{p/2} } \exp \left \{  -\tfrac{1}{2n} t   ^\top \psi^{(2)} (\widehat{\vtheta}) t  \right \} \right \rvert \; \mathrm{d} t                                    \\
     & +\dtv \left \{ N \left ( \vzero , \frac{1}{n} \psi^{(2)} (\widehat{\vtheta})  \right ), N \left ( - n^{-1/2} \psi^{(2)} (\widehat{\vtheta}) \vc_n  , \frac{1}{n} \psi^{(2)} (\widehat{\vtheta})  \right )  \right \},
  \end{align*}
  where an application of Lemma~\ref{app:BvMNormalisingConstant} allows us to express the normalizing constant $K_n$ as
  \begin{align*}
    K_n & = \int p ( \widehat{\sJ}_n + n^{-1/2} t)\exp \{ \sL(t)  \}  \; \mathrm{d}t                                                                                                                                                                                                                                                                                                         \\
        & \le \int  p (\widehat{\vtheta}) \exp \left \{  -\tfrac{1}{2n} t^\top \sum_{i=1}^n \phi_i^{(2)} ( \widehat{\vtheta} ) t  \right \} \; \mathrm{d} t + \int \left \lvert p ( \widehat{\sJ}_n + n^{-1/2} t)\exp \{ \sL(t)  \} - p (\widehat{\vtheta}) \exp \left \{  -\tfrac{1}{2n} t^\top \sum_{i=1}^n \phi_i^{(2)} ( \widehat{\vtheta} ) t  \right \}  \right \rvert \; \mathrm{d} t \\
        & = p (\widehat{\vtheta}) \left (2 \pi \right )^{p/2} \det \left ( \mV_{\vtheta^\star}  \right )^{-1/2} + o_p(1),
  \end{align*}
  where the last equality follows from an easily deducible result: $\lVert \tfrac{1}{n} \sum_{i=1}^n \phi_i^{(2)} ( \widehat{\vtheta}) - \mV_{\vtheta^\star}  \rVert = O_p(n^{-1/2})$. Hence,
  \begin{align*}
     & \frac{1}{2} \int \left \lvert K_n^{-1} p ( \widehat{\sJ}_n + n^{-1/2} t)\exp \{ \sL(t)  \} - \frac{\det \left ( \frac{1}{n} \psi^{(2)} (\widehat{\vtheta}) \right )^{\tfrac{1}{2}}}{  (2\pi)^{p/2} } \exp \left \{  -\tfrac{1}{2n} t   ^\top \psi^{(2)} (\widehat{\vtheta}) t  \right \} \right \rvert \; \mathrm{d} t \\
     & \le \frac{1}{2K_n} \int \left \lvert p ( \widehat{\sJ}_n + n^{-1/2} t)\exp \{ \sL(t)  \} -   p(\widehat{\vtheta}) \exp \left \{  -\tfrac{1}{2n} t   ^\top \psi^{(2)} (\widehat{\vtheta}) t  \right \} \right \rvert \; \mathrm{d} t + o_p(1)                                                                           \\
     & = o_p(1),
  \end{align*}
  where the second inequality follows by factorising our $K_n^{-1}$ and noting
  that
  $\lVert \tfrac{1}{n} \psi^{(2)} (\widehat{\vtheta}) - \mV_{\vtheta^\star} \rVert = O_p(n^{-1/2})$
  and the third result follows from Lemma~\ref{app:BvMNormalisingConstant}. It
  remains for us to show that
  \begin{equation}
    \label{app:eqnRemainsDTV}
    \dtv \left \{ \mathcal{N} \left ( \vzero , \frac{1}{n} \psi^{(2)} (\widehat{\vtheta})  \right ),
    \mathcal{N} \left ( - n^{-1/2} \psi^{(2)} (\widehat{\vtheta}) \vc_n,
    \frac{1}{n} \psi^{(2)} (\widehat{\vtheta})  \right )  \right \}
    = O_p(n^{-1/2}).
  \end{equation}
  Indeed, we begin by computing the KL-divergence between two Gaussians:
  \begin{align*}
     & \KL \left \{ \mathcal{N} \left ( \vzero , \frac{1}{n} \psi^{(2)} (\widehat{\vtheta})  \right ),
    \mathcal{N} \left ( - n^{-1/2} \psi^{(2)} (\widehat{\vtheta}) \vc_n ,
    \frac{1}{n} \psi^{(2)} (\widehat{\vtheta})  \right )  \right \}                                    \\
     & = n \vc_n ^\top \left \{  \frac{\psi^{(2)} (\widehat{\vtheta}) }{n }  \right \} \vc_n
  \end{align*}
  Hence,
  \begin{equation*}
    \KL \left \{ \mathcal{N} \left ( \vzero , \frac{1}{n} \psi^{(2)} (\widehat{\vtheta})  \right ),
    \mathcal{N} \left ( - n^{-1/2} \psi^{(2)} (\widehat{\vtheta}) \vc_n  ,
    \frac{1}{n} \psi^{(2)} (\widehat{\vtheta})  \right )  \right \} = O_p(n^{-1}).
  \end{equation*}
  Hence, by the Pinkser's inequality, (\ref{app:eqnRemainsDTV}) holds.
\end{proof}

\begin{lemma}
  \label{app:ConvergenceELmoments}
  Assume~\ref{app:item:theta-bound} to~\ref{app:item:D-full-rank} holds. Let
  $g: \sZ \times \mTheta \rightarrow \bR^q$ denote a function such that
  $\bE \{ \lVert g(\vz_1, \vtheta^\star) \rVert^2 \} < \infty$,
  $\lVert \bE \{ g(\vz_1, \vtheta^\star) h( \vz_1, \vtheta^\star )^\top \} \rVert < \infty$,
  and
  $\bE \{ \lVert g(\vz_1, \vtheta^\star) \rVert \lVert h( \vz_1, \vtheta^\star ) \rVert^2 \} < \infty$.
  Then
  \begin{equation*}
    \left \lVert \sum_{i=1}^n w_i(\vtheta^\star) g(\vz_i, \vtheta^\star) \right \rVert = \left \lVert \frac{1}{n} \sum_{i=1}^n g(\vz_i, \vtheta^\star) \right \rVert + O_p(n^{-1/2}).
  \end{equation*}
\end{lemma}
\begin{proof}
  Let $\gamma_i = \vlambda(\vtheta^\star)^\top h(\vz_i, \vtheta^\star)$. Similar to \citet{owen90empirical} eqn(2.16) \footnote{The cited equation has a typo. The denominator of $\gamma_i^2$ should be $1 + \gamma_i$ and not $1 - \gamma_i$}, we may write
  \begin{align*}
    \sum_{i=1}^n w_i(\vtheta^\star) g(\vz_i, \vtheta^\star)
     & = \frac{1}{n} \sum_{i=1}^n g(\vz_i, \vtheta^\star) \{ 1 - \gamma_i + \gamma_i^2 / (1 + \gamma_i) \}                                                                                                                                        \\
     & = \frac{1}{n} \sum_{i=1}^n g(\vz_i, \vtheta^\star) - \frac{1}{n} \sum_{i=1}^n g(\vz_i, \vtheta^\star) h(\vz_i, \vtheta^\star)^\top \vlambda (\vtheta^\star) + \frac{1}{n} \sum_{i=1}^n g(\vz_i, \vtheta^\star) \gamma_i^2 / (1 + \gamma_i)
  \end{align*}
  and consequently
  \begin{align*}
     & \lVert \sum_{i=1}^n w_i(\vtheta^\star) g(\vz_i, \vtheta^\star) \rVert                                                                                                                                                                                                                                                                                                           \\
     & \le \lVert  \frac{1}{n} \sum_{i=1}^n g(\vz_i, \vtheta^\star) \rVert + \left \lVert \frac{1}{n} \sum_{i=1}^n  g(\vz_i, \vtheta^\star) h(\vz_i, \vtheta^\star) \right \rVert \lVert \vlambda (\vtheta^\star) \rVert + \frac{\lVert \vlambda (\vtheta^\star)  \rVert^2}{n} \sum_{i=1}^n \lVert h(\vz_i, \vtheta^\star) \rVert^2  \lVert g(\vz_i, \vtheta^\star) \rVert / \lvert 1 + \gamma_i \rvert
  \end{align*}
  From \citet{owen90empirical} eqn (2.17), $\lVert \vlambda( \vtheta^\star) \rVert = O_p(n^{-1/2})$. Also, since
  $\lVert \bE \{  g(\vz_1, \vtheta^\star)  h( \vz_1, \vtheta^\star )^\top  \}  \rVert  < \infty$,
  therefore $  \lVert  \frac{1}{n} \sum_{i=1}^n g(\vz_i, \vtheta^\star)  h(\vz_i, \vtheta^\star)^\top \rVert \lVert \vlambda (\vtheta^\star) \rVert = O_p(n^{-1/2})$. Moreover,
  \begin{equation*}
    \frac{1}{n} \sum_{i=1}^n \lVert h(\vz_i, \vtheta^\star) \rVert^2  \lVert g(\vz_i, \vtheta^\star) \rVert / \lvert 1 + \gamma_i \rvert \le \frac{1}{n \min_{1 \le i \le n} \lvert 1 + \gamma_i \rvert } \sum_{i=1}^n \lVert h(\vz_i, \vtheta^\star) \rVert^2  \lVert g(\vz_i, \vtheta^\star) \rVert
  \end{equation*}
  Since $\bE \{ \lVert g(\vz_1, \vtheta^\star) \rVert \lVert h( \vz_1, \vtheta^\star ) \rVert^2 \}  < \infty$, we have
  \begin{equation*}
    \frac{1}{n} \sum_{i=1}^n \lVert h(\vz_i, \vtheta^\star) \rVert^2  \lVert g(\vz_i, \vtheta^\star) \rVert  = O_p(1).
  \end{equation*}
  It remains for us to show that  $\min_{1 \le i \le n} \lvert 1 + \gamma_i \rvert$ is bounded away from $0$. Indeed,
  \begin{align*}
    \min_{1 \le i \le n} \lvert 1 + \gamma_i \rvert
     & = \sqrt{ \min_{1 \le i \le n} ( 1 + \gamma_i )^2 }                \\
     & = \sqrt{ \min_{1 \le i \le n} ( 1 + \gamma_i^2 + 2 \gamma_i ) }   \\
     & \ge \sqrt{ 1 +\min \gamma_i^2 - 2 \max \lvert \gamma_i \rvert  }.
  \end{align*}
  By similar arguments to \citet{owen90empirical} eqn (2.15), $\min  \lvert \gamma_i \lvert \le \max  \lvert \gamma_i  = o_p(1)$. Hence,
  \begin{equation*}
    \frac{\lVert \vlambda \rVert^2}{n} \sum_{i=1}^n \lVert h(\vz_i, \vtheta^\star) \rVert^2  \lVert g(\vz_i, \vtheta^\star) \rVert / \lvert 1 + \gamma_i \rvert = O_p(n^{-1})
  \end{equation*}
  and we have our required result.
\end{proof}

\section{Hyperparameters of the algorithms in the experiments}
\label{app:sec:hyperparams}

\paragraph{HMC.} We use the HMC implementation in \texttt{blackjax}
\citep{cabezas24blackjax}. This implementation uses the minimal-norm integrator
\citep{blanes14numerical}, which we found to be more stable than the standard
leapfrog integrator in \citet{kien24elhmc}. In particular, the latter is prone
to divergence in the example with generalized estimating equations. We set the
step size to 0.01 and the number of integration steps to 50. The mass matrix is
set to identity.

\paragraph{Metropolis–Hastings.} We use a Gaussian random-walk proposal. The
covariance of the proposal is set to an estimate from a preliminary run of 10000
samples. We also multiply this covariance estimate by a `shrinkage factor', as
suggested in \citet{yang12bayesian}. This shrinkage factor is 0.7 for all
examples, except the examples of generalized estimating equations (0.3) and
high-dimensional linear regression (0.5).

\paragraph{Expectation-propagation.} We factorize the posterior into
$\nsites = 4$ sites for the Kyphosis dataset and $\nsites = 6$ sites for the
rest. We include the prior as a separate site, and distribute the likelihood
equally among the remaining sites. The choice of $\nsites$ ensures that the
sites have equal numbers of data points. In computing the tilted distribution,
we use Laplace's approximation for the first 50 iterations before switching to
importance sampling with 5000 samples for the rest. We set the damping factor to
$\alpha = 0.1$.

\paragraph{Variational Bayes.} We use $\log(\ndata) / 2$ for the adjustment
factor of the adjusted empirical likelihood, as suggested in
\citet{chen08adjusted}. The gradient in stochastic variational Bayes is computed
with the pathwise gradient estimator (also known as the reparameterization
trick), drawing one sample for the gradient estimator each time. The evidence
lower bound is maximized with the Adam optimizer \citep{kingma15adam} with the
learning rate set to $10^{-3}$ and the remaining optimizer hyperparameters set
to the default values of the \texttt{optax} implementation
\citep{deepmind20deepmind}.

\section{Additional experiments}
\label{app:sec:additional-experiments}

This section reports additional experiments that complement the comparisons in
the main text.

\subsection{Linear regression}
We show the contrast in performance between large- and small-$p$ settings with
linear regression. The data were generated from
$y_{i} = \vx_{i}^{\top} \vtheta_{0} + \epsilon_{i}, i = 1, \ldots, 100$, where
$\vx_{i} = (1, x_{i1}, \ldots, x_{i;(\dparam-1)})^{\top}$ and
$\epsilon_{i} \sim \mathcal{N}(0, 1)$. The covariates were drawn from the
standard normal distribution, $x_{ij} \sim \mathcal{N}(0, 1)$ for all
$j = 1, \ldots, \dparam - 1$. We consider two examples: $\dparam=2$,
$\vtheta_{0} = (0.5, 1)^{\top}$ and $\dparam=10$,
$\vtheta_{0} = (0.5, 1, 0.5, -1, 0.5, 0, \ldots, 0)^{\top}$. The vector
of regression coefficients, $\vtheta$, is our parameter of interest. In Bayesian
empirical likelihood, we avoid specifying the distribution of $\epsilon$ as in
the usual parametric approach. Rather, it is common to assume the data satisfy a
weaker orthogonality constraint,
$\bE_{(y, \vx) \sim F_0}[\vx(y - \vx^{\top} \vtheta)] = \vzero_{\dparam}$. The
constraint function is therefore
$\cons(\vz, \vtheta) = \vx(y - \vx^{\top} \vtheta)$, where $\vz = (y, \vx)$.

The results are presented in Figure~\ref{app:fig:nbp-median-additional-experiments}
and Tables~\ref{app:tab:nbp-threshold-additional}
and~\ref{app:tab:nbp-threshold-additional-skew}. In the $\dparam=2$ example
(Figure~\ref{app:fig:nbp-median-regression}), we observe that EPEL, the Laplace
approximation, and variational Bayes are all similar to the gold standard and
are consistent with the asymptotic Gaussian behaviour in
Theorem~\ref{app:thm:ep-bvm} and \citet{yu24variational}. EPEL and variational Bayes
attain the threshold immediately, while HMC and Metropolis--Hastings attain it
after 33.4 and 71.0 seconds, respectively. The same cannot be said in the larger
$\dparam$ setting, where the ratio $\ndata / \dparam$ is much smaller.
For variational Bayes, the added empirical-likelihood adjustment leads to
a visibly poorer approximation to the gold-standard posterior.
Similarly, the Laplace approximation is not close to the
posterior. However, EPEL can still attain sufficient accuracy in this setup,
reaching the threshold after 83.6 seconds. Sampling-based HMC and
Metropolis--Hastings can generally produce a good approximation to the posterior
when given enough compute budget, attaining the threshold after 153.3 and 376.4
seconds, respectively.

\begin{table}[t]
  \centering
  \begin{tabular}{lrrrr}
    \toprule
    Setup & EPEL & HMC & MH & VB \\
    \midrule
    Linear regression, $\dparam=2$ & 0.0 & 33.4 & 71.0 & 0.0 \\
    Linear regression, $\dparam=10$ & 83.6 & 153.3 & 376.4 & -- \\
    Logistic regression (\texttt{kyphosis}) & 14.6 & 43.8 & 87.6 & 7.3 \\
    \bottomrule
  \end{tabular}
  \caption{Time (in seconds) to reach sufficient approximation quality,
    defined as the first recorded time at which the 50 replicate NBP statistics
    are significantly above 474 by a one-sided Wilcoxon signed-rank test at the
    5\% level. Missing entries indicate that the method did not attain this
    threshold within the recorded time grid.}
  \label{app:tab:nbp-threshold-additional}
\end{table}

\begin{figure*}[t]
  \centering
  \begin{subfigure}{0.49\textwidth}
    \includegraphics[width=\linewidth]{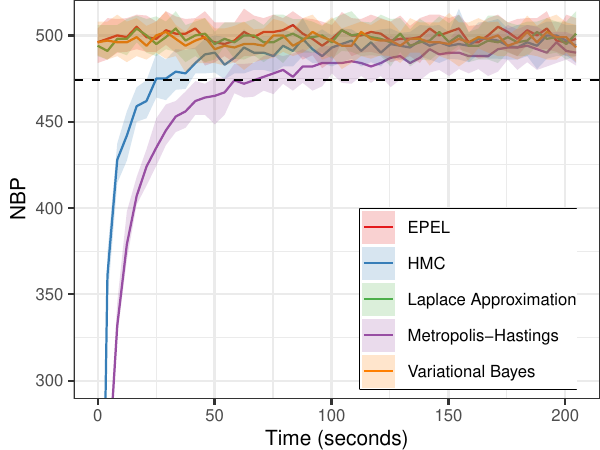}
    \caption{Linear regression, $\dparam = 2$}
    \label{app:fig:nbp-median-regression}
  \end{subfigure}%
  \begin{subfigure}{0.49\textwidth}
    \includegraphics[width=\linewidth]{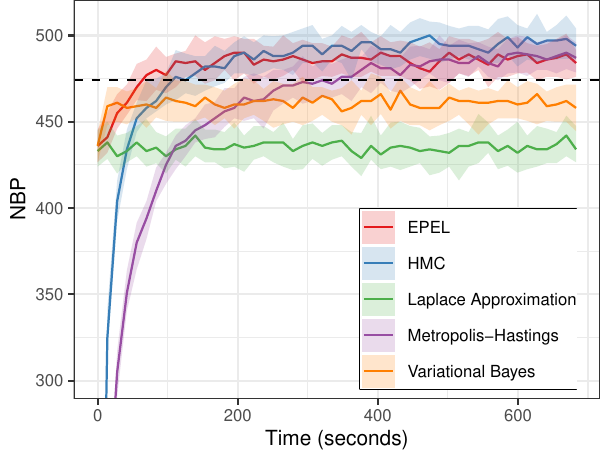}
    \caption{Linear regression, $\dparam = 10$}
    \label{app:fig:nbp-median-regression10}
  \end{subfigure}
  \begin{subfigure}{0.49\textwidth}
    \includegraphics[width=\linewidth]{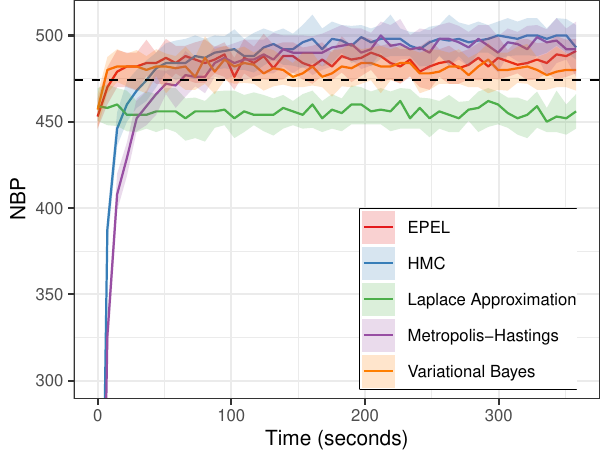}
    \caption{Logistic regression with the \texttt{kyphosis} data}
    \label{app:fig:nbp-median-kyphosis}
  \end{subfigure}%
  \begin{subfigure}{0.49\textwidth}
    \includegraphics[width=\linewidth]{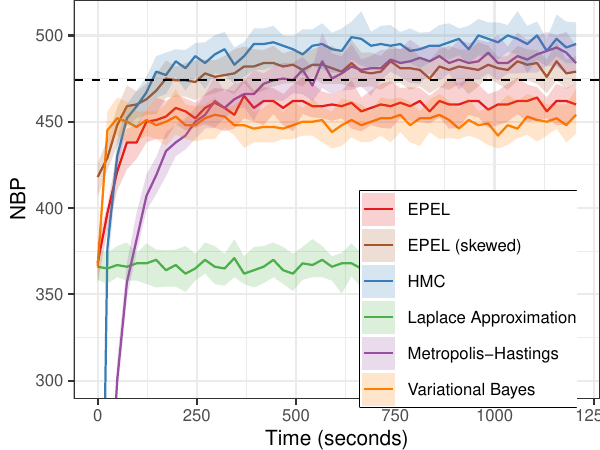}
    \caption{Logistic regression with the \texttt{breastfeed} data}
    \label{app:fig:nbp-median-breastfeed}
  \end{subfigure}%
  \caption{NBP statistics tracked over computation time for the additional
    experimental setups. Coloured curves show the median NBP with respect to the
    gold standard over 50 repetitions. Shaded bands denote the 0.25--0.75
    quantiles. The dotted horizontal line marks the accuracy threshold of 474.}
  \label{app:fig:nbp-median-additional-experiments}
\end{figure*}

\subsection{Logistic regression with the \texttt{kyphosis} data}
We demonstrate the methods on the \texttt{kyphosis} dataset
\citep{chambers92statistical} with a logistic regression model. The dataset
contains the outcomes of 81 children after corrective spinal surgery. The
response variable $y$ is binary (absence or presence of kyphosis after surgery)
and the covariates $\vx$ are the age of the child, the number of vertebrae
involved in the surgery, and the number of the topmost vertebra operated on (in
addition to an intercept). The dataset was obtained from the \texttt{rpart} R
package. The covariates were standardized before model fitting. We have an
orthogonality constraint
$\cons(y, \vx, \vtheta) = \vx(y - \expit(\vx^{\top} \vtheta))$ where $\vtheta$
is the coefficient vector.

The results of this real-world example are similar to those of the quantile
regression example. Variational Bayes and EPEL attain the NBP threshold after
7.3 and 14.6 seconds, respectively, compared with 43.8 seconds for HMC and
87.6 seconds for Metropolis--Hastings
(Table~\ref{app:tab:nbp-threshold-additional}). Both Metropolis--Hastings and HMC
eventually overtake the Gaussian-based approximations, but only after
substantially more computation. The Laplace approximation performed poorly in
this example.

\subsection{Logistic regression with the \texttt{breastfeed} data}
We show an additional example with a skewed posterior. The \texttt{breastfeed}
data \citep{heritier09robust} record breastfeeding decisions in a UK hospital
survey of expectant mothers and are available as the \texttt{breastfeed} dataset
in the \texttt{mpath} R package \citep{wang24mpath}. After removing records with
missing age or education (135 observations remain), we use a binary response
indicating whether the baby was breastfed. The included covariates are age,
education, pregnancy stage, current feeding intention, future feeding intention,
partner status, current smoking, smoking before birth, and ethnicity. For
logistic regression, we have an orthogonality constraint
$\cons(y, \vx, \vtheta) = \vx(y - \expit(\vx^{\top} \vtheta))$. All continuous
covariates are standardized.

In this example, the skew-corrected EPEL approximation attains the NBP
threshold at 319.8 seconds, HMC attains it at 172.2 seconds, and
Metropolis--Hastings attains it at 516.6 seconds
(Table~\ref{app:tab:nbp-threshold-additional-skew}). The uncorrected EPEL approximation
and variational Bayes do not attain the threshold within the recorded time grid.

\begin{table}[t]
  \centering
  \begin{tabular}{lrrrrr}
    \toprule
    Setup & EPEL & \shortstack[c]{EPEL\\(post-process)} & HMC & MH & VB \\
    \midrule
    Logistic regression (\texttt{breastfeed}) & -- & 319.8 & 172.2 & 516.6 & -- \\
    \bottomrule
  \end{tabular}
  \caption{Time (in seconds) to reach sufficient approximation quality, defined
    as the first recorded time at which the 50 replicate NBP statistics are
    significantly above 474 by a one-sided Wilcoxon signed-rank test at the 5\%
    level. The EPEL (post-process) column reports EPEL after the skewness
    correction. Missing entries indicate that the method did not attain this
    threshold within the recorded time grid.}
  \label{app:tab:nbp-threshold-additional-skew}
\end{table}

\end{document}